\let\csname equation*\endcsname\relax
\let\csname endequation*\endcsname\relax
\theoremstyle{plain}% default
\newtheorem{theorem}{Theorem}
\newtheorem{lem}[theorem]{Lemma}
\newtheorem{prop}[theorem]{Proposition}
\newtheorem{as}{Assumption}
\theoremstyle{definition}
\theoremstyle{remark}
\newtheorem*{remark}{Remark}
\newcommand{\norm}[1]{\left\lVert#1\right\rVert}
\newcommand{\rd}{\,\mathrm{d}}
\begin{document}

\title[On Iterative Algorithms for QPAT in the Radiative Transport Regime]{On Iterative Algorithms for Quantitative Photoacoustic Tomography in the Radiative Transport Regime}

\author{Chao Wang$^1$ \& Tie Zhou$^2$}

\address{$^1$$^2$ School of Mathematical Sciences, Peking University, China}
\ead{chaowyww@pku.edu.cn}
\ead{tzhou@math.pku.edu.cn}
\begin{abstract}
In this paper, we present the numerical reconstruction method for quantitative photoacoustic tomography (QPAT) based on the radiative transfer equation (RTE), which models light propagation more accurately than diffusion approximation (DA). We investigate the reconstruction of absorption coefficient and scattering coefficient of biological tissues. Given the scattering coefficient, an improved fixed-point iterative method is proposed to retrieve the absorption coefficient for its cheap computational cost. And the convergence of this method is also proved. Barzilai-Borwein (BB) method is applied to retrieve two coefficients simultaneously. Since the reconstruction of optical coefficients involves the solutions of original and adjoint RTEs in the framework of optimization, an efficient solver with high accuracy is developed from~\cite{Gao}. Simulation experiments illustrate that the improved fixed-point iterative method and the BB method are the competitive methods for QPAT in two cases.
\end{abstract}
%\tableofcontents
\section{Introduction}
\label{sec:1}
 Photoacoustic tomography (PAT) is a developing medical imaging technique using non-ionizing waves. It combines ultrasound imaging and optical imaging \cite{Xu2006}, so it can achieve high contrast and high resolution simultaneously. Photoacoustic imaging is based on photoacoustic effect. A short pulse of electromagnetic energy propagates through the tissue rapidly at speed of higher magnitude than sound after the tissue is illuminated. Then the electromagnetic energy is absorbed by the tissue and converted to heat. There quickly follows the temperature of tissue increasing which leads to  expansion spatially, and the expansion induces a pressure field. Acoustic waves generated by pressure distribution will be measured by detectors placed at the surface. They can be used to extract physical information about tissue inside. In medical imaging science, this technique is mostly applied to tomography imaging of skin, breast cancer and small animal imaging \cite{Xu2006}.

In the physical view, there are two inverse problems associated with two stages: the first stage of generating pressure after absorbing electromagnetic energy and second stage of initial pressure generating acoustic waves. First problem (aka PAT) is to reconstruct initial pressure distribution from temporal measurements of photoacoustic waves, and the second one, called quantitative photoacoustic tomography (QPAT), is to estimate optical coefficients from initial pressure distribution. Although initial pressure reflects some information inside, it is the product of the absorption of energy relied on properties of the object indirectly. Only the intrinsic optical coefficients provide direct information. Essentially, spatially varying optical coefficients lead to spatially varying initial pressure distribution. Moreover, reference \cite{Mamonov2014} mentions that initial pressure decreases quickly in the tissue area with small absorption coefficient, so that we can not distinguish visually in the finial image of initial pressure. In a word, the original information in the investigated object can not be obtained completely until the optical coefficients are recovered. Generally, estimated optical coefficients comprise of absorption coefficient, scattering coefficient and Gr\"{u}eneisen coefficient, where absorption coefficient is usually of the major clinic interest \cite{Gao2012}. Nowadays, algorithms on reconstruction of initial pressure have been developed maturely~\cite{Kuchment,Xu2006,Xu2002,Agranovsky2007,Lv2014}, so a lot of researchers are focusing on QPAT.

In terms of mathematical model of QPAT, light propagation inside the tissue can be modeled by radiative transfer equation (RTE)~\cite{Saratoon2013,Mamonov2014,Haltmeier2015} or diffusion approximation (DA)~\cite{Bal2012,Ding2015}. However, DA approximates to RTE only in the situation where light behaves diffusely, which is invalid for small object or in regions close to light source, so that the optical coefficients estimated from DA model are not accurate. The reconstruction theories and some algorithms based on DA model are presented in~\cite{Pulkkinen2014,Pulkkinen2016,Bal2010,Feng2016,Zhang2014,Beretta2015}. The two models are compared numerically in~\cite{Tarvainen2012} and the derivation from RTE model to DA model can be found in~\cite{Cox2012}.

In terms of measurement, there are mainly three kinds of methodologies: single illumination, multi-source illumination and multi-spectral illumination. It has been proved that single illumination can recover any one of absorption and Gr\"{u}eneisen coefficient stably given scattering coefficient, and even there is an explicit analytic formula for the non-scattering media \cite{Mamonov2014}. Multi-source can recover two of the three coefficients mentioned above uniquely and simultaneously~\cite{Mamonov2014,Haltmeier2015,Bal,Shao2011}. Furthermore, multi-spectral illumination can achieve unique reconstruction of three coefficients \cite{Bal2009}. We focus on the RTE model with multi-source illumination and aim to investigate more efficient reconstruction method in this setting.

In term of reconstruction algorithm, assuming the knowledge of scattering coefficient, fixed-point iteration is used to recover absorption coefficient in~\cite{Harrison2013}. However, the requirement of close enough initial value and the lack of theoretical results are obstacles of its application. When the scattering coefficient is unknown, QPAT can be formulated as a nonlinear optimization problem~\cite{Haltmeier2015,Gao2012} and furthermore can be simplified to a linear optimization problem by Born linearization~\cite{Mamonov2014}. Optimization approaches such as Jacobian-based method and gradient-based method are investigated in reference~\cite{Saratoon2013}, in which Limited-memory BFGS (LBFGS), as a state of art gradient-based method, has been applied to the problem~\cite{Gao2012}.

In the optimization framework, it is inevitable to solve original RTE in forward problem and adjoint RTE in the process of calculating gradient of objective function. Finite element method combined with streamline diffusion modification is applied to solve stationary RTE~\cite{Haltmeier2015,Saratoon2013,Yao2010}. In this way, a large sparse linear system needs to be solved, which is still difficult. Inspired by \cite{Gao}, we apply Discontinuous Galerkin (DG) method combined with multigrid method to solve the two RTEs.

\textbf{Our contributions} in this paper are in several folds.
\begin{enumerate}
 \item  We propose an improved fixed-point iterative algorithm and prove its convergence given scattering coefficient. Numerically, the improved fixed-point iteration converges faster than the optimization-based Barzilai-Borwein (BB) method.
\item We apply BB method to reconstruct absorption coefficient and scattering coefficient simultaneously. To the best of our knowledge, it is the first time to apply BB method to this problem.
\item We revise the numerical scheme for RTE proposed in \cite{Gao}, which fails to solve adjoint RTE. We revise it to make it suitable for the adjoint RTE and prove its convergence. Then the RTE can be reduced to a sparse block diagonal linear system finally.
\end{enumerate}

The rest of the paper is organized as follows. We introduce mathematical model of forward problem and inverse problem as well as notation conventions in Section \ref{sec:2}. Given scattering coefficient, an improved fixed-point iterative algorithm and relevant proofs for convergence are presented in Section \ref{sec:3.1}. The BB method for QPAT and the deduction of the gradient of objective function are detailed in Section \ref{sec:3.2}. And we discuss solvers for original RTE and adjoint RTE respectively in Section \ref{sec:4.1}. The simulation results are shown in Section \ref{sec:4.3}. The conclusion comes in Section \ref{sec:5}.

\section{The mathematical description}
\label{sec:2}

In the following, we consider the convex bounded object region $\Omega\in \mathcal{R}^{n} (n=2,3)$ with Lipschitz boundary $\partial \Omega$ and angular space $\mathcal{S}^{n-1}$. Describing boundary condition, we divide boundary $\partial \Omega$ into inflow boundary and outflow boundary
 \begin{equation}
\label{eq:1}
    \begin{aligned}
      \Gamma_-:=&\left\{(x,\theta)\in\partial \Omega\times \mathcal{S}^{n-1}:\nu(x)\cdot \theta<0\right\},\\
\Gamma_+:=&\left\{(x,\theta)\in\partial \Omega\times \mathcal{S}^{n-1}:\nu(x)\cdot \theta>0\right\},
    \end{aligned}
  \end{equation}
where $\nu(x)$ is the outward normal vector at the position $x$ of the boundary $\partial \Omega$.

\subsection{QPAT model}
\label{sec:2.1}

We use stationary RTE to describe the light propagation in the form of
\begin{equation}
  \label{eq:2}
  \left[\theta\cdot \nabla +\mu_a(x)+\mu_s(x)\right]\phi(x,\theta)-\mu_s(x)\oint_{\mathcal{S}^{n-1}}f(\theta,\theta')\phi(x,\theta')\rd\theta'=q(x,\theta),\quad (x,\theta)\in \Omega\times \mathcal{S}^{n-1},
\end{equation}
where $\theta$ and $x$ are the direction and the position of interest respectively, $\phi(x,\theta)$ is the density of energy at the position $x$ in the direction $\theta$, $\mu_a(x)$ and $\mu_s(x)$ are spatially varying absorption and scattering coefficients respectively, and $q(x,\theta)$ is the source term. Function $f(\theta,\theta')$ describing the probability of photon traveling from direction $\theta'$ to $\theta$ is called scattering phase function and usually is characterized by anisotropic factor $g$ of the form
\begin{equation}
  \label{eq:3}f(\theta,\theta')=\left\{
  \begin{aligned}
& \frac{1-g^2}{2\pi(1+g^2-2g\theta\cdot\theta')},\quad && n=2,\\
   & \frac{1-g^2}{4\pi(1+g^2-2g\theta\cdot\theta')^{3/2}},\quad  && n=3,
  \end{aligned}
\right.
\end{equation}
which is the well-known Henyey-Greenstein (H-G) scattering function. For the sake of simplicity, we define scattering operator $\bm{K}$ by
\begin{equation}
  \label{eq:03}
  (\bm{K}\phi)(x,\theta):=\oint_{\mathcal{S}^{n-1}}f(\theta,\theta')\phi(x,\theta')\rd \theta'.
\end{equation}
We assume inflow boundary condition
\begin{equation}
  \label{eq:4}
  \phi(x,\theta)=q_b(x,\theta), \quad(x,\theta)\in \Gamma_-,
\end{equation}
that is no photon getting in $\Omega$ except for boundary source $q_b(x,\theta)$. In QPAT, source term $q(x,\theta)$ is usually regarded as zero. Absorbed energy density is presented by
\begin{equation}
  \label{eq:5}
  h(x):=\mu_a(x)\Phi(x),
\end{equation}
where
\begin{equation}
  \label{eq:6}
  \Phi(x):=(\bm{A}\phi)(x)=\oint_{\mathcal{S}^{n-1}}\phi(x,\theta)\rd \theta,
\end{equation}
in which $\bm{A}$ is the accumulation operator over all directions.

Owing to absorbed energy, initial acoustic pressure distribution is generated of the form
\begin{equation*}
  p_0(x):=\gamma(x)h(x),
\end{equation*}
where $\gamma(x)$ is spatially varying Gr\"{u}eneisen coefficient. $\gamma(x)$ is the efficiency of conversion of absorbed energy to pressure and is assumed to be known. Throughout this work, it is rescaled to be one, i.e. $\gamma(x)=1$.

\subsection{PAT model}
\label{sec:2.2}
On account of initial pressure distribution, acoustic waves propagate through object region $\Omega$ and a series of temporal acoustic signals are detected by ultrasonic detectors at the surface $\partial \Omega$. The behavior is described by
\begin{equation}
  \label{eq:8}
  \left\{
\begin{aligned}
  &\frac{1}{c^2(x)}\frac{\partial^2}{\partial t^2}p(x,t)-\Delta p(x,t)=0, \quad &&(x,t)\in \mathcal{R}^n\times(0,T], \\
&p(x,0)=p_0(x),&& x\in \Omega,\\
&\frac{\partial p}{\partial t}(x,0)=0,&& x\in \Omega,
\end{aligned}
\right.
\end{equation}
where $c(x)$ is the acoustic speed, $p(x,t)$ is the acoustic pressure at the position $x$ and time $t$, and $p_0(x)$ is the initial pressure distribution.

The goal of PAT is to recover optical coefficients $\mu_a$ and $\mu_s$ given temporal data $p(x,t)\ ((x,t)\in \partial\Omega\times (0,T])$. This is often accomplished in two stages: first one recovers $p_0(x)$ from temporal data $p(x,t)\ ((x,t)\in \partial\Omega\times (0,T])$, and next one reconstructs $\mu_a$ and $\mu_s$ from initial pressure $p_0$ (or $h$) given illumination conditions. We assume the first stage has been done, and the second stage QPAT is our concern. We assume tissue is illuminated for $M$ times with known boundary sources $q_{b_m}\ (m=0,1,\dots,M-1)$, then inverse problem QPAT is to determine $\mu_a$ and $\mu_s$ from data $h_m$, that is $h(x;\mu_a,\mu_s,q_{b_m})\ (m=0,1,\dots,M-1)$.

\subsection{Notation conventions}
\label{sec:2.3}
We denote the absorption coefficient and scattering coefficient by $\mu_a^i$ and $\mu_s^i$ in the $i$th iteration respectively, and corresponding solution of RTE and heat with $m$th illumination of boundary source $q_{b_{m}}\ (m=0,1,\dots,M-1)$ is denoted by $\phi_m^i(x,\theta)$ and $h^i_m(x)$. The exact coefficients and data are denoted by $\mu_a^*$, $\mu_s^*$, $\phi^*_m$, and $h^*_m$.

\section{Reconstruction for optical coefficients}
\label{sec:3}
In this section, we propose an improved fixed-point iterative method to recover absorption coefficient given scattering coefficient and then apply Barzilai-Borwein (BB) method to simultaneously recover scattering coefficient as well. First, it is necessary to make several assumptions on the two optical coefficients.

\begin{as}
\label{as:1}
  Assume optical coefficients $\mu_a$, $\mu_s$ and energy density $\phi_m(x,\theta;\mu_a,\mu_s)$, which is the solution of RTE with $\mu_a$ and $\mu_s$, satisfy
  \begin{enumerate}
  \item $\mu_a(x)\in \mathcal{D}_a:=\{\mu_a\in C(\overline{\Omega}):0<\mu_a^0\leq\mu_a(x)\leq\mu_a^{\text{upper}}\}$ for fixed $\mu_a^{\text{upper}}$;
  \item$\mu_s(x)\in \mathcal{D}_s:=\{\mu_s\in C(\overline{\Omega}):0<\mu_s^0\leq \mu_s(x)\leq \mu_s^{\text{upper}}\}$ for fixed $\mu_s^{\text{upper}}$;
  \item Boundary condition $q_{b_m}(m=0,1,\dots,M-1)\in L^\infty(\Gamma_-,|\theta \cdot \nu|)$;
  \item Scattering kernel $f(\theta,\theta')\in L^1(\mathcal{S}^{n-1}\times \mathcal{S}^{n-1})$;
  \item If $\mu_a$ and $\mu_s$ satisfy (i) and (ii), there exist boundary sources $q_{b_{m}}(m=0,1,\dots,M-1)$ such that $\phi_m(x,\theta)\geq\epsilon>0$ for some positive $\epsilon$.
  \end{enumerate}

\end{as}

\subsection{The case of given scattering coefficient}
\label{sec:3.1}

Under the Assumption \ref{as:1}, we provide an effective fixed-point iterative method. Since one measurement can recover absorption coefficient, subscripts `$m$' of symbols mentioned in Section~\ref{sec:2.3} are omitted for brevity. The algorithm is detailed as follows.

\begin{algorithm}[H]
\caption{Improved fixed-point iteration\label{alg:1}}
\begin{algorithmic}[1]
 \REQUIRE Given initialization $\mu_a^0$ mentioned in Assumption~\ref{as:1}, data $h^*$, scattering coefficient $\mu_s$, boundary source $q_b$, and tolerance $\epsilon$.
\FOR {$i=0,1,\dots$}
\STATE Solve stationary RTE with absorption $\mu_a^i$ and scattering coefficients $\mu_s$ to obtain the solution $\phi^i$. Then let $h^i(x)=\mu_a^i(x)(\bm{A}\phi^i)(x)$;
\STATE If $\norm{h^i-h^*}_1<\epsilon$, break with $\mu_a=\mu_a^i$; or implement next step;
\STATE Calculate $\widetilde{\mu_a}^{i+1}(x)=\frac{h^*(x)}{(\bm{A}\phi^i)(x)}$;
\STATE Calculate $\mu_a^{i+1}(x)=\max\{\mu_a^i(x),\widetilde{\mu_a}^{i+1}(x)\}$.
\ENDFOR
\end{algorithmic}
\end{algorithm}

\begin{remark}
The Assumption \ref{as:1} is important because $\phi$ is a denominator in the Algorithm \ref{alg:1}. In practice, we can also use $\phi+\varepsilon$ instead with the small $\varepsilon$ to avoid the situation where $\phi=0$.
\end{remark}

Fixed-point iterative method has been mentioned in many references~\cite{Harrison2013,Cox2009,Gao2012,Cox2006}, which is usually unstable and sensitive to initial value. However, the modification of fifth step in Algorithm~\ref{alg:1} can guarantee convergence given scattering coefficient.

Carrying out Algorithm \ref{alg:1}, we can obtain function sequences $\{\widetilde{\mu_a}^i(x)\}_{i=0}^{\infty}$, $\{\mu_a^i(x)\}_{i=0}^{\infty}$, $\{h^i(x)\}_{i=0}^{\infty}$ and $\{\phi^i(x)\}_{i=0}^{\infty}$. We claim that the sequence $\norm{h^i-h^*}_1$ converges to zero. The proof is presented as follows.

Firstly, we state a theorem on stationary RTE.
\begin{theorem}[\cite{Case1963}]
\label{th:1}
  For RTE \eqref{eq:2} satisfying Assumption \ref{as:1}, if source $q\geq 0$ and boundary source $q_b\geq 0$, there exists a unique non-negative and continuous solution $\phi(x,\theta)$.
\end{theorem}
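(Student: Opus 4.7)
The plan is to recast the boundary value problem \eqref{eq:2}--\eqref{eq:4} as an integral equation via the method of characteristics and to apply Banach's contraction principle on $C(\overline{\Omega}\times\mathcal{S}^{n-1})$. For $(x,\theta)\in\Omega\times\mathcal{S}^{n-1}$, convexity of $\Omega$ together with its Lipschitz boundary lets me define the backward entry time $\tau_-(x,\theta)\geq 0$ as the distance from $x$ to $\partial\Omega$ along $s\mapsto x-s\theta$, and the entry point $y(x,\theta):=x-\tau_-(x,\theta)\theta\in\partial\Omega$. Multiplying \eqref{eq:2} by the integrating factor $E(x,\theta,s):=\exp\!\bigl(-\!\int_0^s(\mu_a+\mu_s)(x-r\theta)\rd r\bigr)$, integrating in $s$ from $0$ to $\tau_-$, and using \eqref{eq:4} produces the Duhamel representation
\begin{equation*}
\phi(x,\theta)=E(x,\theta,\tau_-)\,q_b(y(x,\theta),\theta)+\int_0^{\tau_-(x,\theta)}E(x,\theta,s)\bigl[\mu_s\bm{K}\phi+q\bigr](x-s\theta,\theta)\rd s=:(\mathcal{T}\phi)(x,\theta).
\end{equation*}

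For existence and uniqueness I would show that $\mathcal{T}$ is a strict contraction in $\|\cdot\|_\infty$. Since $f\geq 0$ and the H--G kernel in \eqref{eq:3} satisfies $\int_{\mathcal{S}^{n-1}}f(\theta,\theta')\rd\theta'=1$, we have $|\bm{K}\phi|\leq\|\phi\|_\infty$. The key algebraic trick is to rewrite $\mu_s=(\mu_a+\mu_s)-\mu_a$ inside the characteristic integral: the first piece integrates explicitly against $E$ to $1-E(x,\theta,\tau_-)<1$, while the second piece contributes the strictly negative term $-\mu_a^0\int_0^{\tau_-}E\rd s$. Combined with Assumption~\ref{as:1}(i)--(ii) and $\mathrm{diam}\,\Omega<\infty$, this gives an estimate of the form $|\mathcal{T}\phi_1-\mathcal{T}\phi_2|\leq c\|\phi_1-\phi_2\|_\infty$ with $c<1$ independent of $(x,\theta)$, and Banach's theorem then delivers a unique fixed point $\phi$.

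Non-negativity and continuity come from examining the Picard iterates $\phi^{(k+1)}=\mathcal{T}\phi^{(k)}$ initialized at $\phi^{(0)}\equiv 0$. Monotonicity of $\mathcal{T}$ on non-negative functions (since $q_b,q,f,E\geq 0$) gives $0=\phi^{(0)}\leq\phi^{(1)}\leq\cdots$, and each $\phi^{(k)}$ inherits continuity from the continuity of $\mu_a,\mu_s,q,q_b,f$ and of $\tau_-,y$ on the interior phase space together with dominated convergence in $\bm{K}$. Uniform convergence of the contraction iteration then transfers both properties to the limit $\phi$.

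The point where I expect the proof to concentrate is the behaviour near the glancing set $\{(x,\theta)\in\partial\Omega\times\mathcal{S}^{n-1}:\nu(x)\cdot\theta=0\}$, along which $\tau_-$ can fail to be continuous. There the Duhamel formula degenerates and continuity of $\phi$ up to $\overline{\Omega}$ requires either restricting to the non-tangential set or replacing continuity with a one-sided notion; this is the usual reason Case's result is stated with continuity understood on $\Omega\times\mathcal{S}^{n-1}$ together with the prescribed inflow trace on $\Gamma_-$. Handling this rigorously needs a partition of phase space into pieces bounded away from the glancing directions and a final limiting argument, which is the technical step where the work concentrates.
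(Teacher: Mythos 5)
The paper does not prove this statement at all: Theorem~\ref{th:1} is imported verbatim from Case--Zweifel \cite{Case1963}, and the text offers only a one-sentence physical heuristic (non-negative energy stays non-negative under transport and absorption). Your proposal therefore cannot match ``the paper's approach''; what it does is reconstruct the classical proof of the cited result, and it does so essentially correctly. The characteristics/Duhamel reformulation, the contraction estimate, and the monotone Picard iteration from $\phi^{(0)}\equiv 0$ for non-negativity are exactly the standard route. Two small remarks on the details. First, your contraction constant can be obtained more cleanly without invoking $\mathrm{diam}\,\Omega<\infty$: since $\mu_a\geq\mu_a^0>0$, one has $\mu_s/(\mu_a+\mu_s)\leq\mu_s^{\mathrm{upper}}/(\mu_a^0+\mu_s^{\mathrm{upper}})=:c<1$ pointwise, whence $\int_0^{\tau_-}E\,\mu_s\,\rd s\leq c\int_0^{\tau_-}E\,(\mu_a+\mu_s)\,\rd s=c\,(1-E(\tau_-))\leq c$; your version via boundedness of $\Omega$ and the upper bounds on the coefficients also works, but note it genuinely needs Assumption~\ref{as:1}(i)--(ii). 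Second, your bound $|\bm{K}\phi|\leq\|\phi\|_\infty$ uses $\oint f(\theta,\theta')\rd\theta'=1$, whereas the paper's normalization (used later in the proof of Theorem~\ref{the:5}) is $\oint f(\theta,\theta')\rd\theta=1$; for the H--G kernel these coincide by symmetry, but you should say so since Assumption~\ref{as:1}(iv) only puts $f$ in $L^1$. Finally, your flag about the glancing set is the right caveat: continuity of $\phi$ up to $\overline{\Omega}\times\mathcal{S}^{n-1}$ genuinely fails in general at $\nu\cdot\theta=0$ unless $q_b$ and the geometry are compatible there, so the theorem's ``continuous'' must be read on $\Omega\times\mathcal{S}^{n-1}$ away from grazing directions --- which is also how \cite{Case1963} states it. In short: correct, and strictly more than the paper provides.
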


In the view of physics, the intuitive explanation behind Theorem \ref{th:1} is straight-forward. For non-negative initial energy, the energy will not become negative while it was transferred and absorbed. Lemma~\ref{lem:2} reveals the monotonicity of $\phi$ with respect to $\mu_a$.

\begin{lem}
\label{lem:2}
  Let $\phi^1(x)$ and $\phi^2(x)$ be the solutions of RTEs
  \begin{equation}
    \label{eq:9}\left\{
    \begin{aligned}
      &\left[\theta\cdot \nabla +\mu_a(x)+\mu_s(x)\right]\phi(x,\theta)-\mu_s(x)(\bm{K}\phi)(x,\theta)=0,\ && (x,\theta)\in \Omega\times \mathcal{S}^{n-1},\\
&\phi(x,\theta)=q_b(x,\theta), &&(x,\theta)\in \Gamma_-,
    \end{aligned}
\right.
  \end{equation}
with $\mu_a$ being $\mu_a^1$ and $\mu_a^2$ respectively.
Then $\phi^1(x,\theta)\geq \phi^2(x,\theta)$ provided $\mu_a^1(x)\leq \mu_a^2(x)(\forall x\in \Omega)$. Note that the superscripts of $\mu_a^1$ and $\mu_a^2$ are used to distinguish the different absorption coefficients.
\end{lem}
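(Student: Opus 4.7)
The plan is to set $w(x,\theta) := \phi^1(x,\theta) - \phi^2(x,\theta)$ and show that $w$ itself satisfies a stationary RTE to which Theorem~\ref{th:1} applies, yielding $w \geq 0$. Subtracting the two equations in \eqref{eq:9} and regrouping so that the transport operator uses the coefficient $\mu_a^1$, I expect to obtain
\begin{equation*}
\bigl[\theta\cdot\nabla + \mu_a^1(x) + \mu_s(x)\bigr]w(x,\theta) - \mu_s(x)(\bm{K}w)(x,\theta) = \bigl(\mu_a^2(x) - \mu_a^1(x)\bigr)\phi^2(x,\theta),
\end{equation*}
together with the homogeneous inflow condition $w(x,\theta) = q_b(x,\theta) - q_b(x,\theta) = 0$ on $\Gamma_-$.

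Next I would verify that the right-hand side of this equation is non-negative. By hypothesis, $\mu_a^2 - \mu_a^1 \geq 0$ on $\Omega$. To control the sign of $\phi^2$, I apply Theorem~\ref{th:1} directly to the original RTE for $\phi^2$: since its source is zero and its boundary source $q_b$ is assumed non-negative (Assumption~\ref{as:1} together with item~(v) ensures this is the regime of interest), the unique solution $\phi^2$ is non-negative. Hence the source term $(\mu_a^2 - \mu_a^1)\phi^2$ is non-negative.

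With the equation for $w$ viewed as a stationary RTE with absorption $\mu_a^1$, scattering $\mu_s$, source $q := (\mu_a^2 - \mu_a^1)\phi^2 \geq 0$, and inflow data $0 \geq 0$, a second application of Theorem~\ref{th:1} gives $w(x,\theta) \geq 0$, i.e.\ $\phi^1 \geq \phi^2$, as required.

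The argument is essentially a comparison/maximum-principle step, so I do not anticipate a serious technical obstacle; the only place that requires care is the bookkeeping when subtracting the two equations, namely choosing to keep $\mu_a^1$ in the transport operator on the left so that the residual $(\mu_a^2-\mu_a^1)\phi^2$ appears as a non-negative forcing rather than as a sign-indefinite coefficient perturbation. One should also implicitly use that the boundary data $q_b$ can be assumed non-negative (otherwise Theorem~\ref{th:1} does not immediately give $\phi^2\ge 0$); this is consistent with the QPAT setting considered throughout the paper.
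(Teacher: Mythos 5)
Your proposal is correct and follows essentially the same route as the paper: form the difference, observe it solves an RTE with zero inflow data and a non-negative source, and invoke Theorem~\ref{th:1}. The only (immaterial) difference is bookkeeping — the paper keeps $\mu_a^2$ in the transport operator so the forcing is $(\mu_a^2-\mu_a^1)\phi^1$, while you keep $\mu_a^1$ and get $(\mu_a^2-\mu_a^1)\phi^2$; both work since each of $\phi^1,\phi^2$ is non-negative by Theorem~\ref{th:1}, a point you justify slightly more explicitly than the paper does.
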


\begin{proof}
 We see that $\phi:=\phi^1-\phi^2$ solves the RTE
\begin{equation}
  \label{eq:11}
\left\{
\begin{aligned}
  &\left[\theta\cdot \nabla+\mu_a^2+\mu_s\right]\phi(x,\theta)-\mu_s(\bm{K}\phi)(x,\theta)=(\mu_a^2-\mu_a^1)\phi^1(x,\theta),\ &&(x,\theta)\in \Omega\times \mathcal{S}^{n-1}\\
&\phi(x,\theta)=0,\ && (x,\theta)\in \Gamma_-
\end{aligned}
\right.
\end{equation}
with absorption coefficient $\mu_a^2$ and scattering coefficient $\mu_s$. Since source term $(\mu_a^2-\mu_a^1)\phi^1$ and boundary source are both non-negative, we can derive
\begin{equation}
  \label{eq:12}
  \phi^1\geq \phi^2
\end{equation}
using Theorem \ref{th:1}.
\end{proof}

Using Lemma~\ref{lem:2}, we investigate the properties of function sequences from Algorithm \ref{alg:1}.

\begin{lem}
\label{lem:3}
For any $x\in \Omega$, the sequence $\{\mu_a^i(x)\}_{i=0}^\infty$ obtained in the Algorithm \ref{alg:1} satisfies
  \begin{equation}
    \label{eq:13}
    \mu_a^0(x)\leq \mu_a^1(x)\leq \dots \leq \mu_a^i(x)\leq \dots \leq \mu_a^*(x).
  \end{equation}
\end{lem}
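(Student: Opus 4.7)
The plan is a straightforward induction on $i$. The monotonicity half of the chain is essentially free from the algorithm itself: by Step~5, $\mu_a^{i+1}(x)=\max\{\mu_a^i(x),\widetilde{\mu_a}^{i+1}(x)\}\geq \mu_a^i(x)$ pointwise, so I only need to establish the upper bound $\mu_a^i(x)\leq \mu_a^*(x)$ for every $i$ and every $x\in\Omega$.

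For the base case, the initialization chosen from Assumption~\ref{as:1} is the constant lower bound $\mu_a^0$ on the class $\mathcal{D}_a$, and since $\mu_a^*\in\mathcal{D}_a$ also satisfies this bound, we have $\mu_a^0\leq \mu_a^*(x)$ trivially. For the inductive step, assume $\mu_a^i(x)\leq \mu_a^*(x)$ for all $x\in\Omega$. Then I would apply Lemma~\ref{lem:2} to the two RTEs with the common scattering coefficient $\mu_s$ and absorption coefficients $\mu_a^i$ and $\mu_a^*$ respectively. This yields $\phi^i(x,\theta)\geq \phi^*(x,\theta)$ for all $(x,\theta)$, and integrating over $\mathcal{S}^{n-1}$ gives $(\bm{A}\phi^i)(x)\geq (\bm{A}\phi^*)(x)>0$, where strict positivity comes from Assumption~\ref{as:1}(v) together with Theorem~\ref{th:1}. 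Using this, the auxiliary update can be rewritten as
\begin{equation*}
\widetilde{\mu_a}^{i+1}(x)=\frac{h^*(x)}{(\bm{A}\phi^i)(x)}=\frac{\mu_a^*(x)(\bm{A}\phi^*)(x)}{(\bm{A}\phi^i)(x)}\leq \mu_a^*(x).
\end{equation*}
Combining this with the inductive hypothesis $\mu_a^i(x)\leq \mu_a^*(x)$ shows that the pointwise maximum in Step~5 is still bounded by $\mu_a^*(x)$, closing the induction.

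The only real subtlety, and what I expect to be the main point to nail down carefully, is the legitimate application of Lemma~\ref{lem:2}: one must check that both $\mu_a^i$ and $\mu_a^*$ remain in $\mathcal{D}_a$ (so that the RTEs are well posed with non-negative continuous solutions), and that the common boundary source $q_b$ and the (zero) interior source are non-negative, which are both guaranteed by Assumption~\ref{as:1}. Once this is in place, the pointwise inequality $\phi^i\geq \phi^*$ and the resulting estimate on $\widetilde{\mu_a}^{i+1}$ are immediate, and the lemma follows.
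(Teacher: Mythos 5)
Your proof is correct and follows essentially the same route as the paper's: induction on $i$, with Lemma~\ref{lem:2} giving $\phi^i\geq\phi^*$ from the inductive hypothesis and the identity $h^*=\mu_a^*(\bm{A}\phi^*)$ giving $\widetilde{\mu_a}^{i+1}\leq\mu_a^*$, so that the pointwise maximum in Step~5 stays below $\mu_a^*$. The additional checks you flag (positivity of $\bm{A}\phi^i$ and the hypotheses of Lemma~\ref{lem:2}) are sensible and are implicitly relied upon but not spelled out in the paper's own argument.
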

  \begin{proof}

Given $\mu_a^0(x):=\mu_a^0\leq \mu_a^*(x)$, let us assume
\begin{equation}
  \label{eq:14}
  \mu_a^0(x)\leq \mu_a^1(x)\leq \dots \leq \mu_a^i(x) \leq \mu_a^*(x).
\end{equation}
Then it suffices to establish \eqref{eq:14} for $(i+1)$.

Obviously, $\mu_a^i(x)\leq \mu_a^*(x)$ indicates $\phi^i(x)\geq \phi^*(x)$. Since $\widetilde{\mu_a}^{i+1}(x)=\frac{h^*(x)}{(\bm{A}\phi^i)(x)}$ and $\phi^i\geq \phi^*$, combining $h^*=\mu_a^*(\bm{A}\phi^*)$, we infer 
\begin{equation*}
  \widetilde{\mu_a}^{i+1}(x)\leq \mu_a^*(x).
\end{equation*}
According to Algorithm \ref{alg:1},
\begin{equation*}
  \mu_a^i(x)\leq\mu_a^{i+1}(x)\leq \mu_a^*(x).
\end{equation*}
An easy induction completes the proof.
\end{proof}

Monotonicity of $\mu_a^i$ can easily deduce the monotonicity of $\phi^i$. 
\begin{lem}
\label{lem:4}
For any $(x,\theta)\in \Omega\times \mathcal{S}^{n-1}$, the sequence $\{\phi^i(x,\theta)\}_{i=0}^\infty$ obtained in the Algorithm \ref{alg:1} satisfies
  \begin{equation}
    \label{eq:17}
\phi^0(x,\theta)\geq \phi^1(x,\theta)\geq \dots \geq \phi^i(x,\theta)\geq \dots \geq \phi^*(x,\theta).
  \end{equation}
\end{lem}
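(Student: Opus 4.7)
The plan is to deduce Lemma~\ref{lem:4} as an essentially immediate corollary of Lemma~\ref{lem:2} combined with the monotonicity result for $\{\mu_a^i\}$ established in Lemma~\ref{lem:3}. The crucial observation is that throughout Algorithm~\ref{alg:1} the scattering coefficient $\mu_s$ and the boundary source $q_b$ are fixed, so consecutive iterates $\phi^i$ and $\phi^{i+1}$ (as well as the exact energy density $\phi^*$) all solve the same homogeneous RTE \eqref{eq:9} with the same inflow data $q_b$, differing only in their absorption coefficient. This is exactly the hypothesis of Lemma~\ref{lem:2}.

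First, I would pick an arbitrary index $i \geq 0$ and invoke Lemma~\ref{lem:3} to obtain $\mu_a^i(x) \leq \mu_a^{i+1}(x)$ for every $x \in \Omega$. Since $\phi^i$ solves \eqref{eq:9} with absorption coefficient $\mu_a^i$ and $\phi^{i+1}$ solves the same equation with absorption coefficient $\mu_a^{i+1}$, Lemma~\ref{lem:2} (taking $\mu_a^1 := \mu_a^i$ and $\mu_a^2 := \mu_a^{i+1}$) yields
\begin{equation*}
\phi^i(x,\theta) \geq \phi^{i+1}(x,\theta), \qquad (x,\theta)\in\Omega\times\mathcal{S}^{n-1}.
\end{equation*}
Chaining these inequalities over $i=0,1,2,\dots$ gives the monotone decreasing tail of \eqref{eq:17}.

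Next, I would establish the lower bound $\phi^i \geq \phi^*$. By Lemma~\ref{lem:3} we have $\mu_a^i(x) \leq \mu_a^*(x)$ for all $x$, and $\phi^*$ solves \eqref{eq:9} with the true absorption coefficient $\mu_a^*$ and the same scattering coefficient and boundary source. Applying Lemma~\ref{lem:2} once more, this time with $\mu_a^1 := \mu_a^i$ and $\mu_a^2 := \mu_a^*$, gives $\phi^i(x,\theta) \geq \phi^*(x,\theta)$. Combining this with the descending chain proved above completes \eqref{eq:17}.

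I do not anticipate a genuine technical obstacle here, since the entire content is carried by Lemma~\ref{lem:2}; the only point worth stressing is that the comparison principle is being applied to the homogeneous RTE (i.e.\ the internal source $q$ is zero, as specified in Section~\ref{sec:2.1} for QPAT), so the hypotheses of Lemma~\ref{lem:2} are met verbatim. A one-line induction on $i$ then packages the two inequalities into the stated chain.
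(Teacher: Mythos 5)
Your proposal is correct and follows exactly the paper's route: the paper's entire proof of Lemma~\ref{lem:4} is the one-line remark that it follows from Lemma~\ref{lem:2} and Lemma~\ref{lem:3}, which is precisely the argument you spell out (pairwise comparison of consecutive iterates plus comparison with $\mu_a^*$). Your version simply makes explicit the instantiation of the comparison lemma and the role of the fixed $\mu_s$ and $q_b$, which the paper leaves implicit.
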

\begin{proof}
  It is obvious using Lemma \ref{lem:2} and Lemma \ref{lem:3}.
\end{proof}

We analyze the update process for $\mu_a$ by introducing new definitions. For any integer $i$, we divide region $\Omega$ into two parts in the form of
\begin{equation*}
  \left\{
    \begin{aligned}
      \Omega_+^i&:=\{x\in \Omega:h^*(x)\geq h^i(x)\},\\
\Omega_-^i&:=\{x\in \Omega:h^*(x)< h^i(x)\}.
    \end{aligned}
\right.
\end{equation*}
Then we have
\begin{equation*}
\begin{aligned}
  \widetilde{\mu_a}^{i+1}&=\frac{h^*}{\bm{A}\phi^i}\\
&=\left\{
  \begin{aligned}
    &\geq \mu_a^i,\ && x\in \Omega_+^i,\\
    &<\mu_a^i,\  && x\in \Omega_-^i.
  \end{aligned}
\right.
\end{aligned}
\end{equation*}
According Algorithm \ref{alg:1},
\begin{equation*}
  \mu_a^{i+1}(x)=\left\{
  \begin{aligned}
    &\widetilde{\mu_a}^{i+1}(x), \ &&x\in \Omega_+^i, \\
    & \mu_a^i, && x\in \Omega_-^i.
  \end{aligned}
\right.
\end{equation*}
Actually, it is a process of keeping and updating from $\mu_a^i$ to $\mu_a^{i+1}$, keeping the value of $\mu_a^i$ of $\Omega^i_-$, updating the value of $\Omega^i_+$.

Eventually, we prove the convergence of function sequence $\{h^i(x)\}_{i=0}^\infty$.

\begin{theorem}
  \label{the:5}
For any $x\in \Omega$, the sequence $\{h^i(x)\}_{i=0}^\infty$ obtained in the Algorithm \ref{alg:1} satisfies
\begin{equation}
  \label{eq:18}
  \lim_{i\rightarrow \infty}\|h^*-h^i\|_1=0.
\end{equation}

\end{theorem}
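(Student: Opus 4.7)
My plan is to exploit the monotone structure established in Lemmas~\ref{lem:3} and~\ref{lem:4} together with an energy conservation identity for the RTE to reduce the $L^1$ error to a telescoping quantity controlled by $\mu_a^{i+1}-\mu_a^i$.

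The first step I would take is to derive energy conservation. Integrating the RTE \eqref{eq:2} (with $q=0$) over $\Omega\times\mathcal{S}^{n-1}$ and applying the divergence theorem, the scattering terms cancel because the H-G kernel satisfies $\oint_{\mathcal{S}^{n-1}} f(\theta,\theta')\rd\theta = 1$ (by direct computation using that $f$ depends only on $\theta\cdot\theta'$), which yields
\begin{equation*}
\int_\Omega h^i(x)\rd x = \int_{\Gamma_-} q_b(x,\theta)\,|\nu\cdot\theta|\rd S\rd\theta - \int_{\Gamma_+} \phi^i(x,\theta)\,|\nu\cdot\theta|\rd S\rd\theta.
\end{equation*}
Since $\phi^i\geq \phi^*$ by Lemma~\ref{lem:4}, this gives $\int_\Omega h^i\rd x \leq \int_\Omega h^*\rd x$. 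Splitting the non-negative quantity $\int_\Omega(h^*-h^i)\rd x$ according to the partition $\Omega = \Omega_+^i \cup \Omega_-^i$ and using the defining inequalities of these sets gives $\int_{\Omega_-^i}(h^i-h^*)\rd x \leq \int_{\Omega_+^i}(h^*-h^i)\rd x$, hence
\begin{equation*}
\|h^i-h^*\|_1 \;\leq\; 2\int_{\Omega_+^i}(h^*(x)-h^i(x))\rd x.
\end{equation*}

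Next I would express this right-hand side in terms of the algorithmic increment. On $\Omega_+^i$ the algorithm sets $\mu_a^{i+1}(x) = h^*(x)/(\bm{A}\phi^i)(x)$, so a direct manipulation gives $h^*(x)-h^i(x) = (\mu_a^{i+1}(x)-\mu_a^i(x))(\bm{A}\phi^i)(x)$ on $\Omega_+^i$, whereas $\mu_a^{i+1}=\mu_a^i$ identically on $\Omega_-^i$ so the integrand vanishes there. Using that $\{\phi^i\}$ is monotone decreasing and $\phi^0\in L^\infty$ under Assumption~\ref{as:1}, there exists a constant $C$ with $\bm{A}\phi^i\leq C$ on $\Omega$, so
\begin{equation*}
\int_{\Omega_+^i}(h^*-h^i)\rd x \;\leq\; C\int_\Omega (\mu_a^{i+1}(x)-\mu_a^i(x))\rd x.
\end{equation*}

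Finally, since $\{\mu_a^i\}$ is non-decreasing and uniformly bounded above by $\mu_a^*\leq \mu_a^{\text{upper}}$ (Lemma~\ref{lem:3}), the telescoping series $\sum_{i\geq 0}\int_\Omega(\mu_a^{i+1}-\mu_a^i)\rd x$ converges to a finite limit $\leq |\Omega|\mu_a^{\text{upper}}$, forcing its $i$th term to tend to zero. Chaining the three estimates yields $\|h^i-h^*\|_1\to 0$. The step I expect to be the most delicate is the energy conservation identity, specifically verifying the normalization $\oint f(\theta,\theta')\rd\theta=1$, which is not part of Assumption~\ref{as:1}(iv) and must be invoked from the explicit form \eqref{eq:3}; the other ingredients (the $L^\infty$ bound on $\phi^0$ and the telescoping argument) are routine consequences of the assumptions and the monotonicity lemmas already in hand.
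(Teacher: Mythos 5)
Your proof is correct, and it reaches the conclusion by a genuinely leaner route than the paper's, even though both arguments rest on the same two ingredients: the energy identity obtained by integrating the RTE over $\Omega\times\mathcal{S}^{n-1}$ (using the normalization $\oint f(\theta,\theta')\rd\theta=1$ and Green's formula) to get $\|h^i\|_1\leq\|h^*\|_1$, and the pointwise identity $h^*-h^i=(\mu_a^{i+1}-\mu_a^i)\,\bm{A}\phi^i$ on the active set together with the uniform bound $\bm{A}\phi^i\leq \bm{A}\phi^0\leq C$ from Lemma~\ref{lem:4}. The paper deploys these in a more granular way: it first proves pointwise and $L^1$ convergence on $\Omega_+^0$ via the telescoping sum (its inequality \eqref{eq:19}), then partitions $\Omega_-^0$ into the points that eventually enter some $\Omega_+^i$ and those that never do ($\Omega_-^{01}$ and $\Omega_-^{02}$), handles the former by the same telescoping argument, and only invokes the energy inequality at the very end to kill the residual mass on $\Omega_-^{02}$. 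You instead put the energy inequality first, observe that $\|h^*-h^i\|_1\leq 2\int_{\Omega_+^i}(h^*-h^i)\rd x$, and note that this right-hand side is exactly $\int_\Omega(\mu_a^{i+1}-\mu_a^i)\bm{A}\phi^i\rd x$ because the increment vanishes identically on $\Omega_-^i$; a single global telescoping sum then finishes. What your version buys is the elimination of claims (i) and (ii) of the paper's proof and of the sets $\Omega_-^{01},\Omega_-^{02}$ altogether, and it also sidesteps a slightly delicate point in the paper's step (iv), where writing $\int_{\Omega_+^0\cup\Omega_-^{01}}(h^*-h^i)\rd x$ as an $L^1$ norm presumes $h^*\geq h^i$ on all of $\Omega_-^{01}$ for ``$i$ sufficiently large,'' a statement that is only pointwise in $x$ rather than uniform. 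Your one flagged concern, the kernel normalization, is not actually an extra burden: the paper's own proof invokes $\oint f(\theta,\theta')\rd\theta=1$ in exactly the same way, reading it off the explicit Henyey--Greenstein form \eqref{eq:3} rather than from Assumption~\ref{as:1}(iv).
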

\begin{proof}
We divide our proof into four claims.

(i): Function sequences $\{\mu_a^i(x)\}_{i=0}^\infty$, $\{\phi^i(x,\theta)\}_{i=0}^\infty$ and $\{h^i(x)\}_{i=0}^\infty$ converge pointwisely.

Since monotonicity and boundedness of $\{\mu_a^i(x)\}_{i=0}^\infty$ and $\{\phi^i(x,\theta)\}_{i=0}^\infty$ for any $(x,\theta)\in \Omega\times\mathcal{S}^{n-1}$, it follows that they converge pointwisely, whose limits are denoted by $\overline{\mu_a}(x)$ and $\overline{\phi}(x,\theta)$ respectively. Obviously, $\{h^i(x)\}_{i=0}^\infty$ converges pointwisely and its limit is denoted by $\overline{h}(x)$. And obviously, $\overline{h}(x)=\overline{\mu_a}(x)(\bm{A}\overline{\phi})(x)$.

(ii): For any $x\in \Omega$, if there is some $I$ such that if $x\in \Omega^I_+$, then $x\in \Omega^{i}_+(i>I)$.

It is obvious that $\mu_a^{I+1}(x)=\widetilde{\mu_a}^{I+1}(x)\geq \mu_a^I(x)$. According to Lemma \ref{lem:4}, we can obtain
\begin{equation*}
  h^{I+1}(x)=\mu_a^{I+1}(\bm{A}\phi^{I+1})(x)\leq \mu_a^{I+1}(x)(\bm{A}\phi^I)(x)=h^*(x).
\end{equation*}
And an easy induction shows that $x\in \Omega^{i}_+(i>I)$.

(iii): $\overline{h}(x)=h^*(x)\ (x\in \Omega_+^0)$, $\|h^*-h^i\|_{L^1(\Omega_+^0)}\rightarrow 0\ (i\rightarrow \infty)$.

Thanks to $x\in \Omega^i_+(i>0)$ for any $x\in \Omega^0_+$, it follows that
\begin{equation}
\label{eq:19}
\begin{aligned}
  \mu_a^{i+1}(x)-\mu_a^i(x)&=\widetilde{\mu_a}^{i+1}(x)-\mu_a^i(x)\\
&=\frac{h^*(x)-h^i(x)}{\bm{A}\phi^i(x)}\\
&\geq \frac{h^*(x)-h^i(x)}{2\pi\overline{\phi^0}},
\end{aligned}
\end{equation}
where $\overline{\phi^0}=\sup_{x\in \Omega}\phi^0(x)$. According to Lemma \ref{lem:4},
\begin{equation*}
  \phi^i(x)\leq \phi^0(x)\leq \overline{\phi^0}
\end{equation*}
gives last inequality in \eqref{eq:19}. Then
\begin{equation*}
  \overline{\mu_a}(x)-\mu_a^0(x)=\sum_{i=0}^\infty(\mu_a^{i+1}(x)-\mu_a^i(x))\geq\frac{1}{2\pi\overline{\phi^0}} \sum_{i=0}^\infty (h^*(x)-h^i(x)).
\end{equation*}
Owing to $ 0\leq\overline{\mu_a}-\mu_a^0<\infty$, we can obtain that
\begin{equation}
  \label{eq:20}
h^*(x)-h^i(x)\xrightarrow{i\rightarrow \infty} 0\ (\forall x \in \Omega^0_+),
\end{equation}
that is, $h^*(x)-h^i(x)$ converges to 0 pointwisely. So $\overline{h}(x)=h^*(x)\ (x\in \Omega_+^0)$.
Integrating \eqref{eq:19} with respect $x$ over $\Omega_+^0$ gives
\begin{equation*}
  \|\mu_a^{i+1}-\mu_a^i\|_{L^1(\Omega_+^0)}\geq \frac{1}{2\pi\overline{\phi^0}}\|h^*-h^i\|_{L^1(\Omega_+^0)}.
\end{equation*}
Since
\begin{equation*}
\begin{aligned}
  \|\overline{\mu_a}-\mu_a^0\|_{L^1(\Omega_+^0)}&=\int_{\Omega_+^0}(\overline{\mu_a}(x)-\mu_a^0(x))\rd x\\
&=\int_{\Omega_+^0}\sum_{i=0}^\infty(\mu_a^{i+1}(x)-\mu_a^i(x))\rd x\\
&=\sum_{i=0}^\infty\int_{\Omega_+^0}(\mu_a^{i+1}(x)-\mu_a^i(x))\rd x\\
&\geq\frac{1}{2\pi\overline{\phi^0}} \sum_{i=0}^\infty\int_{\Omega_+^0}(h^*(x)-h^i(x))\rd x\\
&=\frac{1}{2\pi\overline{\phi^0}} \sum_{i=0}^\infty\|h^*-h^i\|_{L^1(\Omega_+^0)}
\end{aligned}
\end{equation*}
and $ 0\leq\|\overline{\mu_a}-\mu_a^0\|_{L^1(\Omega_+^0)}<\infty$, directly,
\begin{equation}
  \label{eq:21}
  \|h^*-h^i\|_{L^1(\Omega_+^0)}\rightarrow 0\quad (i\rightarrow \infty).
\end{equation}
(iv): $\|h^*-h^i\|_{L^1(\Omega_-^{0})}\rightarrow 0\ (i\rightarrow \infty$).

We divide region $\Omega_-^0$ into two parts of form
\begin{equation*}
  \left\{
    \begin{aligned}
      \Omega_-^{01}:&=\{x\in \Omega_-^0:\text{Exist some }i \text{ such that } x\in \Omega_+^i\},\\
\Omega_-^{02}:&=\{x\in \Omega_-^0:\text{For any }i, x\in \Omega_-^i\}.
    \end{aligned}
\right.
.
\end{equation*}
According the proof of (iii) and the definition of $\Omega^i_-$, it can easily be seen that
\begin{equation}
\label{eq:22}
 \overline{h}(x)
  \left\{
    \begin{aligned}
    = h^*(x),\quad x\in \Omega_-^{01},\\
\geq h^*(x),\quad x\in \Omega_-^{02},
    \end{aligned}
\right.
\end{equation}
and
\begin{equation}
\label{eq:23}
  \|h^i-h^*\|_{L^1(\Omega_-^{01})}\rightarrow 0.
\end{equation}
From \eqref{eq:22}, it is clear that
\begin{equation}
\|\overline{h}\|_{L^1(\Omega_-^{02})}\geq\|h^*\|_{L^1(\Omega_-^{02})}.
  \label{eq:24}
\end{equation}
Considering RTEs with absorption coefficient $\mu_a^i$ and $\mu_a^*$, it is easy to obtain
\begin{equation}
  \label{eq:15}
\theta\cdot \nabla(\phi^i-\phi^*)+\mu_a^i\phi^i-\mu_a^*\phi^*=-(\mu_s\bm{I}-\mu_s\bm{K})(\phi^i-\phi^*).
\end{equation}
Since phase function $f(\theta, \theta')$ satisfies
\begin{equation*}
  \oint_{\mathcal{S}^{n-1}}f(\theta,\theta')\rd \theta=1,
\end{equation*}
it follows that
\begin{equation*}
\begin{aligned}
  \oint_{\mathcal{S}^{n-1}}\mu_s(x)(\bm{K}\phi)(x,\theta)\rd \theta&=\oint_{\mathcal{S}^{n-1}}\mu_s(x)\oint_{\mathcal{S}^{n-1}}f(\theta,\theta')\phi(x,\theta')\rd \theta' \rd \theta\\
&=\oint_{\mathcal{S}^{n-1}}\mu_s(x)\phi(x,\theta')\oint_{\mathcal{S}^{n-1}} f(\theta,\theta')\rd \theta'\\
&=\oint_{\mathcal{S}^{n-1}}\mu_s(x)\phi(x,\theta)\rd \theta.
\end{aligned}
\end{equation*}
Integrating \eqref{eq:15} with respect to $\theta$ gives
\begin{align}
  \label{eq:16}
h^i-h^*+\oint_{\mathcal{S}^{n-1}}\theta\cdot \nabla(\phi^i(x,\theta)-\phi^*(x,\theta))\rd \theta=0.
\end{align}
Integrating \eqref{eq:16} with respect to $x$ over $\Omega$ gives
\begin{equation}
  \label{eq:25}
\|h^i\|_1-\norm{h^*}_1+\int_{\Omega}\oint_{\mathcal{S}^{n-1}}\theta\cdot \nabla(\phi^i(x,\theta)-\phi^*(x,\theta))\rd \theta \rd x=0.
\end{equation}
Applying Green's formula to the second term of \eqref{eq:25}, we have
\begin{equation*}
\begin{aligned}
  &\int_{\Omega}\oint_{\mathcal{S}^{n-1}}\theta\cdot \nabla(\phi^i(x,\theta)-\phi^*(x,\theta))\rd \theta \rd x\\
=&\oint_{\mathcal{S}^{n-1}}\int_{\Omega}\theta\cdot \nabla(\phi^i(x,\theta)-\phi^*(x,\theta))\rd \theta \rd x\\
=&\oint_{\mathcal{S}^{n-1}}\int_{\Gamma}(\theta\cdot\nu)(\phi^i(x,\theta)-\phi^*(x,\theta)) \rd x \rd \theta\\
=&\norm{\phi^i(x,\theta)-\phi^*(x,\theta)}_{L^1(\Gamma_+,|\theta\cdot\nu|)}\\
\geq & 0.
\end{aligned}
\end{equation*}
Apparently, we have
\begin{equation*}
  \|h^i\|_1\leq \norm{h^*}_1.
\end{equation*}
Naturally, when $i$ is sufficiently large, we have
\begin{equation}
\label{eq:26}
\begin{aligned}
  \|h^*\|_1-\|h^i\|_1&=\int_{\Omega}(h^*(x)-h^i(x))\rd x\\
&=\int_{\Omega^0_+\cup \Omega_-^{01}}(h^*(x)-h^i(x))\rd x-\int_{\Omega_-^{02}}(h^i(x)-h^*(x))\rd x\\
&=\|h^*-h^i\|_{L^1(\Omega^0_+\cup \Omega_-^{01})}-\|h^*-h^i\|_{L^1(\Omega_-^{02})}.
\end{aligned}
\end{equation}
Hence, when $i$ tends to infinity, we have
\begin{equation*}
  0\leq \lim_{i\rightarrow \infty}\|h^*-h^i\|_{L^1(\Omega_+\cup \Omega_-^{01})}-\|h^*-h^i\|_{L^1(\Omega_-^{02})}=0-\lim_{i\rightarrow \infty}\|h^*-h^i\|_{L^1(\Omega_-^{02})}\leq 0.
\end{equation*}
Thus
\begin{equation*}
  \lim_{i\rightarrow \infty}\|h^*-h^i\|_{L^1(\Omega_-^{02})}=0.
\end{equation*}
This completes the proof of (iv).

Therefore,
\begin{equation*}
  \label{eq:27}
  \lim_{i\rightarrow \infty}\|h^*-h^i\|_1=0.
\end{equation*}

\end{proof}

Given scattering coefficient, by Algorithm \ref{alg:1}, we can get monotonically increasing sequence $\mu_a^i(x)$ and monotonically decreasing sequence $\phi^i(x,\theta)$, which converge pointwisely to $\bar{\mu_a}\leq \mu^*$ and $\bar{\phi}\geq \phi^*$ respectively. Sequence $h^i(x)$ converges to exact function $h^*(x)$ in the $L^1$-norm.

\subsection{Reconstruction of $\mu_a$ and $\mu_s$ simultaneously}
\label{sec:3.2}

In practice, the scattering coefficient is also unknown, so the improved fixed-point iteration method is not applicable for the two unknown coefficients case. It is necessary to establish a more general method to recover two coefficients simultaneously. From now on, we follow optimization approach to estimate $\mu_a$ and $\mu_s$. First, we define error functional
\begin{equation}
  \label{eq:28}
  \mathcal{F}(\mu_a,\mu_s):=\sum_{m=0}^{M-1}\frac{1}{2}\norm{\log (h_m(\ \cdot\ ;\mu_a,\mu_s))-\log(h^*_m)}_2^2
\end{equation}
to measure the distance between measurement data $h_m^*$ and estimated data $h_m$, where $h_m(x;\mu_a,\mu_s)$ equals $\mu_a(\bm{A}\phi_m)(x;\mu_a,\mu_s)$ for estimated $\mu_a$ and $\mu_s$. Then the reconstruction of QPAT can be reformulated as
\begin{equation}
  \label{eq:modeling problem}
  \min_{\mu_a,\mu_s}\mathcal{F}(\mu_a,\mu_s).
\end{equation}

\begin{remark}
We replace $\norm{h_m(\ \cdot\ ;\mu_a,\mu_s)-h_m^*}_2^2/2$ by $\norm{\log (h_m(\ \cdot\ ;\mu_a,\mu_s))-\log(h_m^*)}_2^2/2$ as data-fidelity term is due to that the latter boosts the convergence of minimization method according to \cite{Tarvainen2012}. We discuss its advantage and properties in Appendix C.
\end{remark}

\begin{remark}
  According to inverse problem theory, regularization is useful for handling ill-conditioned problem. Usually, in image science, ill-conditionedness results in the edges of image blur and noise amplifying. Even though QPAT is a typical nonlinear problem, as pointed out in~\cite{Bal2009,Mamonov2014}, it is relatively stable with the help of multiple measurements. We focus on the minimization of data-fidelity term for the reason that multiple measurements alleviate the ill-posedness numerically. The algorithms on the minimization of objective function with regularization term, such as some a priori information, can be easily derived by our framework. We do not explore this in this paper.
\end{remark}

Using the definition of Fr\'echet derivative $\nabla \mathcal{F}$ of $\mathcal{F}$ in feasible direction $(h_{\mu_a},h_{\mu_s})\in \mathcal{D}_a\times \mathcal{D}_s$, we can write
\begin{equation}
  \label{eq:29}
  \mathcal{F}'(\mu_a,\mu_s)(h_{\mu_a},h_{\mu_s}):=\left<\nabla\mathcal{F}, (h_{\mu_a},h_{\mu_s})\right>_{L^2(\Omega)}.
\end{equation}
The solution of RTE with coefficients $\mu_a$ and $\mu_s$ can be regarded as a function with respect to $\mu_a$ and $\mu_s$, that is, $\phi=\phi(x,\theta;\mu_a,\mu_s)$. Fortunately, the directional derivative of $\phi(x,\theta;\mu_a,\mu_s)$ with respect to $\mu_a$ and $\mu_s$ in any feasible direction exists according to \cite{Haltmeier2015}. Furthermore, the directional derivative of data-fidelity with respect to $\mu_a$ and $\mu_s$ can be expressed analytically by $\mu_a$, $\phi(x,\theta;\mu_a,\mu_s)$ and its solution of adjoint RTE. This concludes in Proposition~\ref{prop:1}.

Similar to the proof of Proposition 3.3 in \cite{Haltmeier2015}, we can express the gradient of objective functional as follows.
\begin{prop}
\label{prop:1}
  For any pairs $(\mu_a,\mu_s)\in \mathcal{D}_a\times\mathcal{D}_s$ and feasible direction $(h_{\mu_a},h_{\mu_s})\in \mathcal{D}_a\times\mathcal{D}_s$, we have
  \begin{equation}
    \label{eq:30}
\begin{aligned}
 \phantom{=} &\mathcal{F}'(\mu_a,\mu_s)(h_{\mu_a},h_{\mu_s})\\
&=\sum_{m=0}^{M-1} \left<\frac{\log(\mu_a \bm{A}\phi_m)-\log(h^*_m)}{\mu_a}-\bm{A}(\phi_m\phi_m^*),h_{\mu_a}\right>+\sum_{m=0}^{M-1} \left<-\phi_m \phi_m^*+(\bm{K}\phi_m)\phi_m^*,h_{\mu_s}\right>,
\end{aligned}
  \end{equation}
where $\phi_m^*$ solves following adjoint RTE
\begin{equation}
  \label{eq:31}
\left\{
\begin{aligned}
  (-\theta\cdot \nabla_x+(\mu_a+\mu_s-\mu_s\bm{K}))\phi_m^*&=\bm{A}^*\left(\frac{\log(\mu_a \bm{A}\phi_m)-\log(h^*_m)}{\bm{A}\phi_m}\right),\\
\phi_m^*|_{\Gamma_+}&=0.
\end{aligned}
\right.
\end{equation}
Notice that $\bm{A}^*$ is the adjoint operator of $\bm{A}$ and
\begin{equation*}
  (\bm{A}^*f)(x,\theta)=f(x),\quad \forall f\in L^2(\Omega).
\end{equation*}

\end{prop}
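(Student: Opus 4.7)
The plan is to compute $\mathcal{F}'$ by the chain rule and then convert the resulting linearized transport variable into $\phi_m^*$ via duality with the adjoint RTE \eqref{eq:31}. Since $h_m(\,\cdot\,;\mu_a,\mu_s)=\mu_a\,\bm{A}\phi_m$, we have $\log h_m-\log h_m^*=\log\mu_a+\log(\bm{A}\phi_m)-\log h_m^*$, so the chain rule gives
\begin{equation*}
\mathcal{F}'(\mu_a,\mu_s)(h_{\mu_a},h_{\mu_s})=\sum_{m=0}^{M-1}\left\langle \log(\mu_a\bm{A}\phi_m)-\log h_m^*,\ \frac{h_{\mu_a}}{\mu_a}+\frac{\bm{A}\phi_m'}{\bm{A}\phi_m}\right\rangle_{L^2(\Omega)},
\end{equation*}
where $\phi_m':=D\phi_m(\mu_a,\mu_s)(h_{\mu_a},h_{\mu_s})$ is the directional derivative of the RTE solution map, whose existence on $\mathcal{D}_a\times\mathcal{D}_s$ is established in \cite{Haltmeier2015}. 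Rewriting the $h_{\mu_a}/\mu_a$ piece as an $L^2(\Omega)$-pairing against $h_{\mu_a}$ already produces the first summand $\langle (\log(\mu_a\bm{A}\phi_m)-\log h_m^*)/\mu_a,\,h_{\mu_a}\rangle$ of \eqref{eq:30}.

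Next I would characterize $\phi_m'$ by differentiating the RTE \eqref{eq:2} (with $q\equiv 0$) and the inflow condition \eqref{eq:4} at $(\mu_a,\mu_s)$. Since the boundary source $q_{b_m}$ is independent of the coefficients, this yields the linearized transport problem
\begin{equation*}
(\theta\cdot\nabla+\mu_a+\mu_s-\mu_s\bm{K})\phi_m'=-h_{\mu_a}\phi_m-h_{\mu_s}(\phi_m-\bm{K}\phi_m),\qquad \phi_m'|_{\Gamma_-}=0.
\end{equation*}

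Finally, I would pair this equation in $L^2(\Omega\times\mathcal{S}^{n-1})$ with $\phi_m^*$ and integrate by parts in $x$. The H--G kernel \eqref{eq:3} is symmetric in $(\theta,\theta')$, so $\bm{K}$ is self-adjoint on $L^2(\mathcal{S}^{n-1})$; the multiplication operators are obviously self-adjoint; and transposing $\theta\cdot\nabla$ produces a boundary term $\int_{\partial\Omega\times\mathcal{S}^{n-1}}(\theta\cdot\nu)\phi_m'\phi_m^*\,\mathrm{d}x\,\mathrm{d}\theta$ which vanishes because $\phi_m'=0$ on $\Gamma_-$ and $\phi_m^*=0$ on $\Gamma_+$. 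Hence
\begin{equation*}
\langle \phi_m',\,(-\theta\cdot\nabla+\mu_a+\mu_s-\mu_s\bm{K})\phi_m^*\rangle=-\langle h_{\mu_a}\phi_m+h_{\mu_s}(\phi_m-\bm{K}\phi_m),\,\phi_m^*\rangle.
\end{equation*}
Substituting the source side of \eqref{eq:31} on the left and using the adjointness identity $\langle \bm{A}^*g,\phi_m'\rangle_{\Omega\times\mathcal{S}^{n-1}}=\langle g,\bm{A}\phi_m'\rangle_{\Omega}$ turns the left side into $\langle (\log(\mu_a\bm{A}\phi_m)-\log h_m^*)/\bm{A}\phi_m,\,\bm{A}\phi_m'\rangle$, precisely the $\bm{A}\phi_m'/\bm{A}\phi_m$ term from stage one. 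Collecting coefficients of $h_{\mu_a}$ and $h_{\mu_s}$, and using that these directions depend only on $x$ so that $\langle h(x)F(x,\theta),\phi_m^*\rangle_{\Omega\times\mathcal{S}^{n-1}}=\langle h,\bm{A}(F\phi_m^*)\rangle_{\Omega}$, reassembles \eqref{eq:30} exactly.

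The main obstacle is the integration-by-parts step: one must justify the trace identity with vanishing boundary contribution, which requires well-posedness and suitable trace regularity on $\Gamma_\pm$ for both the linearized forward RTE and the adjoint RTE \eqref{eq:31}, together with the symmetry of $\bm{K}$ in $L^2(\mathcal{S}^{n-1})$. Once these functional-analytic ingredients are in place, following the template of the proof of Proposition 3.3 in \cite{Haltmeier2015}, the remainder is a straightforward bookkeeping computation.
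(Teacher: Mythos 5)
Your proposal follows essentially the same route as the paper: chain rule on the log data-fidelity, characterization of the linearized solution $\phi_m'$, and the duality pairing with $\phi_m^*$ via integration by parts to eliminate $\phi_m'$. The paper compresses the last two steps into a single line by citing Proposition 3.3 of the Haltmeier--Neumann--Rabanser reference; you have simply written out explicitly (and correctly, including the signs and the vanishing boundary term from $\phi_m'|_{\Gamma_-}=0$, $\phi_m^*|_{\Gamma_+}=0$) the computation that the citation covers.
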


\begin{proof}
Following the proof in \cite{Haltmeier2015} (Proposition 3.3), we have
  \begin{equation}
    \label{eq:32}
\begin{aligned}
 \phantom{=} &\mathcal{F}'(\mu_a,\mu_s)(h_{\mu_a},h_{\mu_s})\\
&=\sum_{m=0}^{M-1}\left<\log (\mu_a \bm{A}\phi_m(\ \cdot\ ;\mu_a,\mu_s))-\log (h^*_m), \frac{h_{\mu_a} \bm{A}\phi_m(\ \cdot\ ;\mu_a,\mu_s)+\mu_a \bm{A}\phi_m'(\ \cdot\ ;\mu_a,\mu_s)(h_{\mu_a},h_{\mu_s})}{\mu_a \bm{A}\phi_m(\ \cdot\ ;\mu_a,\mu_s)}\right>\\
&=\sum_{m=0}^{M-1}\left<\frac{\log(\mu_a \bm{A}\phi_m)-\log(h^*_m)}{\mu_a},h_{\mu_a}\right>+\sum_{m=0}^{M-1} \left<\bm{A}^*\left(\frac{\log(\mu_a \bm{A}\phi_m)-\log(h^*_m)}{\bm{A}\phi_m}\right),\phi_m'(h_{\mu_a},h_{\mu_s})\right>\\
&=\sum_{m=0}^{M-1} \left<\frac{\log(\mu_a \bm{A}\phi_m)-\log(h^*_m)}{\mu_a}-\bm{A}(\phi_m\phi_m^*),h_{\mu_a}\right>+\sum_{m=0}^{M-1} \left<-\phi_m \phi_m^*+(\bm{K}\phi_m)\phi_m^*,h_{\mu_s}\right>.
\end{aligned}
  \end{equation}
Notice that $\phi_m'(\ \cdot\ ;\mu_a,\mu_s)(h_{\mu_a},h_{\mu_s})$ is the directional derivative of $\phi_m(\ \cdot\ ;\mu_a,\mu_s)$ with respect to $(\mu_a,\mu_s)$ in direction $(h_{\mu_a},h_{\mu_s})$.

 It completes the proof.
\end{proof}

After deducing the gradient of error functional~\eqref{eq:28}, all we need to do is to find appropriate stepsize in the negative gradient direction to decrease the functional. Many optimization methods can achieve this goal, including steepest descent, Quasi-Newton, and so on. However, these gradient-based methods usually involve the linesearch process. A linesearch step needs to solve original RTE or adjoint RTE for several times. Since solving RTE dominates the computational cost, linesearch is computationally intensive. To mitigate the heavy computational cost, the well-known BB gradient method is applied to compute stepsize. It is derived from a two-point approximation to the scant equation underlying Quasi-Newton methods~\cite{Barzilai1988,Optimization2002}. And it is R-superlinearly convergent in the two-dimensional quadratic case~\cite{Barzilai1988}. Without loss of generality, we denote $\mu_a$ or $\mu_s$ by $\mu$ , then the update formula at $k$th step is
\begin{equation}
  \label{eq:032}
  \mu^{k+1}=\mu^k-\alpha_k\nabla\mathcal{F}_k.
\end{equation}
Generally, there are two choices about the stepsize $\alpha_k$:
\begin{equation*}
  \alpha_{k1}=\frac{s_k^\top y_k}{\norm{y_k}^2},
\end{equation*}
and
\begin{equation*}
  \alpha_{k2}=\frac{\norm{s_k}^2}{s_k^\top y_k},
\end{equation*}
where $s_k=\mu^k-\mu^{k-1}$ and $y_k=\nabla\mathcal{F}_k-\nabla\mathcal{F}_{k-1}$, and $\mathcal{F}_k$ is the iterative sequence of error functional $\mathcal{F}$, see~\eqref{eq:28}. BB algorithm is detailed in Algorithm \ref{alg:2}.

\begin{algorithm}[h]
\caption{BB method reconstruction\label{alg:2}}
\begin{algorithmic}[1]
 \REQUIRE Given initialization $\mu_a^0$, $\mu_s^0$, data $h_m^*$, boundary source $q_{b_m}\ (m=0,1,\dots,M)$, $\epsilon_1,\epsilon_2,\epsilon_3>0$, maximum number of iterations $N$, $\text{flag}_a=1$ and $\text{flag}_s=1$.
\FOR {$i=0,1,\dots, N$}
\STATE If $\text{flag}_a=0$ and $\text{flag}_s=0$, end up with $\mu_a=\mu_a^i$ and $\mu_s=\mu_s^i$;
\STATE Solve stationary RTEs~\eqref{eq:2} with absorption and scattering coefficients $\mu_a^i$ and $\mu_s^i$ respectively to obtain the solution $\phi_m^i$. Then let $h_m^i(x)=\mu_a^i(x)(\bm{A}\phi_m^i)(x)$;
\STATE  If $\mathcal{F}_i<\epsilon_1$, end up with $\mu_a=\mu_a^i$ and $\mu_s=\mu_s^i$;
\STATE Solve adjoint RTEs \eqref{eq:31};
\STATE Calculate the gradient of $\mathcal{F}_i$ : $\nabla_{\mu_a}\mathcal{F}_i$ and $\nabla_{\mu_s}\mathcal{F}_i$ with respect to $\mu_a$ and $\mu_s$. If $\norm{\nabla_{\mu_a}\mathcal{F}_i}\leq \epsilon_2$ and/or $\norm{\nabla_{\mu_s}\mathcal{F}_i}\leq\epsilon_3$, let $\text{flag}_a=0$ and/or $\text{flag}_s=0$;
\STATE If $i=0$ or $1$, updating $\mu_a$ and/or $\mu_s$ in negative gradient direction with small step size such that $\mathcal{F}_i$ decreases; otherwise, if $\text{flag}_a=1$ and/or $\text{flag}_s=1$, update $\mu_a$ and/or $\mu_s$ by BB stepsize;
\ENDFOR
\end{algorithmic}
\end{algorithm}

\section{Numerical simulations}
\label{sec:4}
The reconstructions of absorption and scattering coefficients are investigated with simulations in two cases: the reconstruction of $\mu_a$ given $\mu_s$ and the reconstruction of $\mu_a$ and $\mu_s$ simultaneously.
\subsection{Solver for RTE}
\label{sec:4.1}

We consider the numerical solver for RTE in 2-D. And it can be extended to 3-D with little effort. Finite element method combined with streamline diffusion modification is applied to solve stationary RTE~\cite{Yao2010,Haltmeier2015,Saratoon2013}, where $P_1$ Lagrangian elements in spatial and angular space are used. In this way, a large sparse linear system needs to be solved which is still difficult. By improving the algorithm proposed in \cite{Gao}, we can solve original RTE as well as adjoint RTE on 2D and 3D unstructured mesh by Discontinuous Galerkin (DG) method combined with multigrid method, which reduces the problem to a sparse block diagonal linear system.

We divide angular space into $P$ equal intervals, and the corresponding directions and angles are denoted by $\theta_0,\theta_1,\dots, \theta_{P-1}$ and $\beta_0,\beta_1,\dots,\beta_{P-1}$ with interval $\Delta \beta$. The spatial domain is discretized into unstructured triangular mesh. Suppose spatial domain is divided into N triangles and each triangle contains $n_d$ nodes, where $n_d=3$ for 2-D spatial domain. Lagrangian elements and piecewise linear DG elements are used to discretize RTE, that is, numerical discrete scheme of RTE is

where $L_k(\theta)$ is piecewise linear basis function in angular space which takes value of 1 in direction $\theta_k$ and 0 in other directions, $\varphi_{ij}$ is the linear basis function in spatial domain in the direction $\theta_k$ which takes value of 1 in the $j$th nodes of $i$th triangle and 0 in other nodes and triangles, and $\phi_{i,j,k}$ is the value of $\phi(x,\theta)$ in direction $\theta_k$ in $j$th node of $i$th element. Such spatial basis function can approximate discontinuous solution which is more suitable for actual situation, such as some edges of inclusions in object region. On account of scattering term in RTE, angular Gauss-Seidel iteration is applied in~\cite{Gao}. It iteratively solves RTE in every fixed direction in the form of
\begin{equation}
  \label{eq:34}
  \theta_k\cdot \nabla \phi_k+(\mu_a+\mu_s)\phi_k=\mu_s\sum_{k'=1}^P\omega_{kk'}\phi_{k'}+q_k, 1\leq k\leq P,
\end{equation}
where $\omega_{kk'}=\frac{\omega^0_{kk'}}{\sum_{k'}\omega^0_{kk'}}$ with $\omega^0_{kk'}=f(\theta_k,\theta_{k'})$. Obviously, equation \eqref{eq:34} can be solved by a lot of numerical methods, and we use DG method.

A multigrid scheme is applied to solve \eqref{eq:34}, and it is reduced to solve a sparse block diagonal system. Multiplying test function $\varphi_{ij}$ in both sides of \eqref{eq:34}, integrating it over $i$th triangle $\tau_i$ with respect to $x$, we have
\begin{equation}
  \label{eq:35}
  \begin{aligned}
  &-\int_{\tau_{i}}\phi_k(\theta_k\cdot \nabla\varphi_{ij})\rd x+ \int_{\Gamma_+}\phi_k\varphi_{ij}(\theta_k\cdot \nu)\rd S+\int_{\tau_i}(\mu_a+\mu_s-\mu_s\omega_{kk})\phi_k\varphi_{ij}\\
&=-\int_{\Gamma_-}\hat{\phi}_k\varphi_{ij}(\theta_k\cdot \nu)\rd S+\int_{\tau_i}(\mu_s\sum_{k'\neq k}\omega_{kk'}\phi_{k'})\varphi_{ij}\rd x+\int_{\tau_i}q_k\varphi_{ij}\rd x
\end{aligned}
\end{equation}
from Green's formula, where subscripts $i$ and $j$ of $\phi_{i,j,k}$ are omitted. In \eqref{eq:35}, $\hat{\phi}$ is the value of neighboring element in inflow direction, which is the product of upwind scheme. The specific explanation of why the value inflow direction is used is as follows.

For an intuitive explanation, we use rectangular mesh to illustrate in Figure~\ref{fig:rectmesh}. For rectangular mesh, in direction $\theta_k$ and node $x$, RTE can be discretized into
\begin{equation}
  \label{eq:36}
\cos \beta_k\frac{\partial \phi_k}{\partial x}+\sin \beta_k\frac{\partial \phi_k}{\partial y}+(\mu_a+\mu_s)\phi_k=\mu_s\sum_{k'=1}^M\omega_{kk'}\phi_{k'}+q_k,\, k=0,1,\dots,P-1,
\end{equation}
where $\phi_k:=\phi(x,\theta_k)$. Backward difference is used to approximate one order derivative. Then when $0\leq \beta_k<\frac{\pi}{2}$, its upwind scheme is
\begin{equation}
  \label{eq:37}
   (a+b+\mu_a+\mu_s-\mu_s\omega_{kk})\phi_{i,j,k}-(a\phi_{i-1,j,k}+b\phi_{i,j-1,k})-\mu_s\sum_{k'\neq k}\omega_{kk'}\phi_{i,j,k'}=q_{i,j,k},
\end{equation}
where
\begin{equation*}
  a=\frac{\cos\beta_k}{\Delta x}\geq 0,\, b=\frac{\sin \beta_k}{\Delta y}\geq 0.
\end{equation*}

\begin{figure}[H]
    \centering
\includegraphics[width=0.3\textwidth]{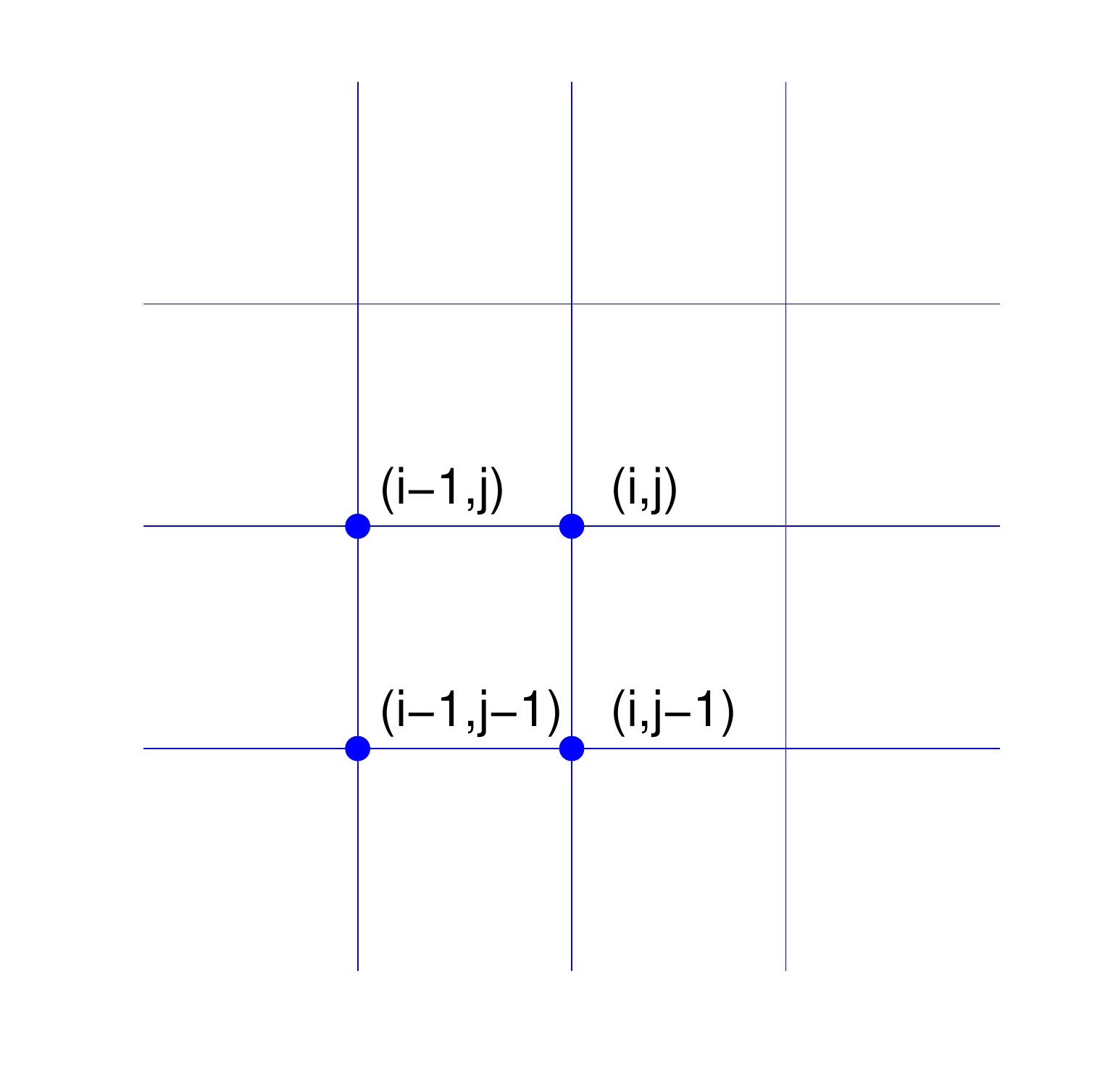}
 \caption{\label{fig:rectmesh}Rectangular mesh.}
\end{figure}

Obviously, the scheme converges because the equation \eqref{eq:37} is diagonally dominant, and Gauss-Seidel scheme will accelerate the convergence. In fact, from \eqref{eq:37}, $\phi_{i,j,k}$ is updated using
\begin{equation}
\label{eq:037}
  \phi_{i,j,k}=\frac{a\phi_{i-1,j,k}+b\phi_{i,j-1,k}+\mu_s\sum_{k'\neq k}\omega_{kk'}\phi_{i,j,k'}+q_{i,j,k}}{a+b+\mu_a+\mu_s-\mu_s\omega_{kk}}.
\end{equation}
Apparently, $\phi_{0,j,k}$ and $\phi_{i,0,k}$ are known as inflow boundary condition. Therefore, \eqref{eq:037} is an explicit scheme which updates $\phi_{i,j,k}$, $\phi_{i,1,k}$, $\phi_{2,j,k}$, $\phi_{i,2,k},\cdots\ (i,j,=1,2,\cdots)$ successively. We can see that \eqref{eq:37} and \eqref{eq:35} both use inflow information to update outflow information. This may be a kind of intuitive explanation. Indeed, the convergence of \eqref{eq:35} for vacuum boundary condition is proved in \cite{Gao2013}, see Appendix A. 

However, the above numerical scheme fails to solve the adjoint RTE with same updating order, that is to use inflow information to update outflow information. We provide a reverse updating order for elements to guarantee the convergence of solver for adjoint RTE. The solver for adjoint RTE is detailed as follows.

For adjoint RTE \eqref{eq:31}, the corresponding discrete scheme in direction $\theta_k$ for rectangular mesh is
\begin{equation}
  \label{eq:38}
  -\cos \beta_k\frac{\partial \phi_k}{\partial x}-\sin \beta_k\frac{\partial \phi_k}{\partial y}+(\mu_a+\mu_s)\phi_k=\mu_s\sum_{k'=1}^M\omega_{kk'}\phi_{k'}+q_k,\, k=0,1,\dots,P-1.
\end{equation}
Differently, forward difference is used to approximate the one order derivative, and when $0\leq \beta_k<\frac{\pi}{2}$, its upwind scheme is
\begin{equation}
  \label{eq:39}
  (a+b+\mu_a+\mu_s-\mu_s\omega_{kk})\phi_{i,j,k}-(a\phi_{i+1,j,k}+b\phi_{i,j+1,k})-\mu_s\sum_{k'\neq k}\omega_{kk'}\phi_{i,j,k'}=q_{i,j,k},
\end{equation}
where
\begin{equation*}
  a=\frac{\cos\beta_k}{\Delta x}\geq 0,\, b=\frac{\sin \beta_k}{\Delta y}\geq 0.
\end{equation*}
Then $\phi_{i,j,k}$ is updated by
\begin{equation}
  \label{eq:039}
  \phi_{i,j,k}=\frac{q_{i,j,k}+a\phi_{i+1,j,k}+b\phi_{i,j+1,k}+\mu_s\sum_{k'\neq k}\omega_{kk'}\phi_{i,j,k'}}{a+b+\mu_a+\mu_s-\mu_s\omega_{kk}}.
\end{equation}

The use of forward difference makes \eqref{eq:39} diagonally dominant again. Besides, outflow information is used to update inflow information in \eqref{eq:039}.
Heuristically, for unstructured mesh, we have
\begin{equation}
  \label{eq:40}
  \begin{aligned}
  &\int_{\tau_{i}}\phi_k(\theta_k\cdot \nabla\varphi_{ij})\rd x- \int_{\Gamma_-}\phi_k\varphi_{ij}(\theta_k\cdot \nu)\rd S+\int_{\tau_i}(\mu_a+\mu_s-\mu_s\omega_{kk})\phi_k\varphi_{ij}\\
&=\int_{\Gamma_+}\hat{\phi}_k\varphi_{ij}(\theta_k\cdot \nu)\rd S+\int_{\tau_i}(\mu_s\sum_{k'\neq k}\omega_{kk'}\phi_{k'})\varphi_{ij}\rd x+\int_{\tau_i}q_k\varphi_{ij}\rd x,
\end{aligned}
\end{equation}
where $\hat{\phi}$ is the value of neighboring element in outflow direction which is used to update inflow information. Using similar discussion, the convergence of \eqref{eq:40} is proved in Appendix B. Furthermore, we apply multigrid method to solve RTE to accelerate convergence.

From above discussion, no matter which mesh is used, the key of convergence is to update outflow information using inflow information for the original RTE \eqref{eq:2} and update the latter using the former for the adjoint RTE \eqref{eq:40}. The convergence in rectangular mesh is obvious. As for triangular mesh, the update scheme results from variation analysis, so the convergence proofs are obtained by discussing corresponding variations, see Appendix A and B.

Therefore, for original RTE \eqref{eq:2}, there are three layers of loops. We apply the multigrid scheme at the outermost loop. Since it involves two coordinate systems, i.e., angular-coordinate and spatial-coordinate, there are variable updating schemes in terms of the iteration order, such as the angle-prior and space-prior. The second-layer loop is about direction, that is, the corresponding spatial equation \eqref{eq:34} is solved in turn for each direction $\theta_k(k=0,1,\dots, P-1)$. In third-layer loop, for each element in the direction $\theta_k$, the value of $\phi_{i,j,k}$ is updated iteratively through solving a $3\times 3$ linear system~\eqref{eq:35}. Note that the updating order is in the following order: consider the each element successively from the boundary along the direction of $\theta_k$. We refer the interested reader to~\cite{Gao} for its algorithm details and to~\cite{Gao2013} for its theory.  On the contrast, for adjoint RTE, in third-layer loop, inspired by \eqref{eq:39} we propose to consider each element successively from the boundary along the direction of $-\theta_k$, which just reverse the updating order of original RTE. And linear system \eqref{eq:40} needs to be solved. The corresponding pseudo-code is presented in Algorithm \ref{alg:3}.
\begin{algorithm}[h]
\caption{Original and adjoint RTE solver\label{alg:3}}
\begin{algorithmic}[1]
\FOR {each loop of multigrid iteration}
\FOR {each direction $\theta_k(k=0,1,\dots,P-1)$}
\FOR {each element $\tau_i$}
\STATE Updating $\phi_{i,j,k}(j=0,1,2)$ by solving a $3\times 3$ linear system \eqref{eq:35} for original RTE or \eqref{eq:40} for adjoint RTE. For original RTE, the updating order is from boundary along $\theta_k$ through each elements to other side of region. For adjoint RTE, the updating order is reverse.
\ENDFOR
\ENDFOR
\ENDFOR
\end{algorithmic}
\end{algorithm}

\subsection{Numerical results}
\label{sec:4.3}

In this subsection, we present some numerical results to demonstrate the numerical performance of improved fixed-point iteration and BB method. For the sake of simplicity, only 2D reconstruction is investigated. The anisotropic factor $g$ equals 0.9. Applying RTE solver described in~\ref{sec:4.1}, we can get discrete energy density $\bm{H}$ of the same length as mesh. In order to explore the stability to noise, noisy data are generated by
\begin{equation}
  \label{eq:41}
  \tilde{\bm{H}}=\bm{H}(1+\epsilon\bm{N}),
\end{equation}
where $\bm{N}$ is a random vector that follows the normal distribution with mean 0 and variance 1. We consider 4 object regions:
\begin{enumerate}
\item Rectangle $\Omega_0=[-20,20]\times [-20,20]$ and four inclusions $\Omega_1=\{(x,y)\in\Omega_0:(x+10)^2+(y-10)^2=6^2\}$, $\Omega_2=\{(x,y)\in \Omega_2:(x-10)^2+(y-10)^2=4^2\}$, $\Omega_3=[-17,-5]\times[-17,-5]$, and $\Omega_4=[5,17]\times[-17,-5]$;
\item Circle $\Omega_0$ with center $(0,0)$ and radius 20;
\item Circle $\Omega_0$ with center $(0,0)$ and radius 20, and four inclusions $\Omega_1=[-12,-8]\times[-12,12]$, $\Omega_2=[-8,-2]\times[-12,12]$, $\Omega_3=[-2,12]\times[6,12]$, and $\Omega_4=[-2,12]\times[-12,6]$;
\item Circle $\Omega_0$ with center $(0,0)$ and radius 20, and three inclusions $\Omega_1=\{(x,y)\in \Omega_0:\frac{(y-3)^2}{9^2}+\frac{(x-7)^2}{6.2^2}=1\}$, $\Omega_2=[-14,-4]\times[-10,8]$, and $\Omega_3=\{(x,y)\in \Omega_0:\frac{(\frac{\sqrt{2}x}{2}+\frac{\sqrt{2}y}{2}-8.4)}{8^2}+\frac{(-\frac{\sqrt{2}x}{2}+\frac{\sqrt{2}y}{2}+8)^2}{5^2}=1\}$.
\end{enumerate}
 Their absorption and scattering coefficients respectively are
 \begin{enumerate}
 \item $\mu_a$ is 0.02 in $\Omega_1$ and $\Omega_4$ with background 0.01; $\mu_s$ is 3 in $\Omega_2$ and $\Omega_3$ with background 1,
 \item $\mu_a(x,y)=0.02+0.01\sin (\frac{\pi x}{8})$ and $\mu_s(x,y)=2+\sin (\frac{\pi y}{8})$,
 \item $\mu_a$ is 0.03 in $\Omega_1$, 0.02 in $\Omega_2$, 0.04 in $\Omega_3$, and 0.015 in $\Omega_4$ with background 0.01; $\mu_s$ is 2.5 in $\Omega_1$, 1.5 in $\Omega_2$, 3 in $\Omega_3$, and 2 in $\Omega_4$ with background 1,
 \item $\mu_a$ is 0.015 in $(\Omega_1\cup \Omega_2)\backslash(\Omega_1\cap\Omega_2)\cup \Omega_3$ and 0.03 in $\Omega_1\cap\Omega_2$ with background 0.01; $\mu_s$ is 3 in $\Omega_1\cup\Omega_3$ with background 1.
 \end{enumerate}
These regions are depicted in Figure \ref{fig:2}, where the optical coefficients of 1st, 3rd, and 4st templates are piecewise constant, and the second one is smooth in object region $\Omega_0$. These templates contain discontinuous borders as well as continuous borders, whose corners are both sharp and smooth. To avoid inverse crime, we generate data by solving RTE in finer unstructured mesh than inverse problem. Original data are generated on 21376, 16352, 17376, and 16576 unstructured mesh respectively, and corresponding inverse problem are solved on 9600, 7392, 7392, 7392 unstructured mesh. Four point sources are placed in (-20,0), (0,20), (20,0) and (0,-20). Iterative relative errors are defined by
\begin{equation*}
  \mathcal{\epsilon}_{\mu_a}=\frac{\norm{\mu_a-\mu_a^*}_2}{\norm{\mu_a^*}_2},\qquad \mathcal{\epsilon}_{\mu_s}=\frac{\norm{\mu_s-\mu_s^*}_2}{\norm{\mu_s^*}_2}.
\end{equation*}

\begin{figure}[H]
    \centering
\begin{minipage}{0.24\linewidth}
\includegraphics[width=\textwidth]{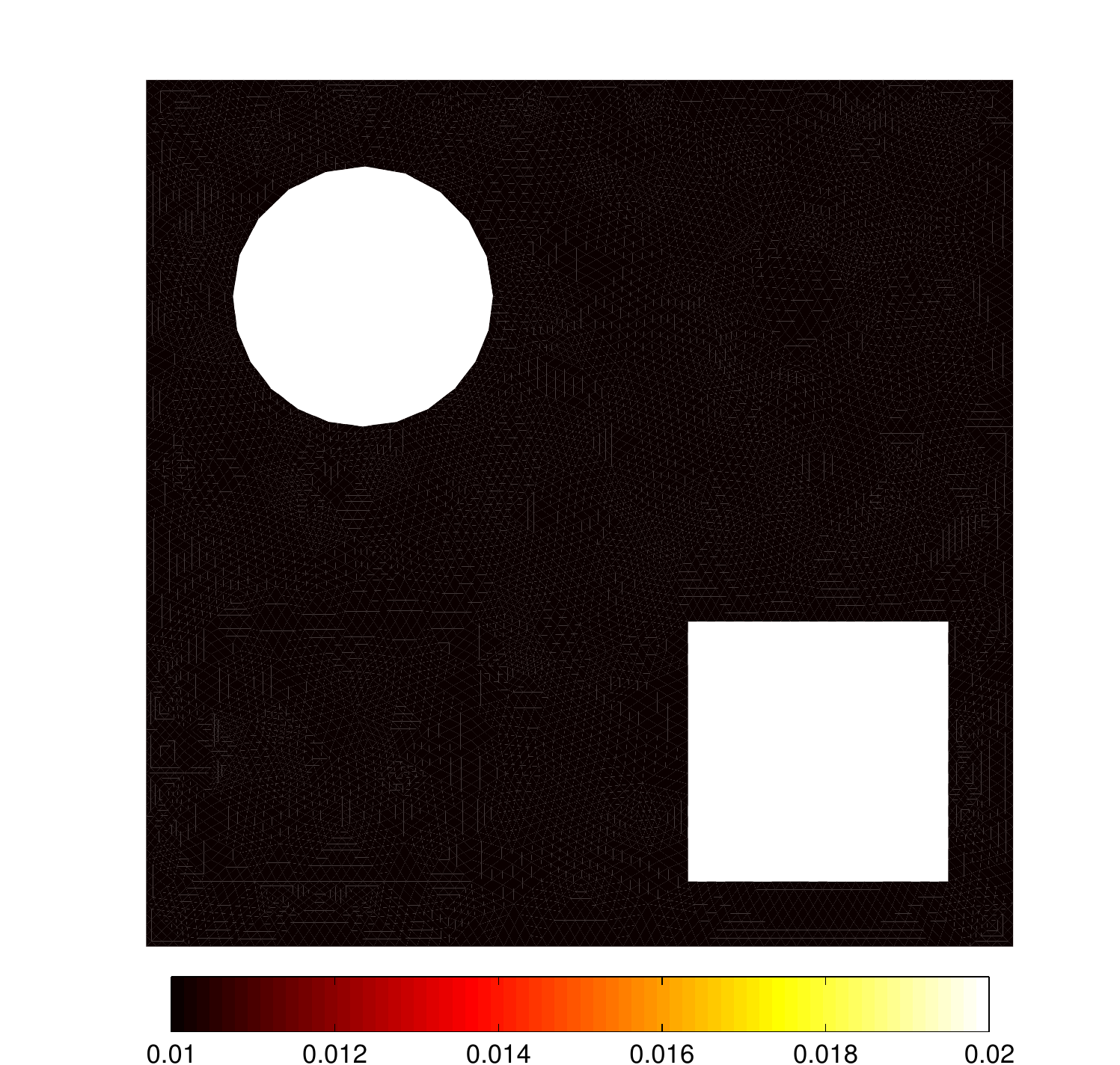}
\end{minipage}
\begin{minipage}{0.24\linewidth}
  \includegraphics[width=\textwidth]{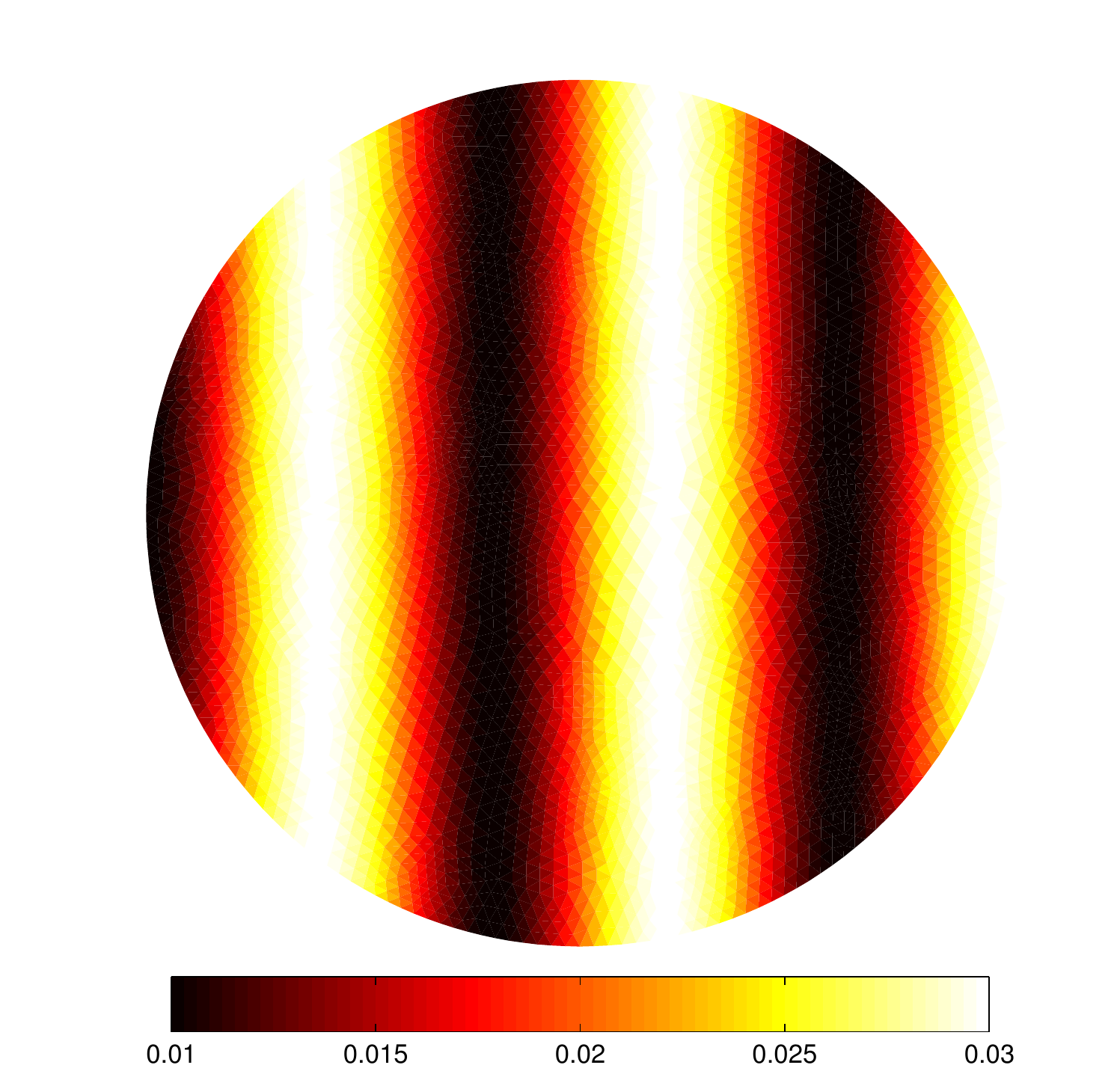}
\end{minipage}
\begin{minipage}{0.24\linewidth}
  \includegraphics[width=\textwidth]{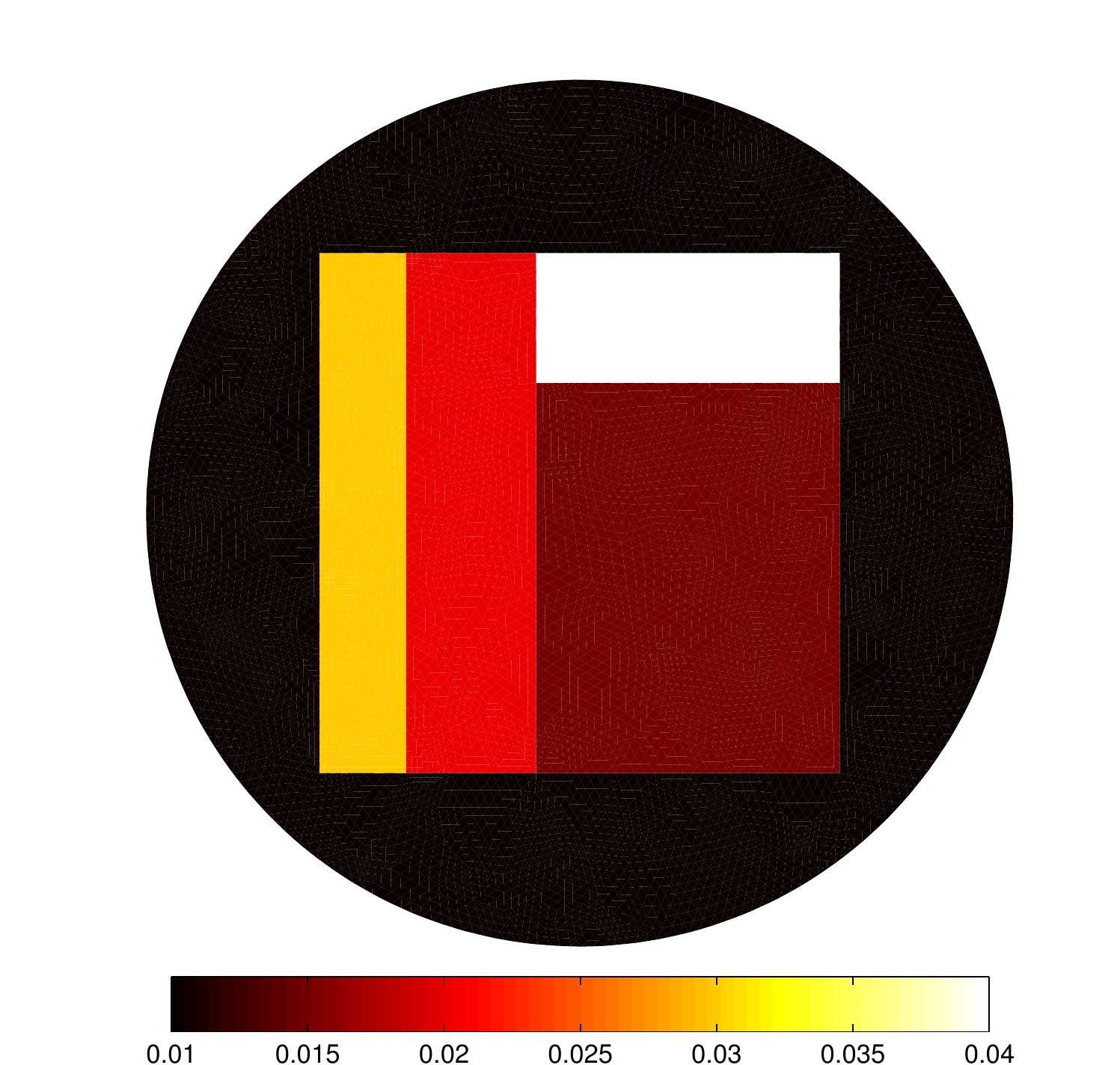}
\end{minipage}
\begin{minipage}{0.24\linewidth}
  \includegraphics[width=\textwidth]{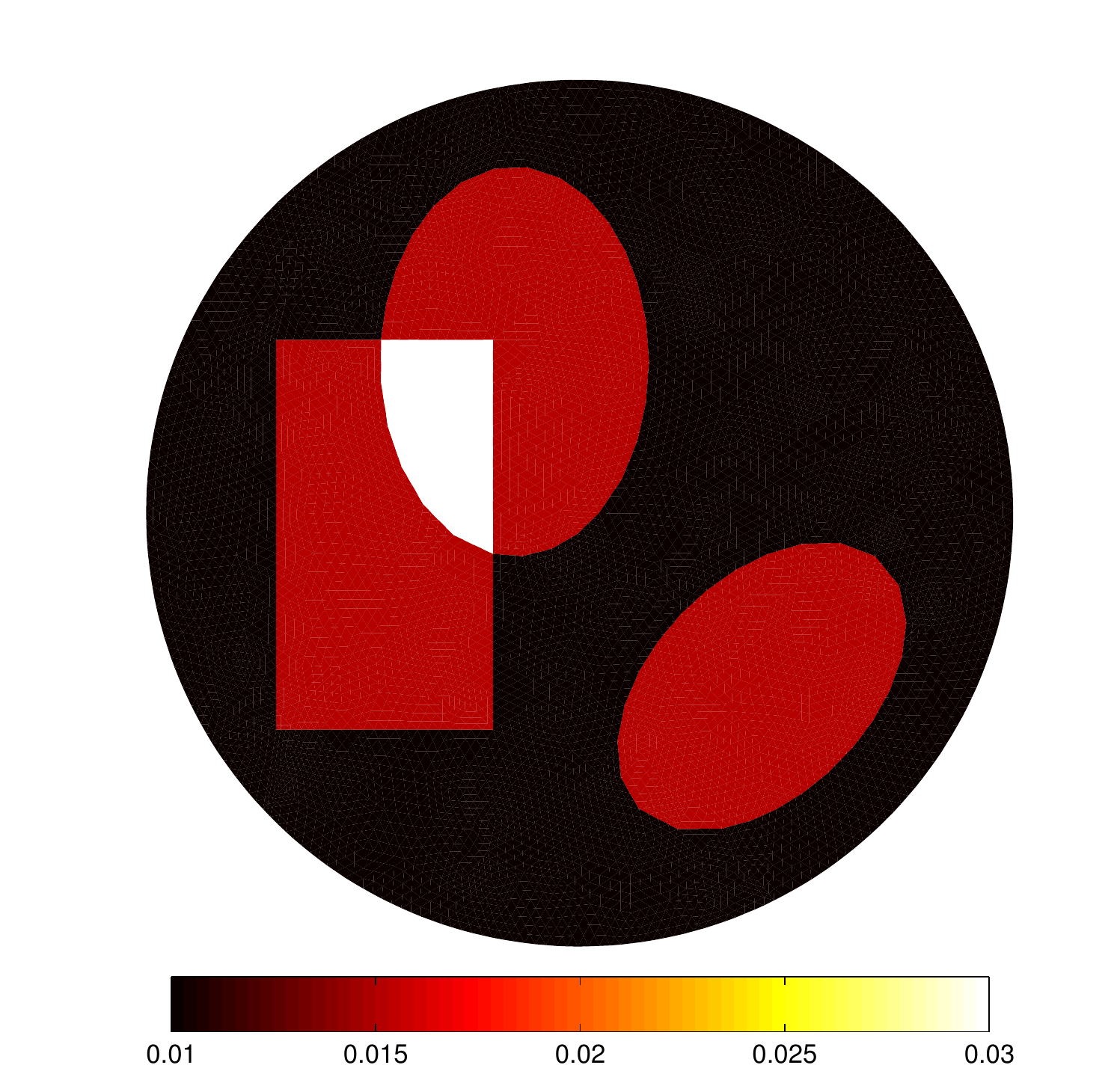}
\end{minipage}
\\
\begin{minipage}{0.24\linewidth}
  \includegraphics[width=\textwidth]{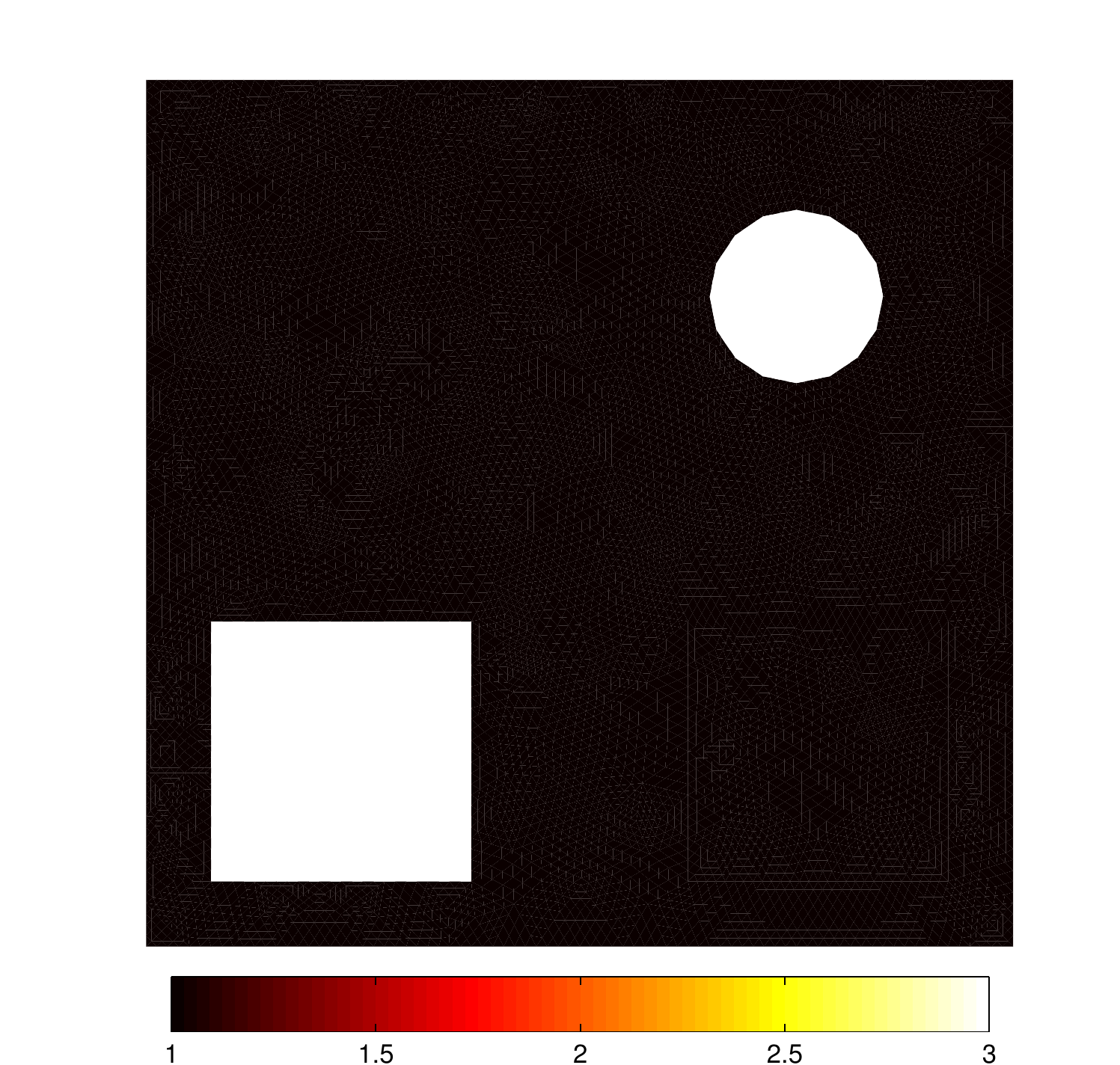}
\end{minipage}
\begin{minipage}{0.24\linewidth}
  \includegraphics[width=\textwidth]{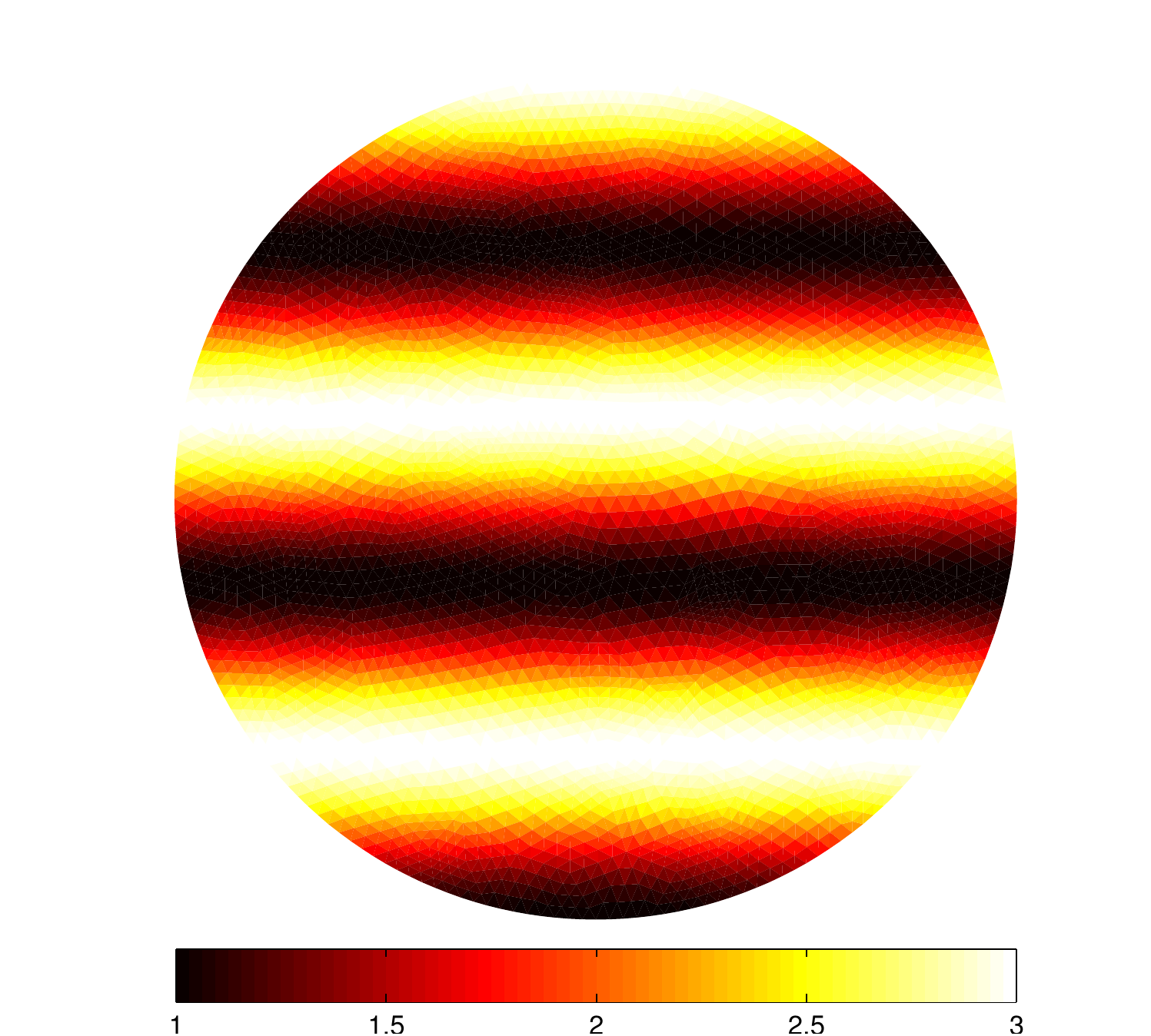}
\end{minipage}
\begin{minipage}{0.24\linewidth}
  \includegraphics[width=\textwidth]{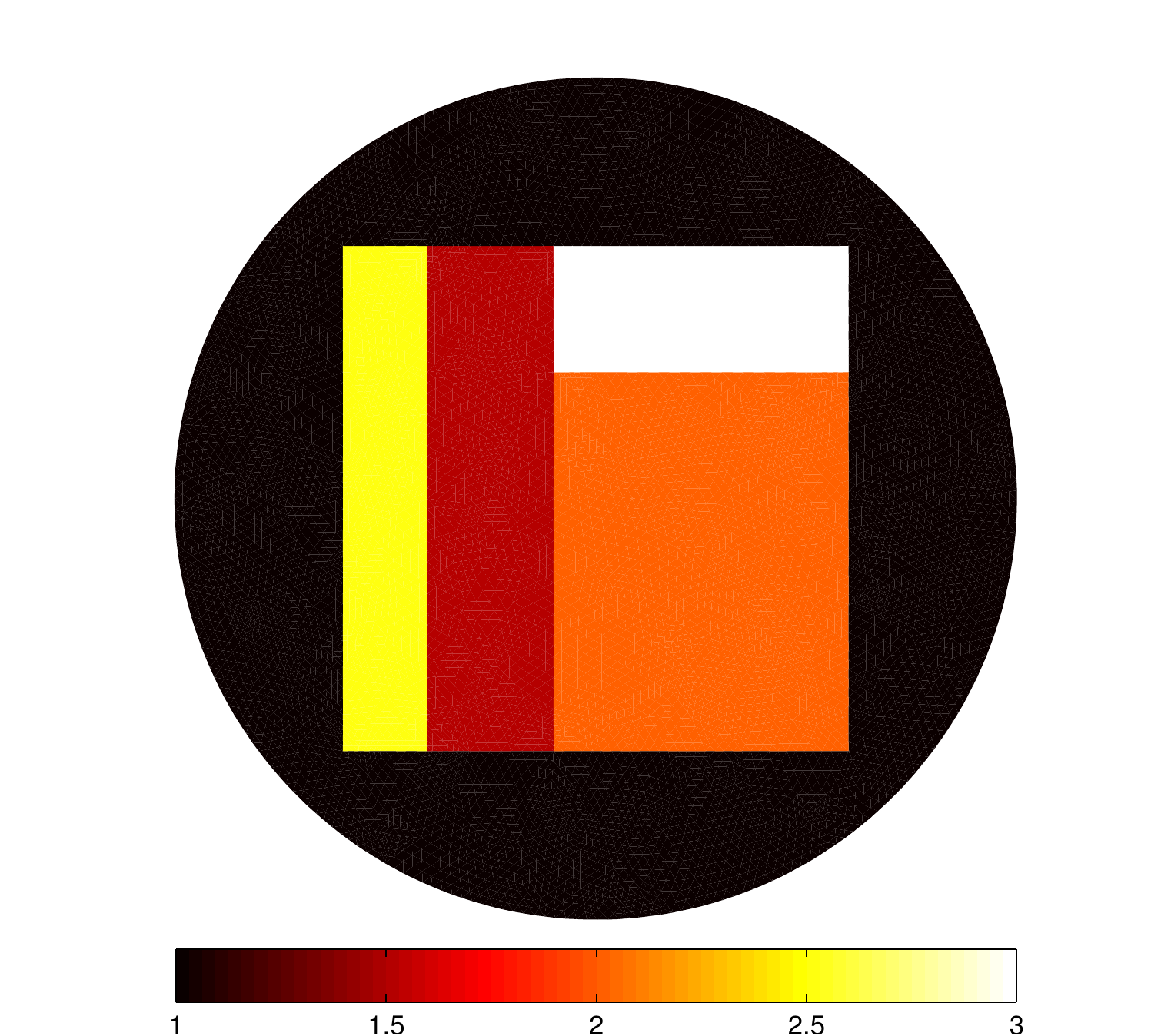}
\end{minipage}
\begin{minipage}{0.24\linewidth}
  \includegraphics[width=\textwidth]{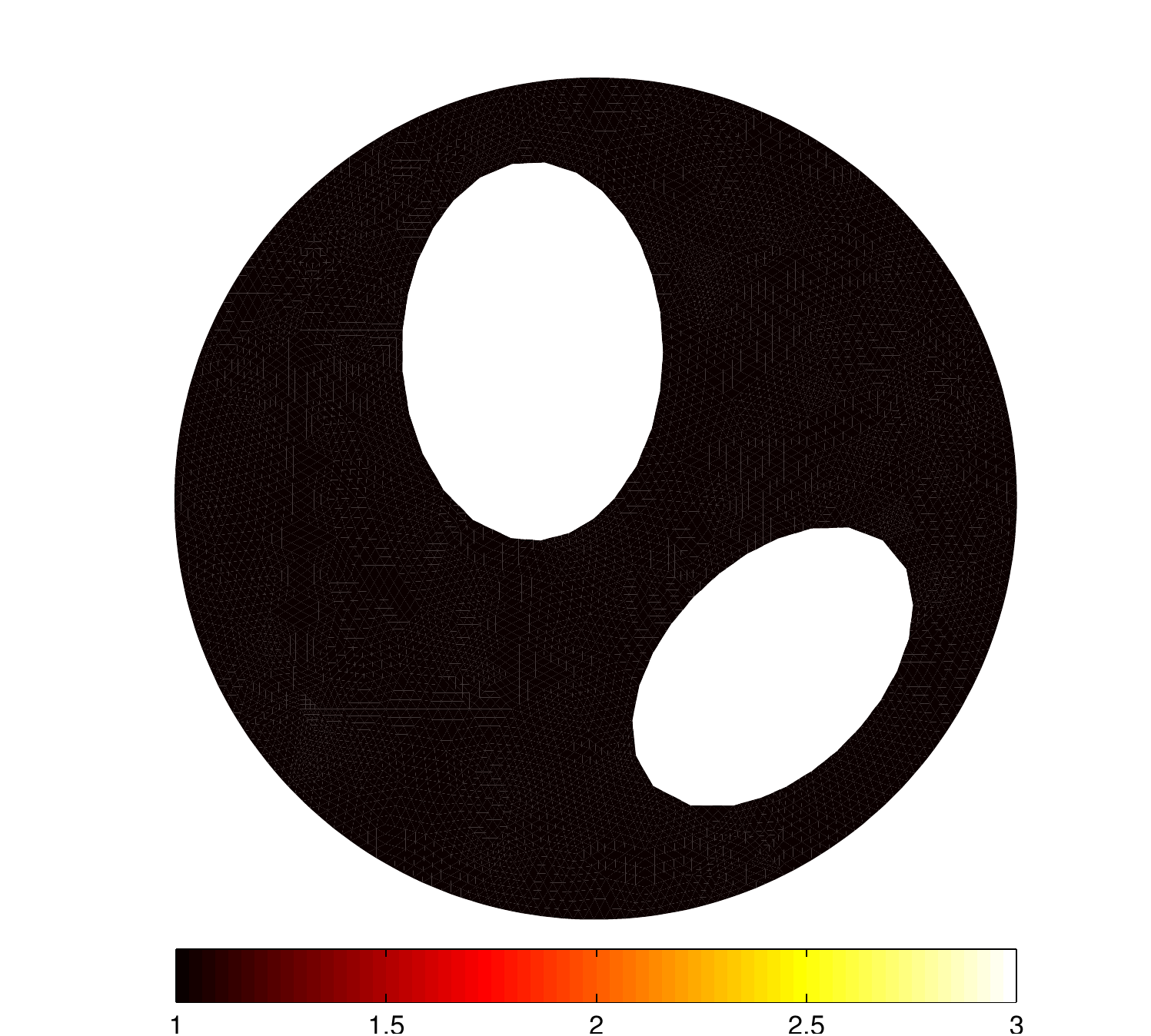}
\end{minipage}
 \caption{\label{fig:2}Original absorption and scattering coefficient maps of four templates. Top row: absorption coefficient map $\mu_a$. Bottom row: scattering coefficient map $\mu_s$.}
\end{figure}

\subsubsection{Reconstruction of $\mu_a$ given $\mu_s$ }
\label{sec:4.3.1}
Using one measurement with boundary source in (20,0), we apply improved fixed-point iterative method and BB method to retrieve absorption coefficients of templates given scattering coefficients. Our initial guesses are set to be the same as background. Figure~\ref{fig:3} shows results after 50 iterations for noiseless data. Then $5\%$ Gaussian noise (i.e. $\epsilon=5\%$) is added to data according to \eqref{eq:41}, and the final reconstructed images are showed in Figure~\ref{fig:3n}. Specific iterative relative errors are showed in Figure~\ref{fig:4}. From Figure~\ref{fig:3} and Figure~\ref{fig:3n}, it seems that both methods can retrieve absorption coefficient with almost the same accurate solutions in the case of noise-free and noisy measurements. However, from Figure \ref{fig:4}, it is obvious that improved fixed-point iteration converges more rapidly than BB method. Improved fixed-point iteration achieve the critical point only after a few iterations. To attain the same accuracy, the number of iteration of improved fixed-point iteration is about the half of the one of BB method. Owing to solving the adjoint RTEs in BB method, improved fixed-point iteration can achieve the accuracy as the BB method with about quarter computational time of the latter.

\begin{figure}[H]
    \centering
\begin{minipage}{0.24\linewidth}
\includegraphics[width=\textwidth]{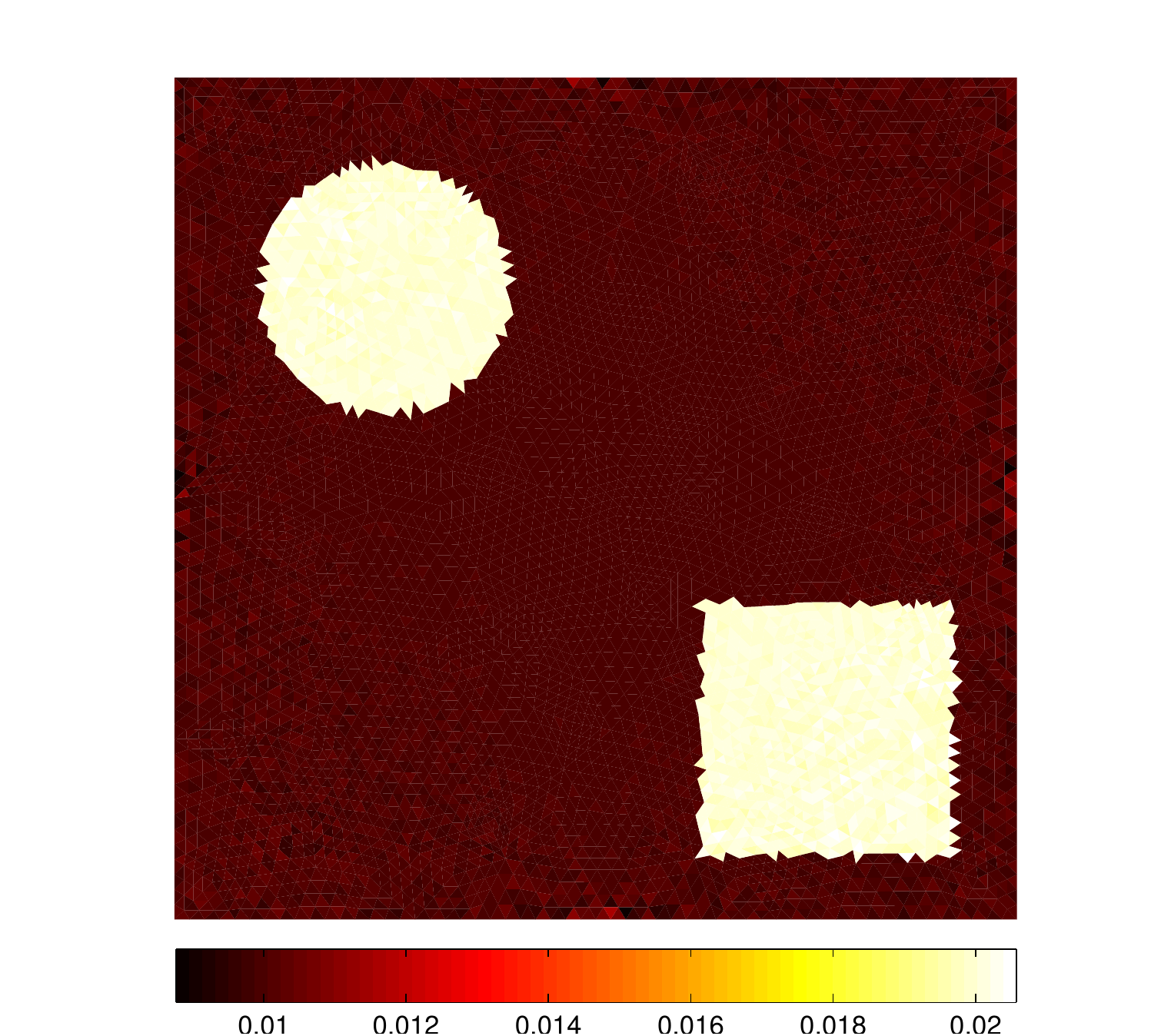}
\end{minipage}
\begin{minipage}{0.24\linewidth}
  \includegraphics[width=\textwidth]{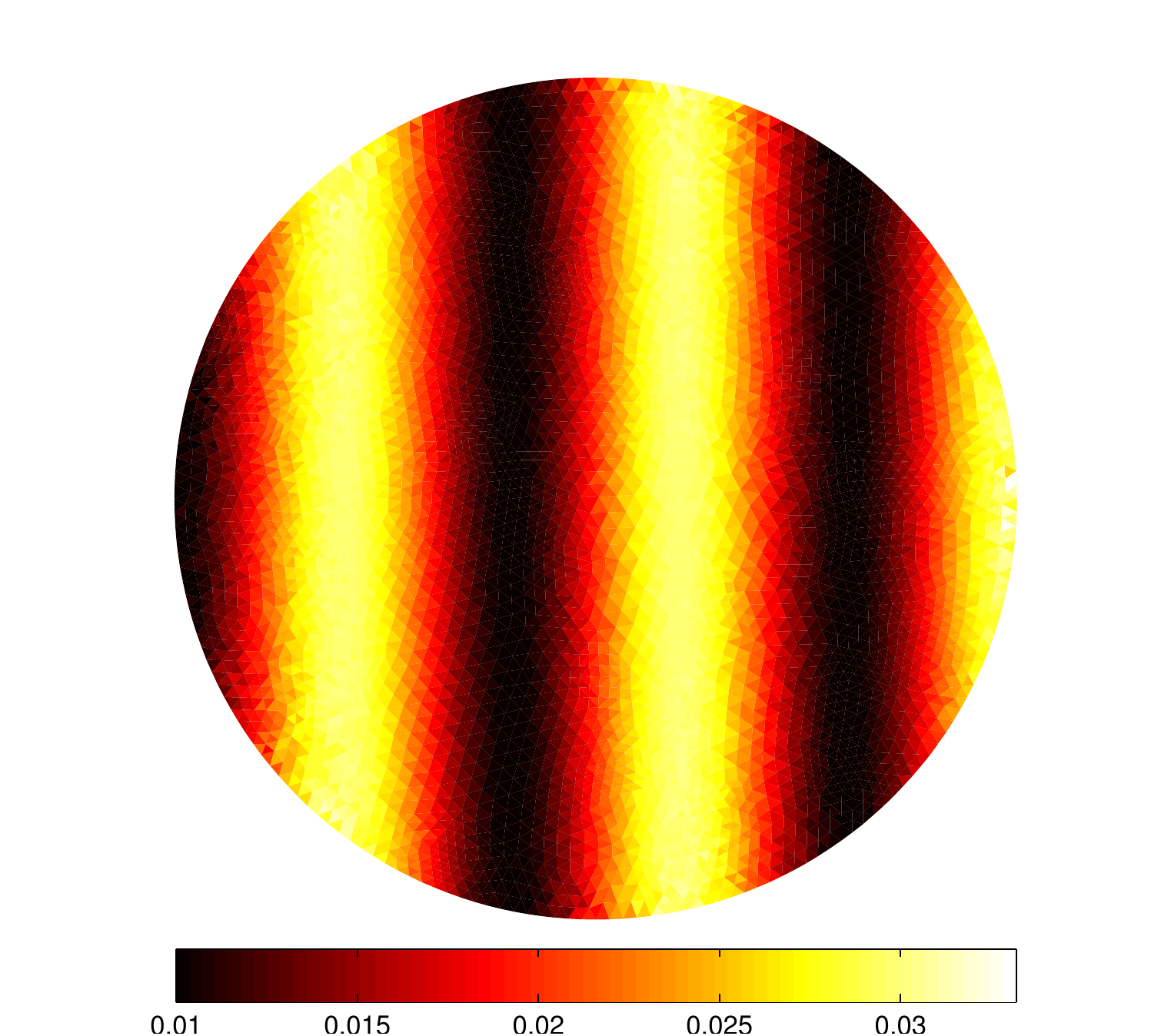}
\end{minipage}
\begin{minipage}{0.24\linewidth}
  \includegraphics[width=\textwidth]{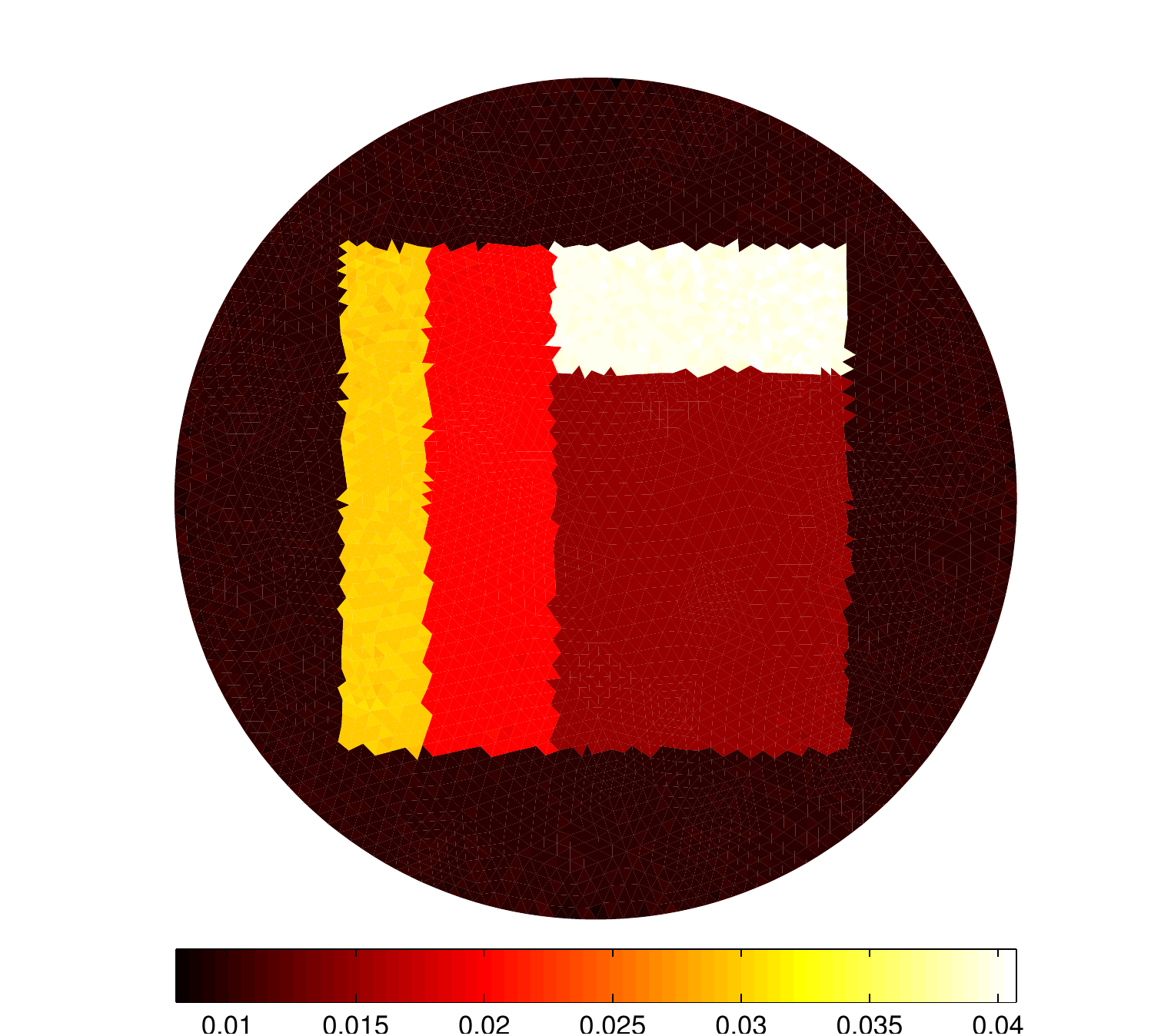}
\end{minipage}
\begin{minipage}{0.24\linewidth}
  \includegraphics[width=\textwidth]{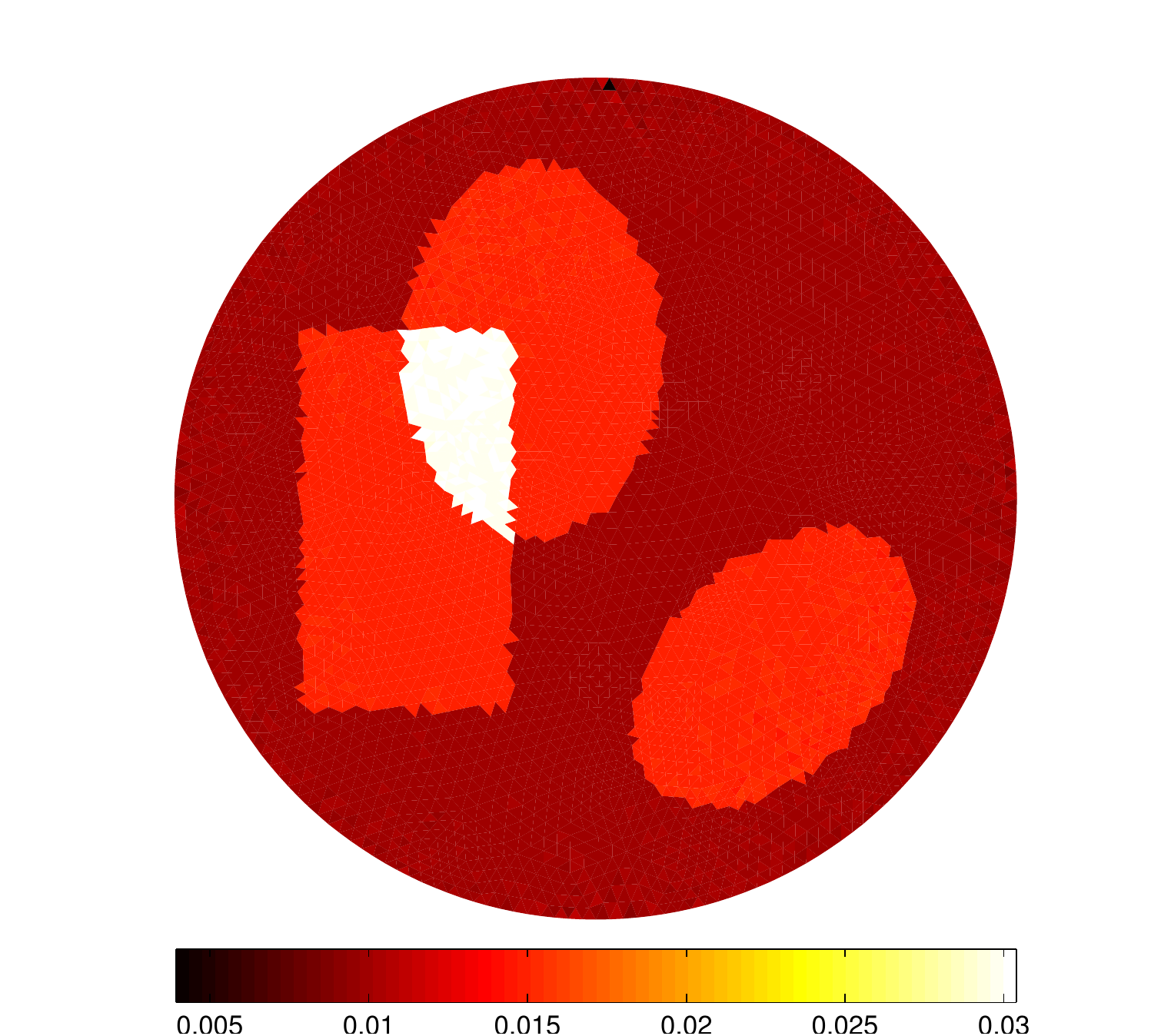}
\end{minipage}
\\
\begin{minipage}{0.24\linewidth}
  \includegraphics[width=\textwidth]{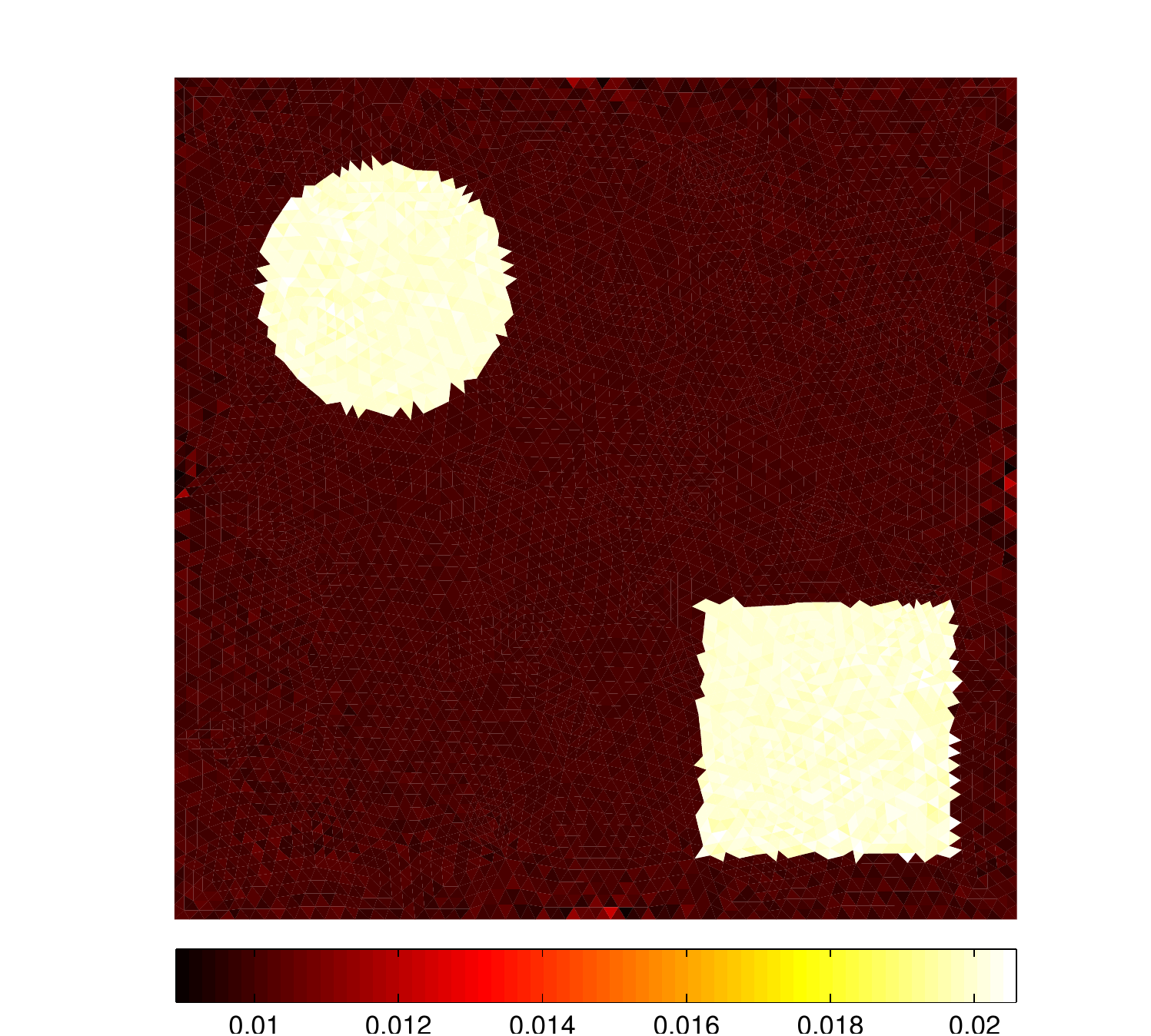}
\end{minipage}
\begin{minipage}{0.24\linewidth}
  \includegraphics[width=\textwidth]{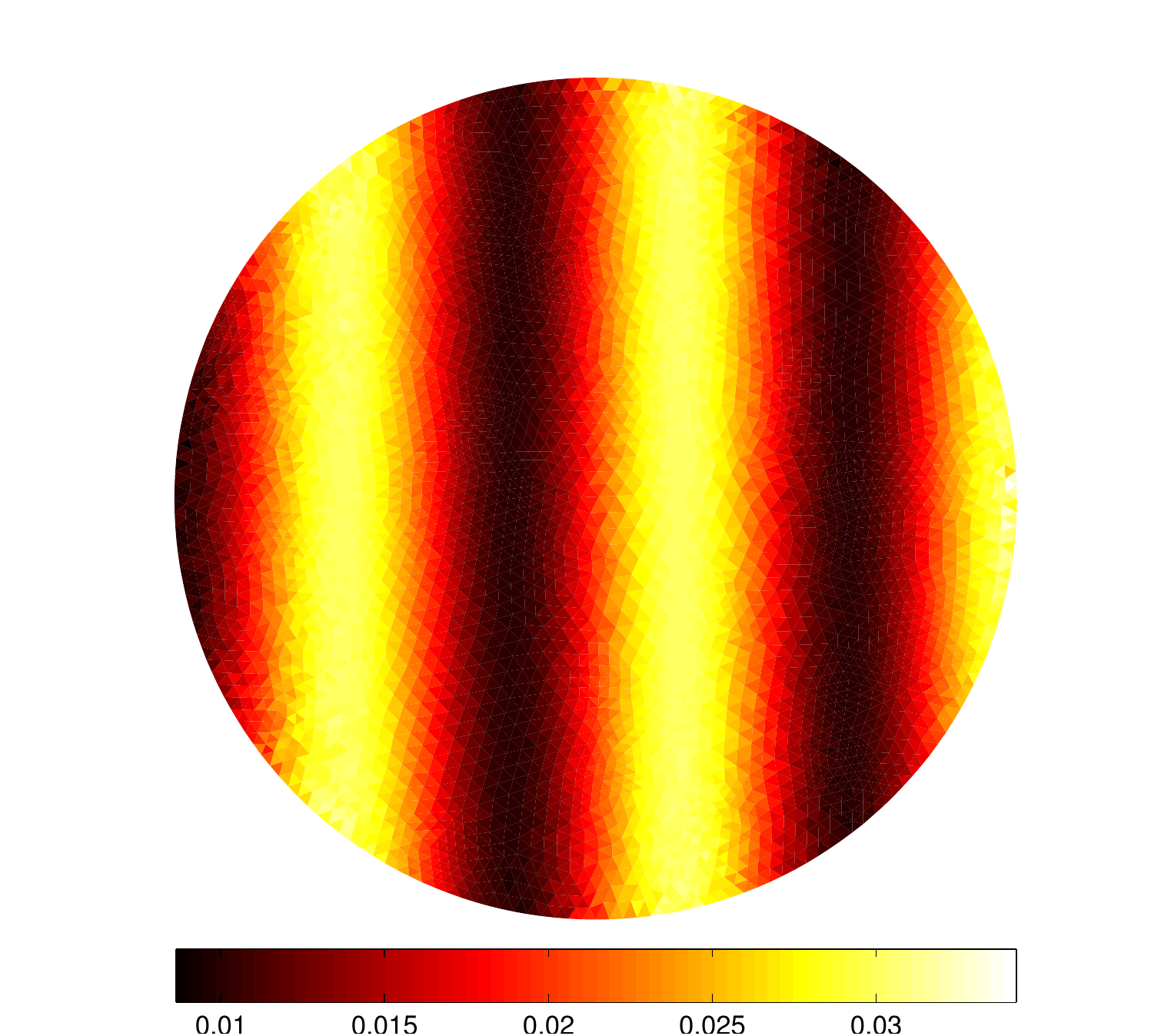}
\end{minipage}
\begin{minipage}{0.24\linewidth}
  \includegraphics[width=\textwidth]{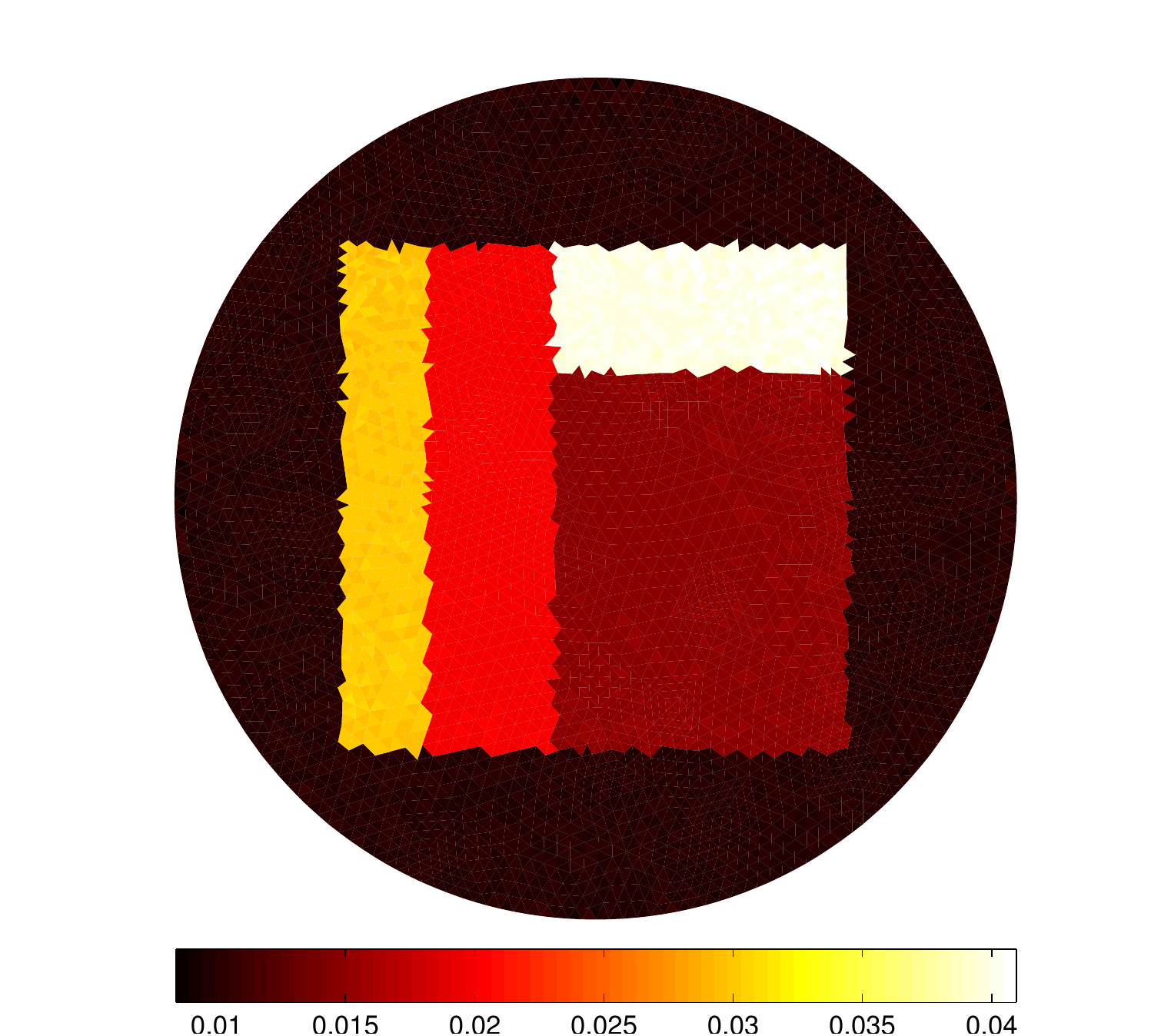}
\end{minipage}
\begin{minipage}{0.24\linewidth}
  \includegraphics[width=\textwidth]{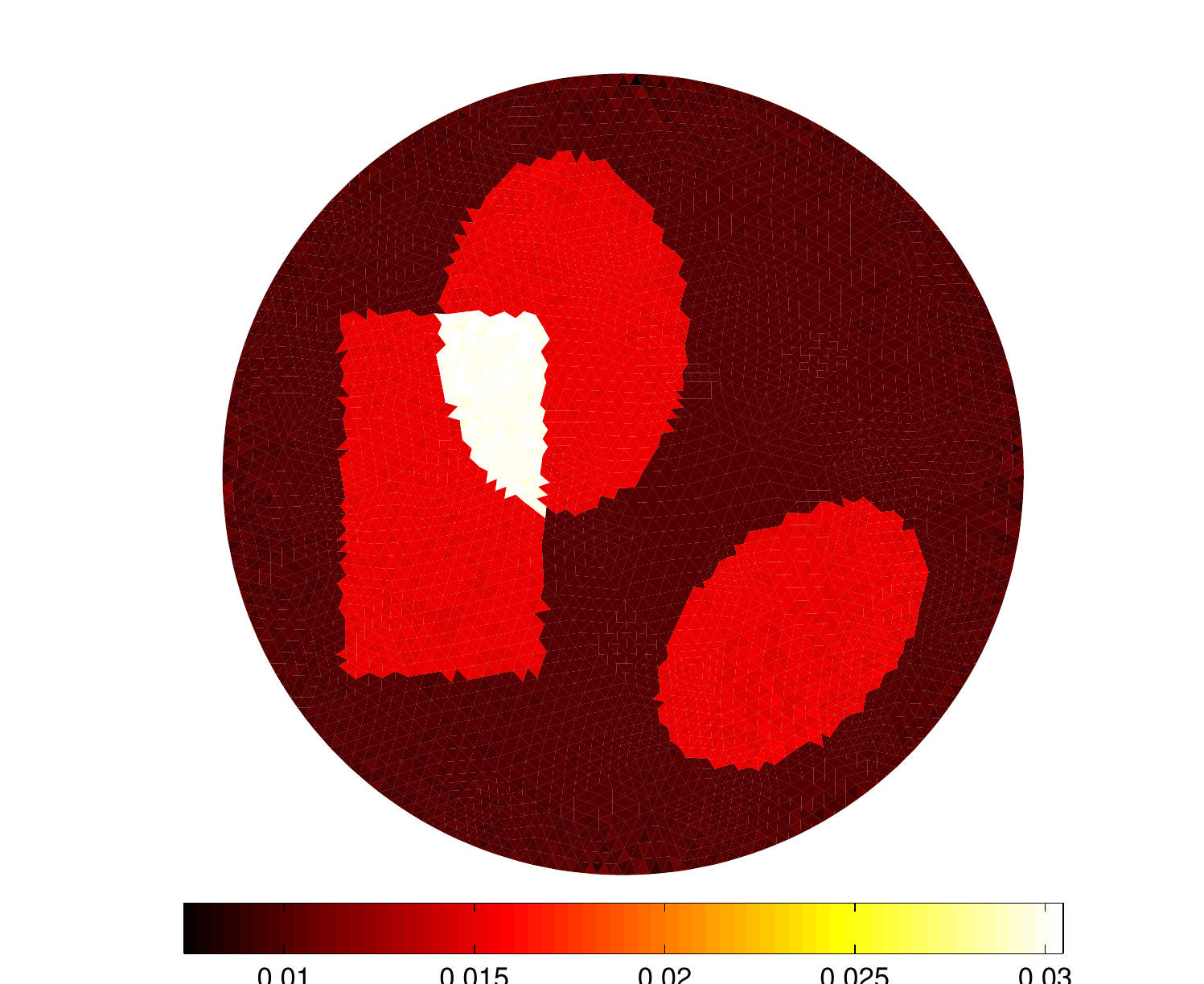}
\end{minipage}
 \caption{\label{fig:3}Given scattering coefficient, reconstructions of absorption coefficient from noiseless data. Top row: BB method. Bottom row: fixed-point iteration.}
\end{figure}

\begin{figure}[H]
    \centering
\begin{minipage}{0.24\linewidth}
\includegraphics[width=\textwidth]{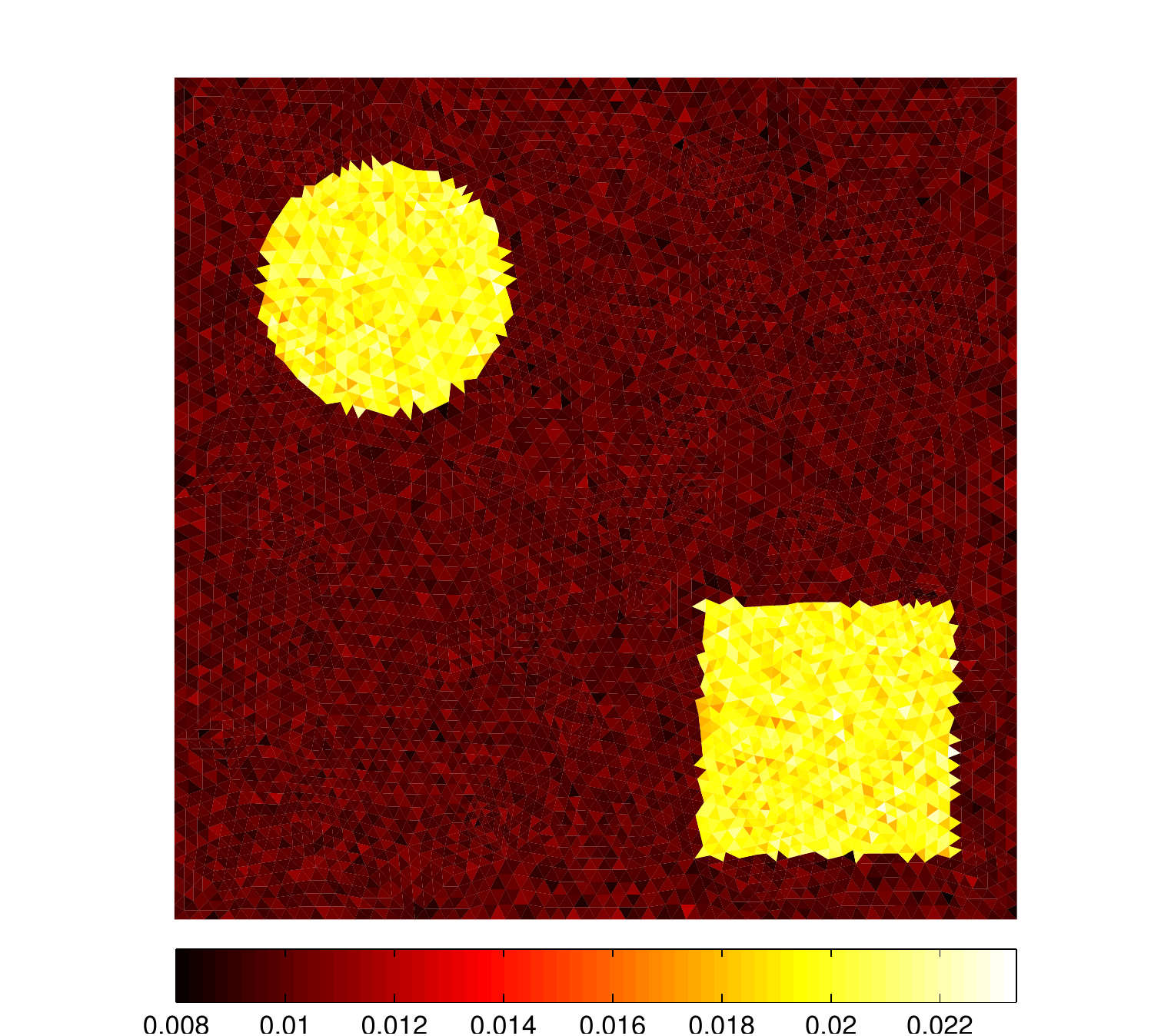}
\end{minipage}
\begin{minipage}{0.24\linewidth}
  \includegraphics[width=\textwidth]{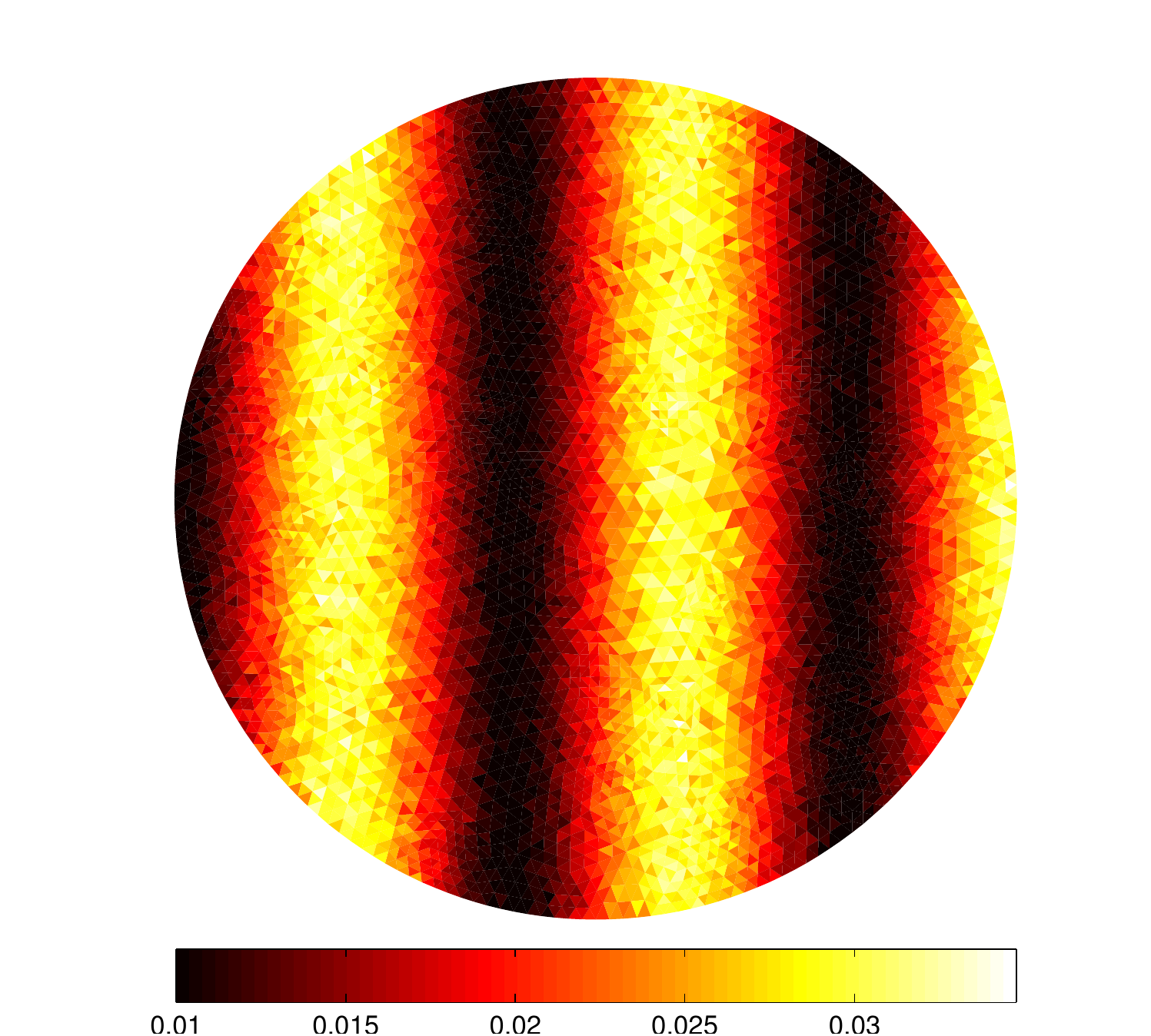}
\end{minipage}
\begin{minipage}{0.24\linewidth}
  \includegraphics[width=\textwidth]{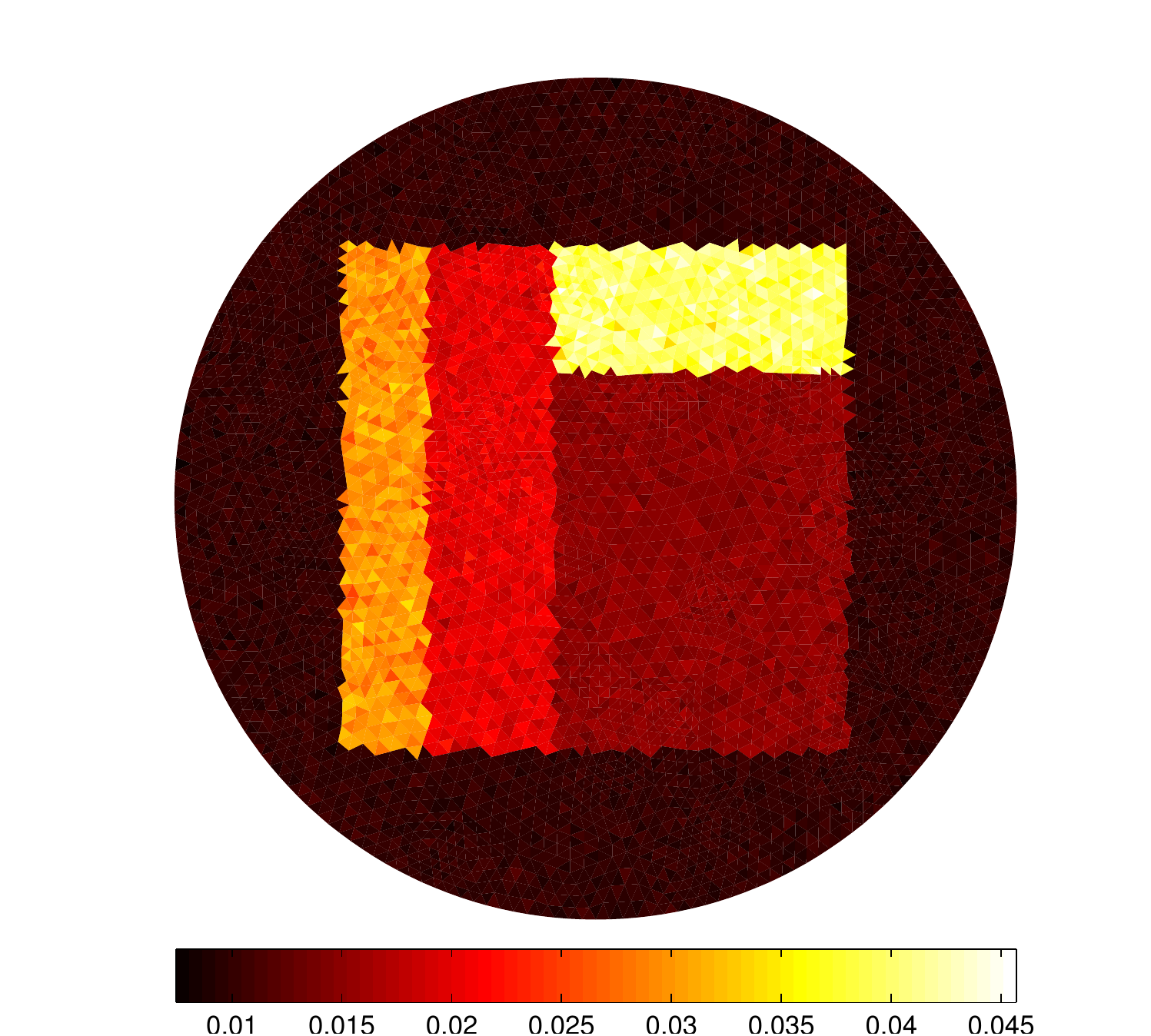}
\end{minipage}
\begin{minipage}{0.24\linewidth}
  \includegraphics[width=\textwidth]{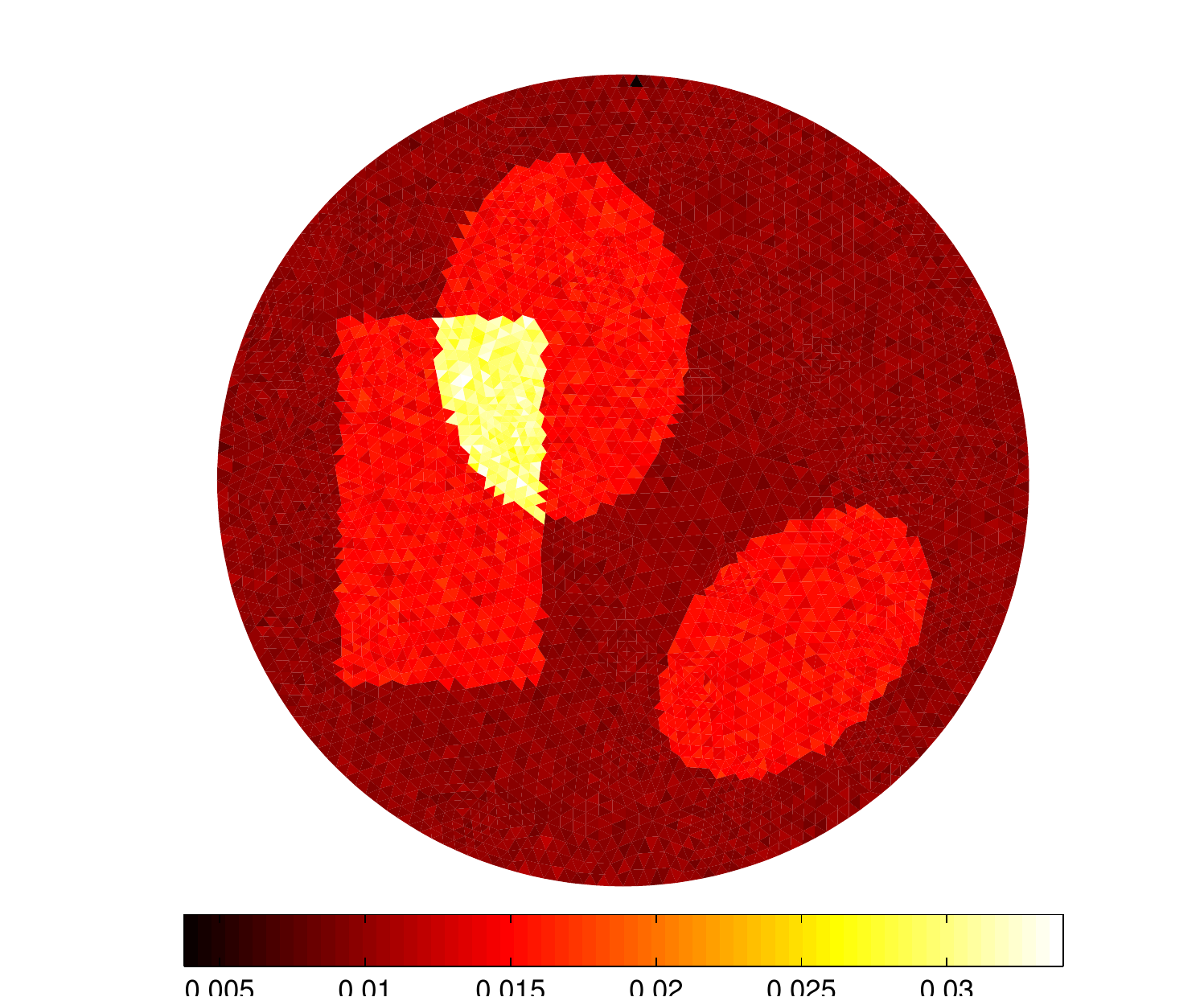}
\end{minipage}
\\
\begin{minipage}{0.24\linewidth}
  \includegraphics[width=\textwidth]{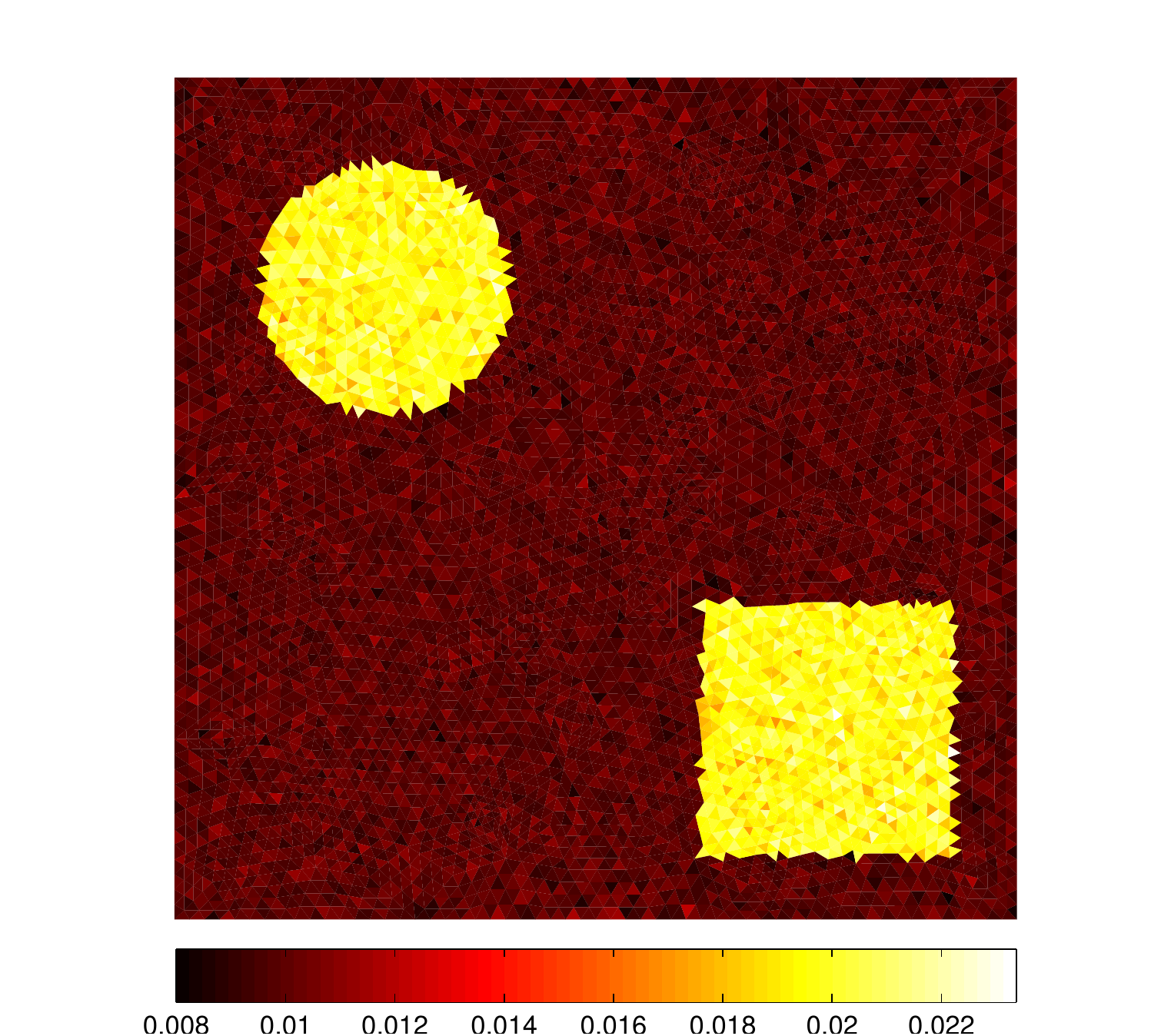}
\end{minipage}
\begin{minipage}{0.24\linewidth}
  \includegraphics[width=\textwidth]{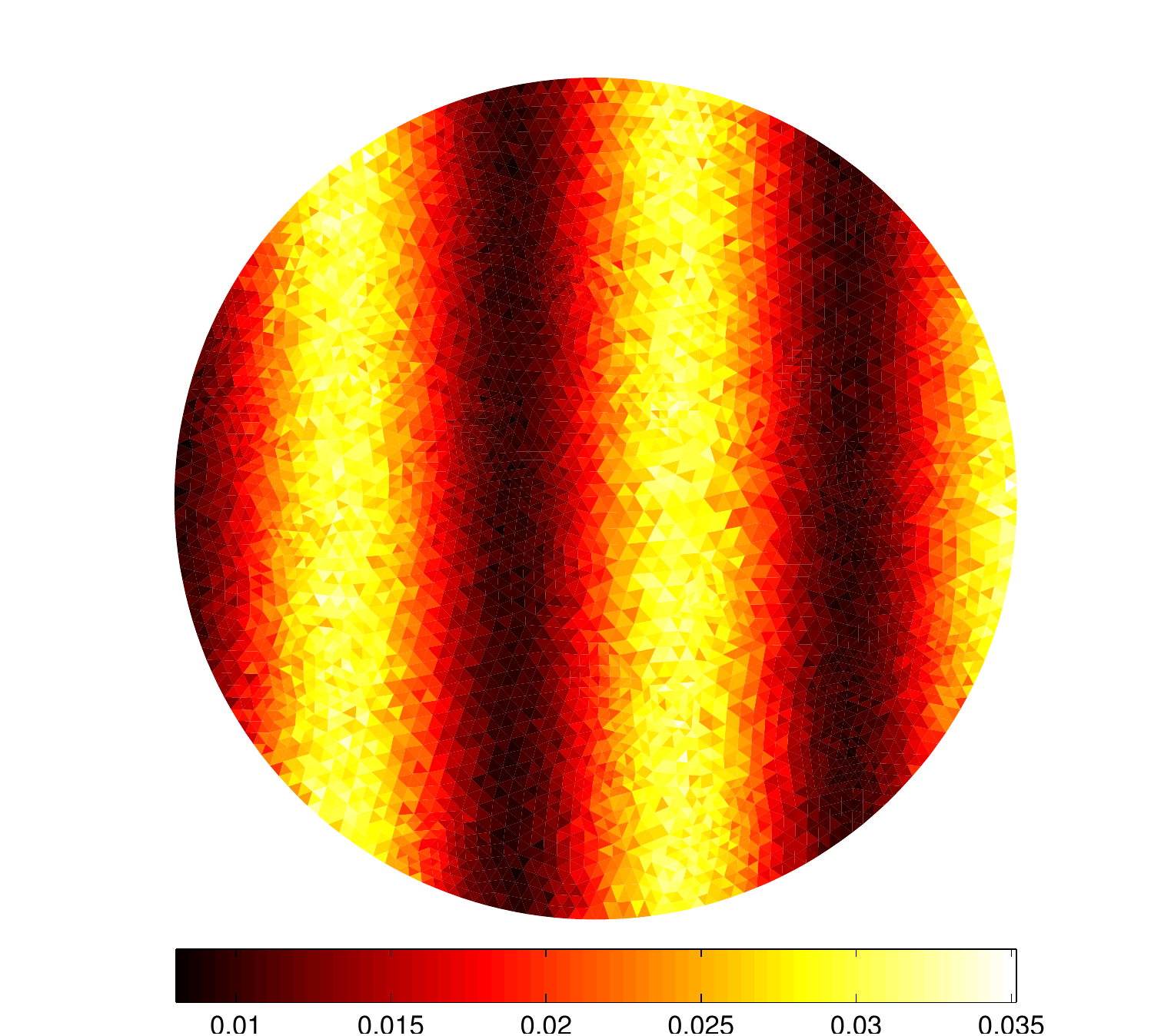}
\end{minipage}
\begin{minipage}{0.24\linewidth}
  \includegraphics[width=\textwidth]{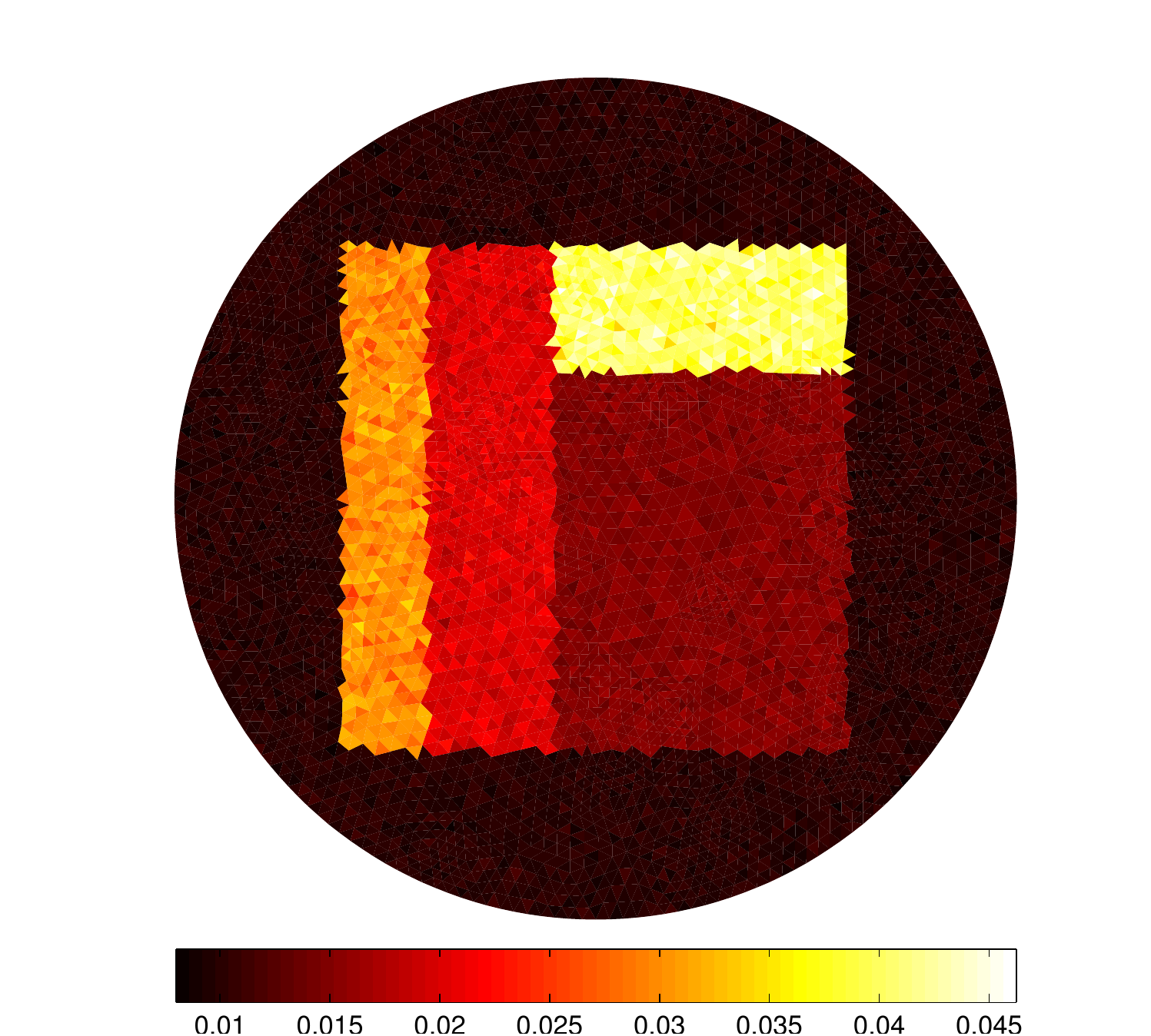}
\end{minipage}
\begin{minipage}{0.24\linewidth}
  \includegraphics[width=\textwidth]{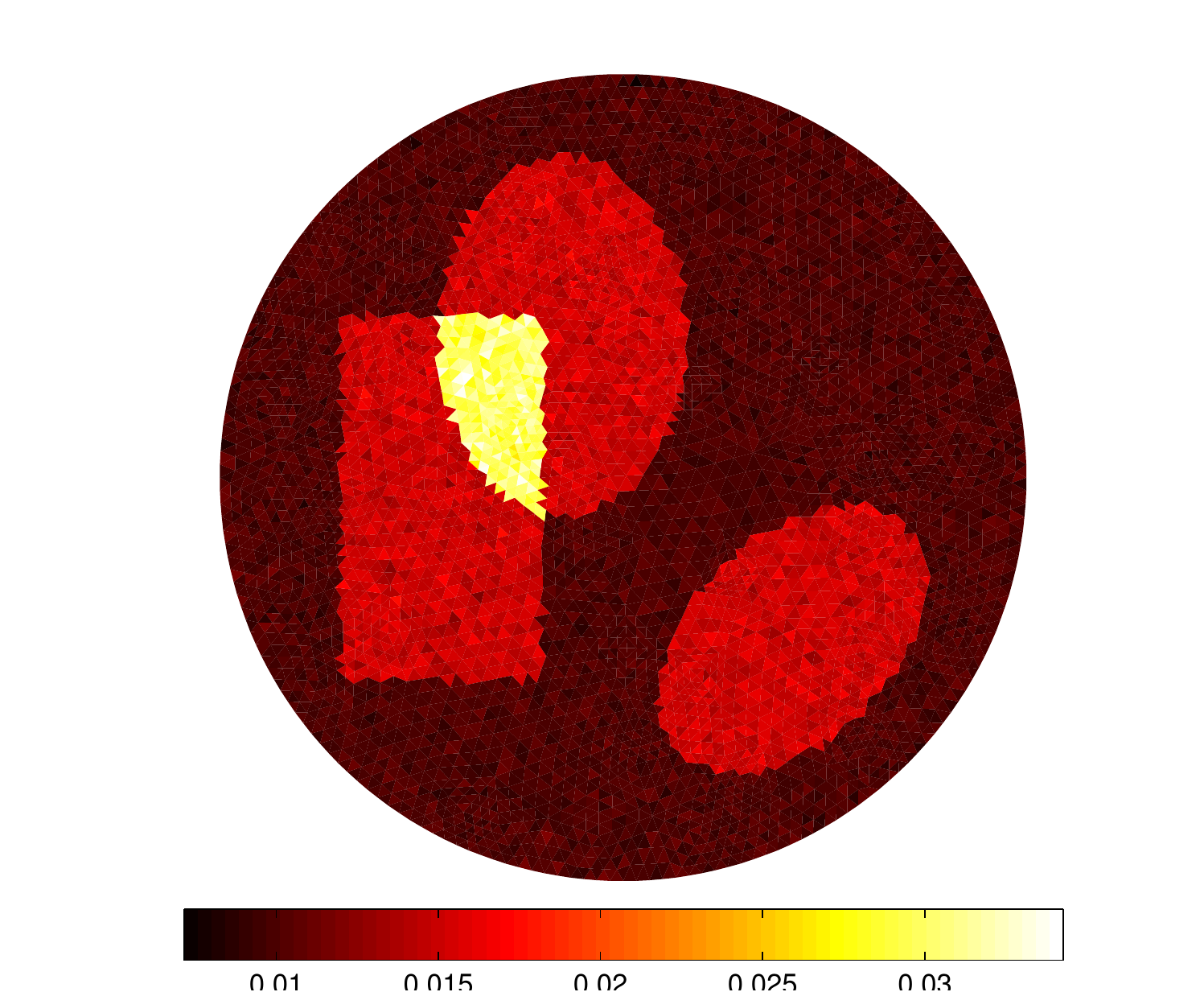}
\end{minipage}
 \caption{\label{fig:3n}Given scattering coefficient, reconstructions of absorption coefficient  from data added by 5\% Gaussian noise. Top row: BB method. Bottom row: improved fixed-point iteration.}
\end{figure}

\begin{figure}[H]
    \centering
\begin{minipage}{0.48\linewidth}
\includegraphics[width=\textwidth]{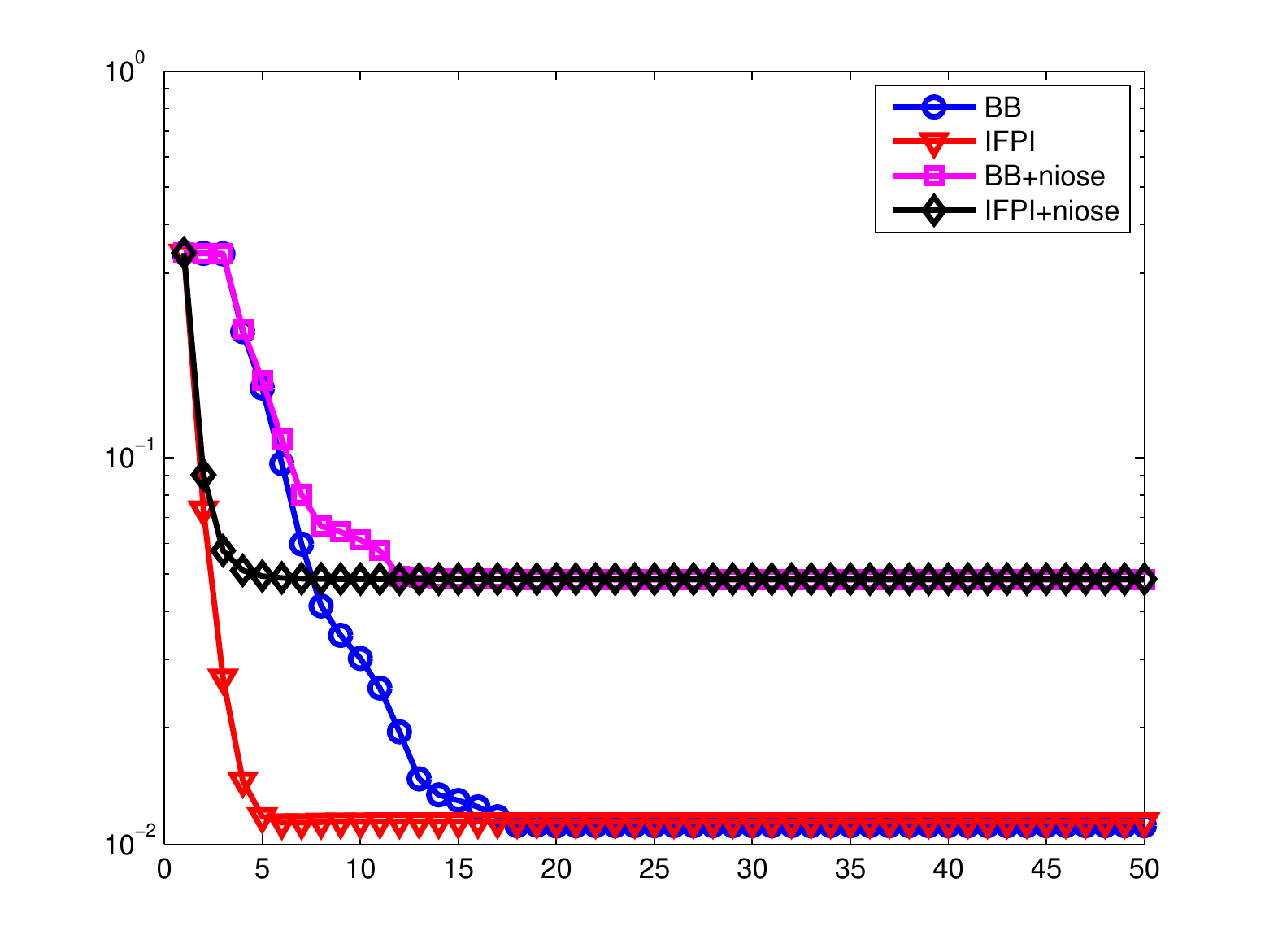}
\end{minipage}
\begin{minipage}{0.48\linewidth}
  \includegraphics[width=\textwidth]{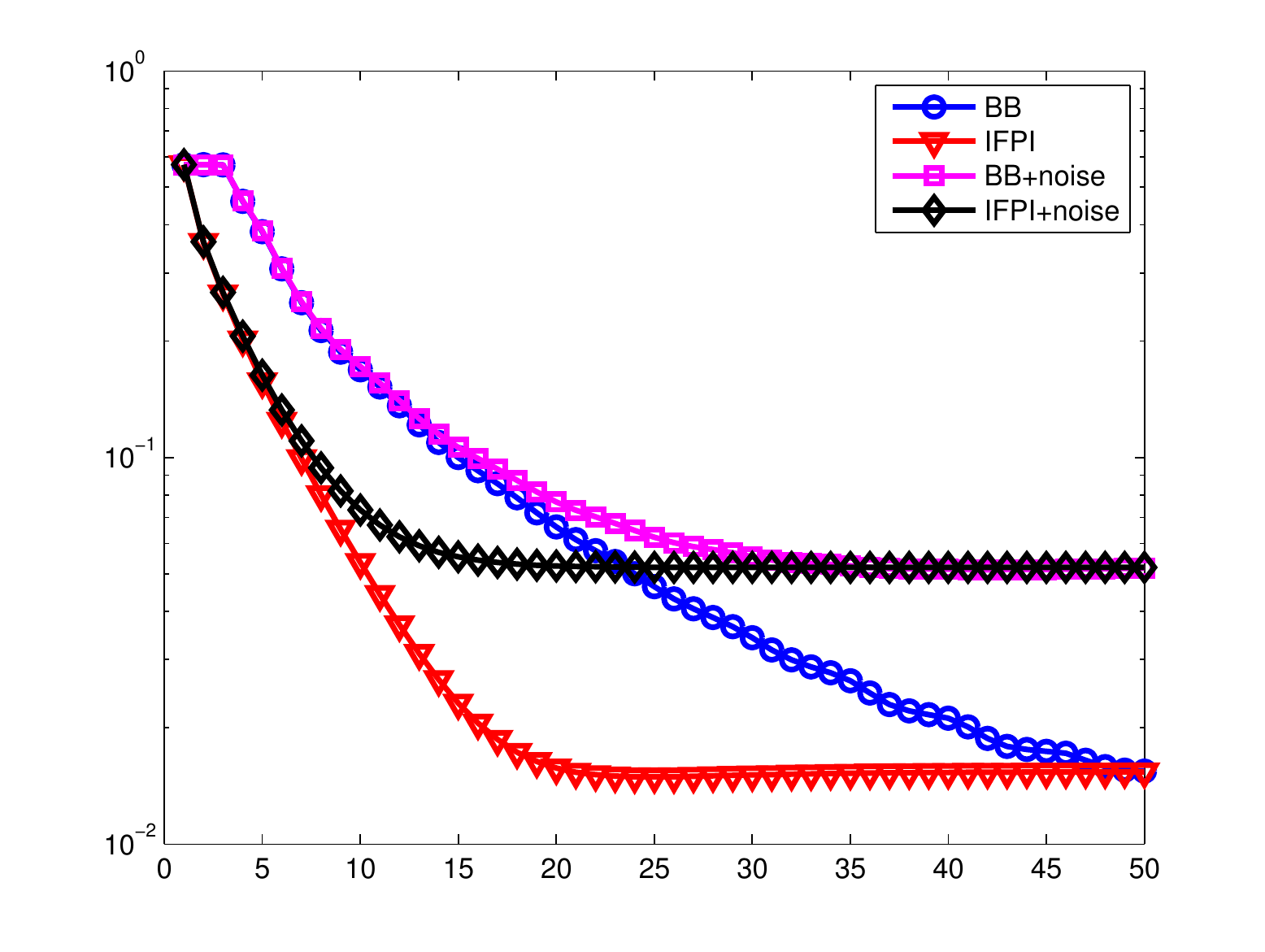}
\end{minipage}
\\
\begin{minipage}{0.48\linewidth}
  \includegraphics[width=\textwidth]{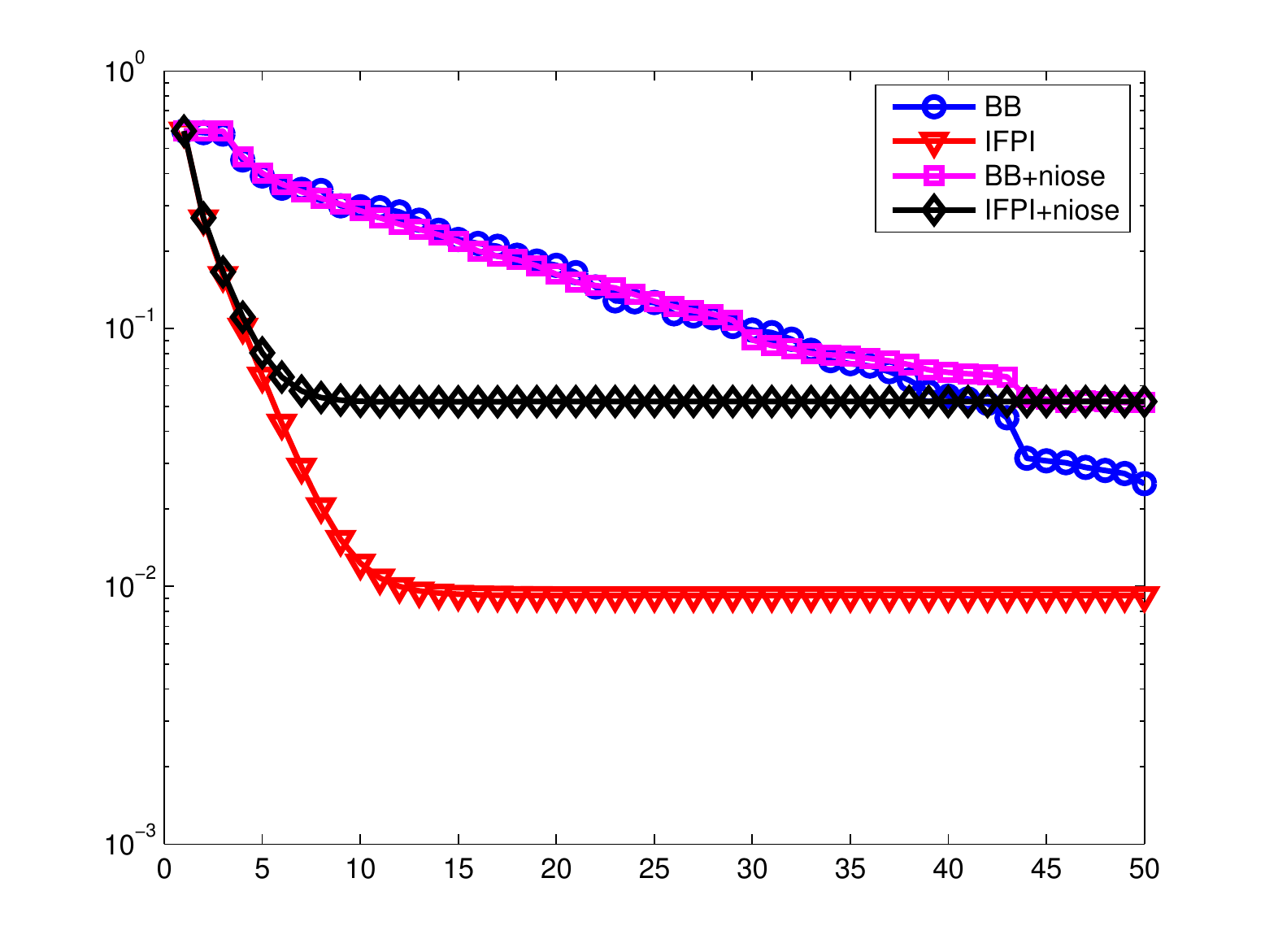}
\end{minipage}
\begin{minipage}{0.48\linewidth}
  \includegraphics[width=\textwidth]{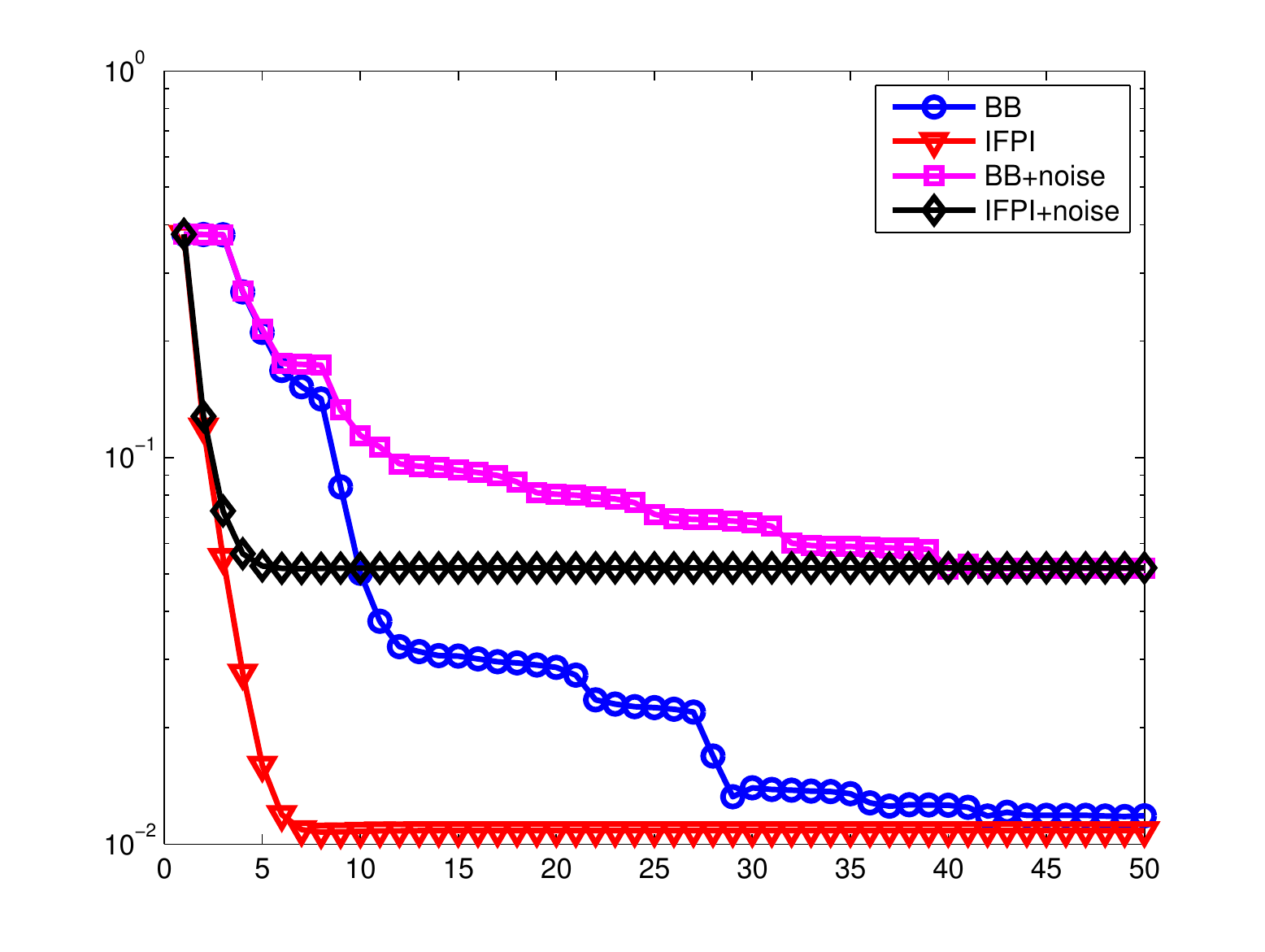}
\end{minipage}
 \caption{\label{fig:4}Specific iterative relative errors $\mathcal{\epsilon}_{\mu_a}$of $\mu_a$, where "BB " and "IFPI" mean the results of BB method and improved fixed-point iteration respectively.}
\end{figure}

\subsubsection{Reconstruction of $\mu_a$ and $\mu_s$ simultaneously}
\label{sec:4.3.2}
We apply BB method to reconstruct absorption and scattering coefficients simultaneously from four measurements. The boundary point sources are placed in the top, bottom, left and right sides respectively. The reconstruction results from noiseless data are showed in Figure~\ref{fig:5}. Corresponding reconstruction results from data added by 5\% Gaussian noise are showed in Figure~\ref{fig:7}. The relative $L^2$ errors of reconstructions of optical coefficients of four templates are tabulated in Table~\ref{tab:1}. From Figure~\ref{fig:5} and Figure~\ref{fig:7}, we can see that BB method retrieves absorption coefficient accurately in piecewise constant and smooth cases, and even is stable to noisy data. In the first, third and fourth templates, some jagged borders of inclusions are reasonable, which are caused by the interpolation of data from fine mesh to coarse mesh. Nevertheless, the reconstruction of scattering coefficient is not satisfactory. In piecewise constant case, the edges between different pieces are blur, which can be seen from the first, third, and fourth template in Figure~\ref{fig:5} and Figure~\ref{fig:7}. This is due to the insensitive scattering coefficient. After all, there is no explicit $\mu_s$ in the formula \eqref{eq:5}.

\begin{table}[H]
 \begin{center}
\begin{tabular}{|c|c|c|c|c|}
\hline
    &  \multicolumn{2}{c|}{noise-free data}             &   \multicolumn{2}{c|}{noisy data}             \\
\hline
    &  $\epsilon_{\mu_a}$          &  $\epsilon_{\mu_s}$  &  $\epsilon_{\mu_a}$     &  $\epsilon_{\mu_s}$  \\
\hline
 1  &  4.61e-2       &  1.52e-1  &  1.09e-1  &  1.81e-1  \\
 2  &  2.65e-2       &  1.53e-1  &   7.71e-2  &  1.64e-1  \\
 3  &  3.17e-2        &  1.09e-1  &   1.29e-1  &  1.37e-1  \\
 4  &  5.32e-2        &  1.26e-1  &   1.28e-1  &  1.49e-1  \\
\hline
\end{tabular}
\end{center}
  \caption{Relative $L^2$ errors on reconstructions by BB method.}
  \label{tab:1}
\end{table}

\begin{figure}[H]
    \centering
\begin{minipage}{0.24\linewidth}
\includegraphics[width=\textwidth]{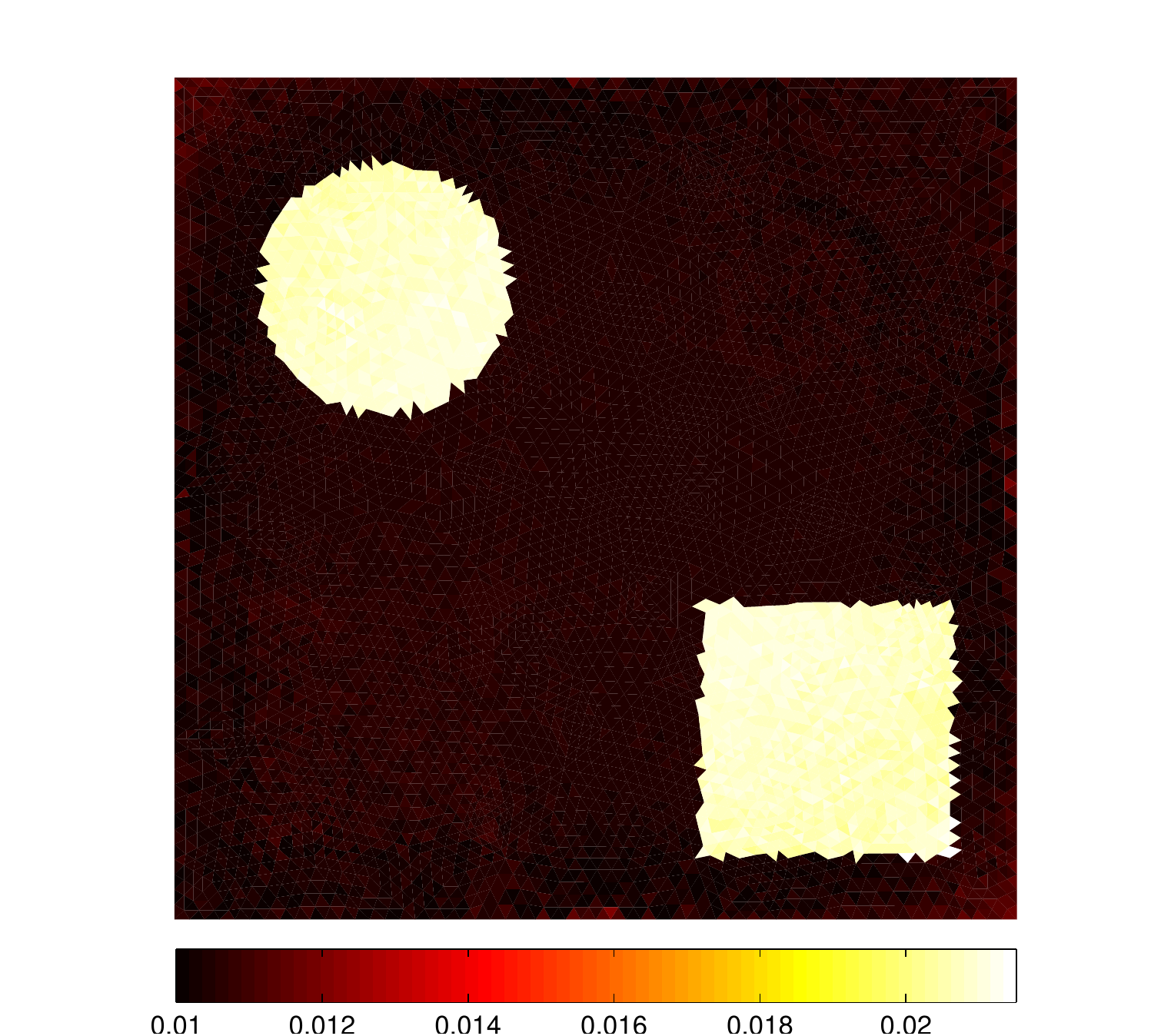}
\end{minipage}
\begin{minipage}{0.24\linewidth}
  \includegraphics[width=\textwidth]{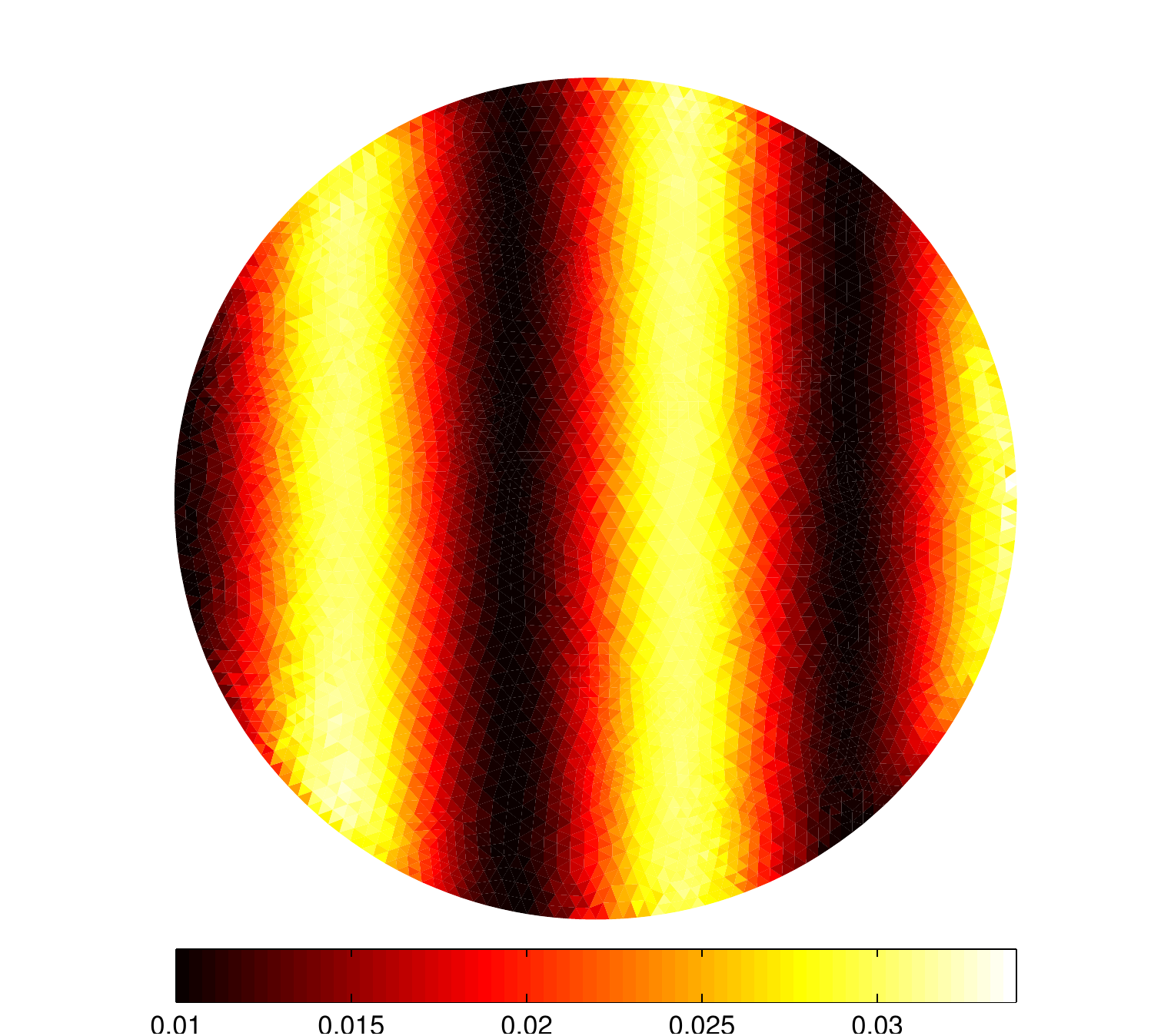}
\end{minipage}
\begin{minipage}{0.24\linewidth}
  \includegraphics[width=\textwidth]{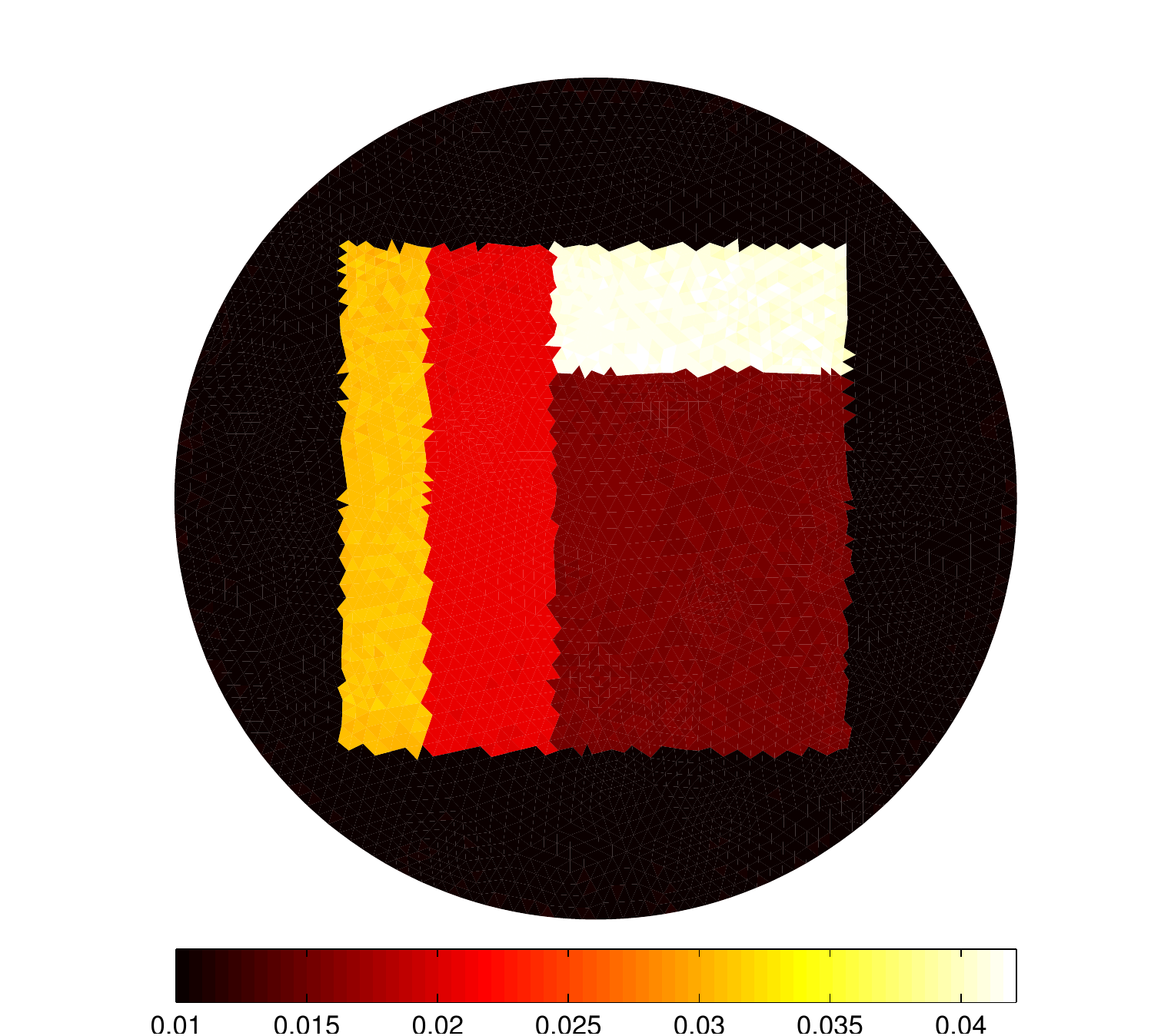}
\end{minipage}
\begin{minipage}{0.24\linewidth}
  \includegraphics[width=\textwidth]{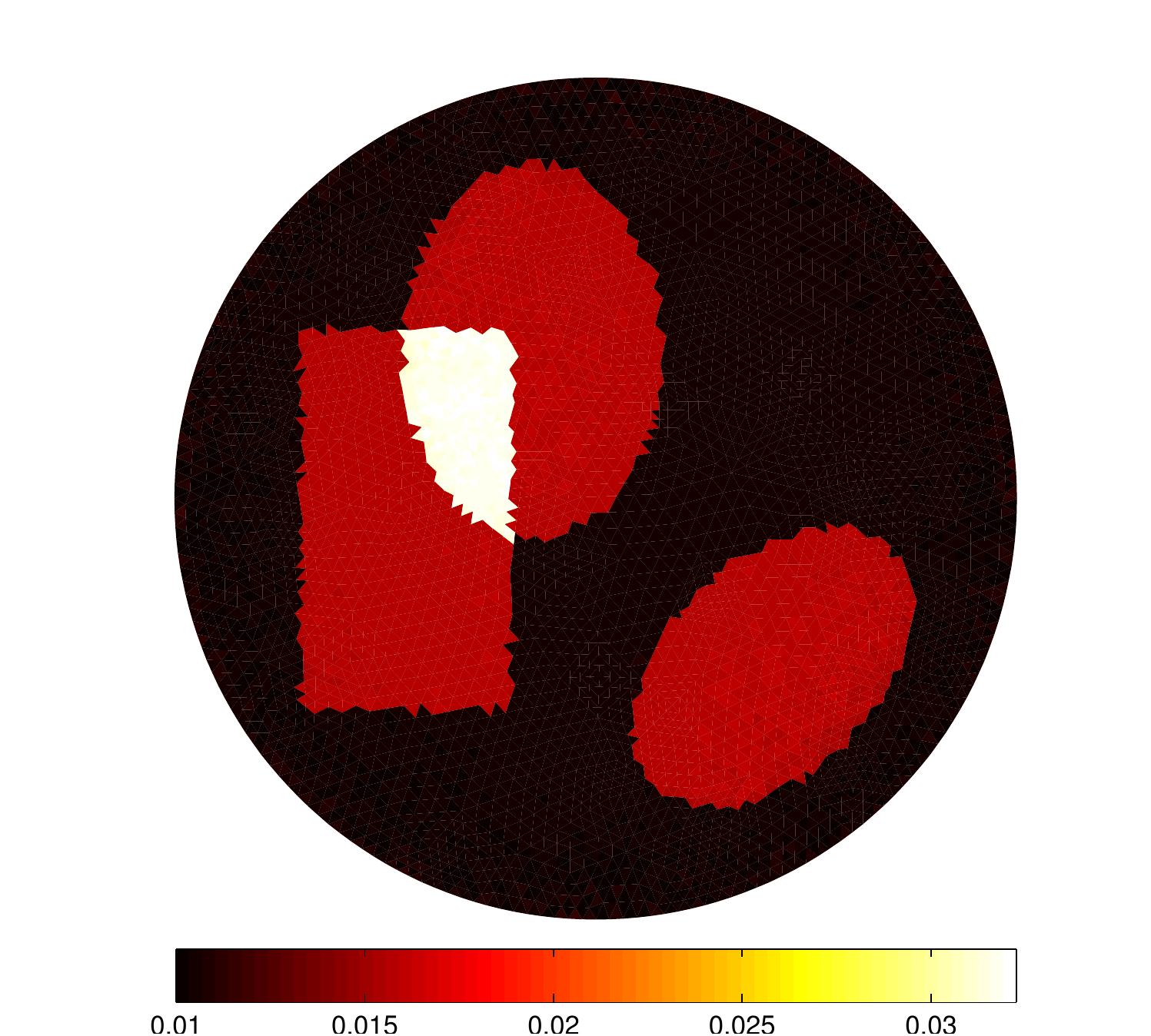}
\end{minipage}
\\
\begin{minipage}{0.24\linewidth}
  \includegraphics[width=\textwidth]{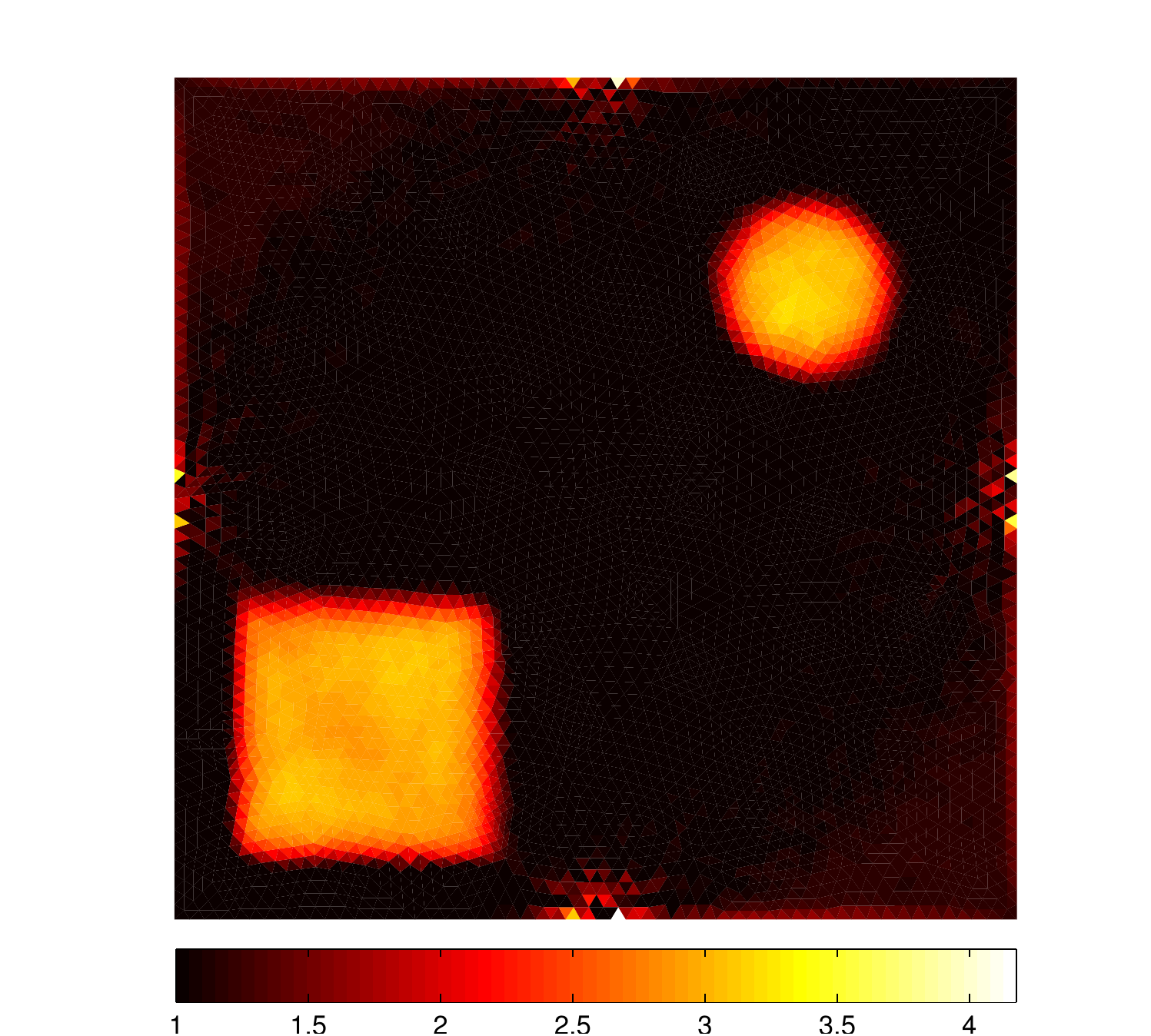}
\end{minipage}
\begin{minipage}{0.24\linewidth}
  \includegraphics[width=\textwidth]{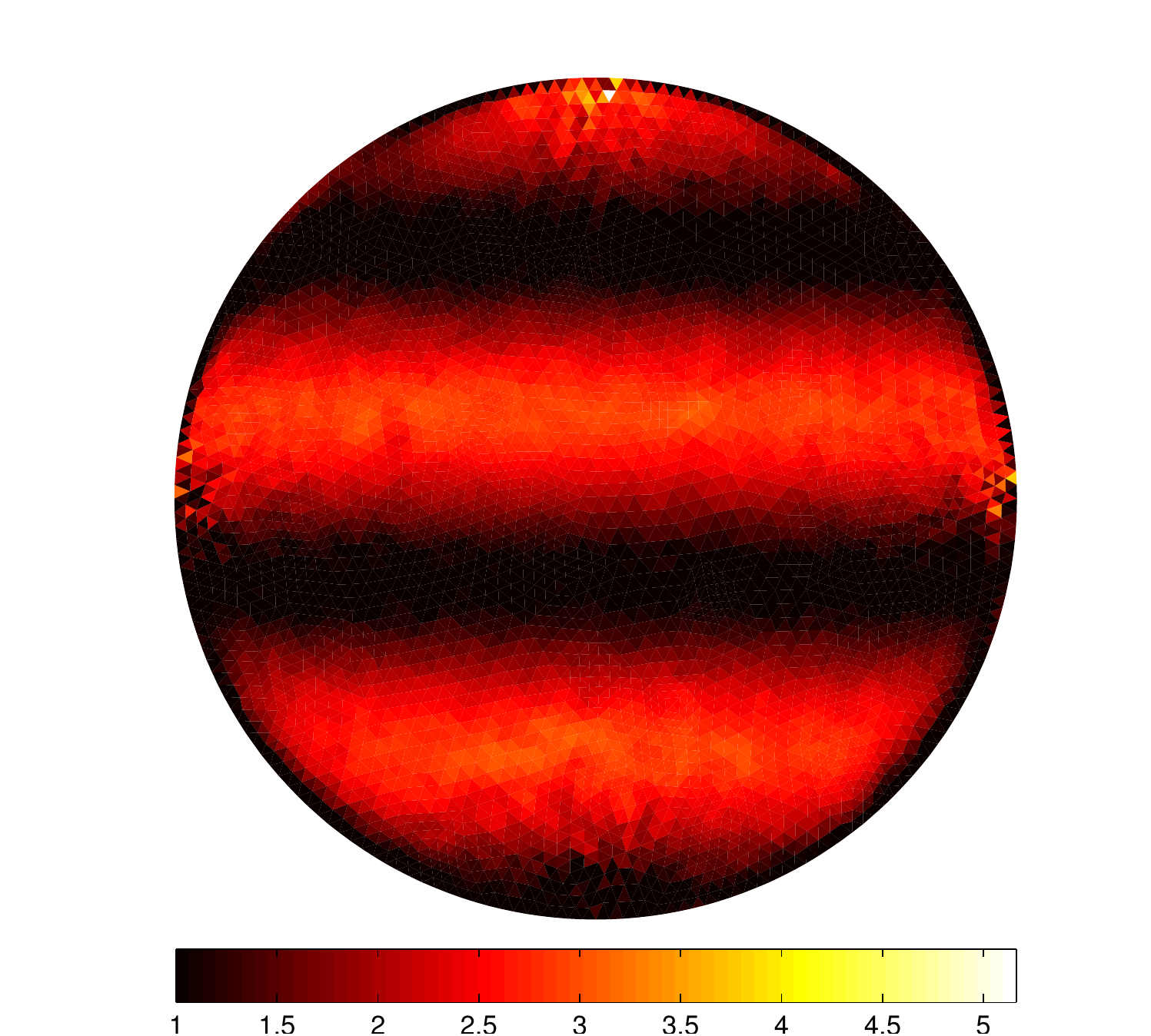}
\end{minipage}
\begin{minipage}{0.24\linewidth}
  \includegraphics[width=\textwidth]{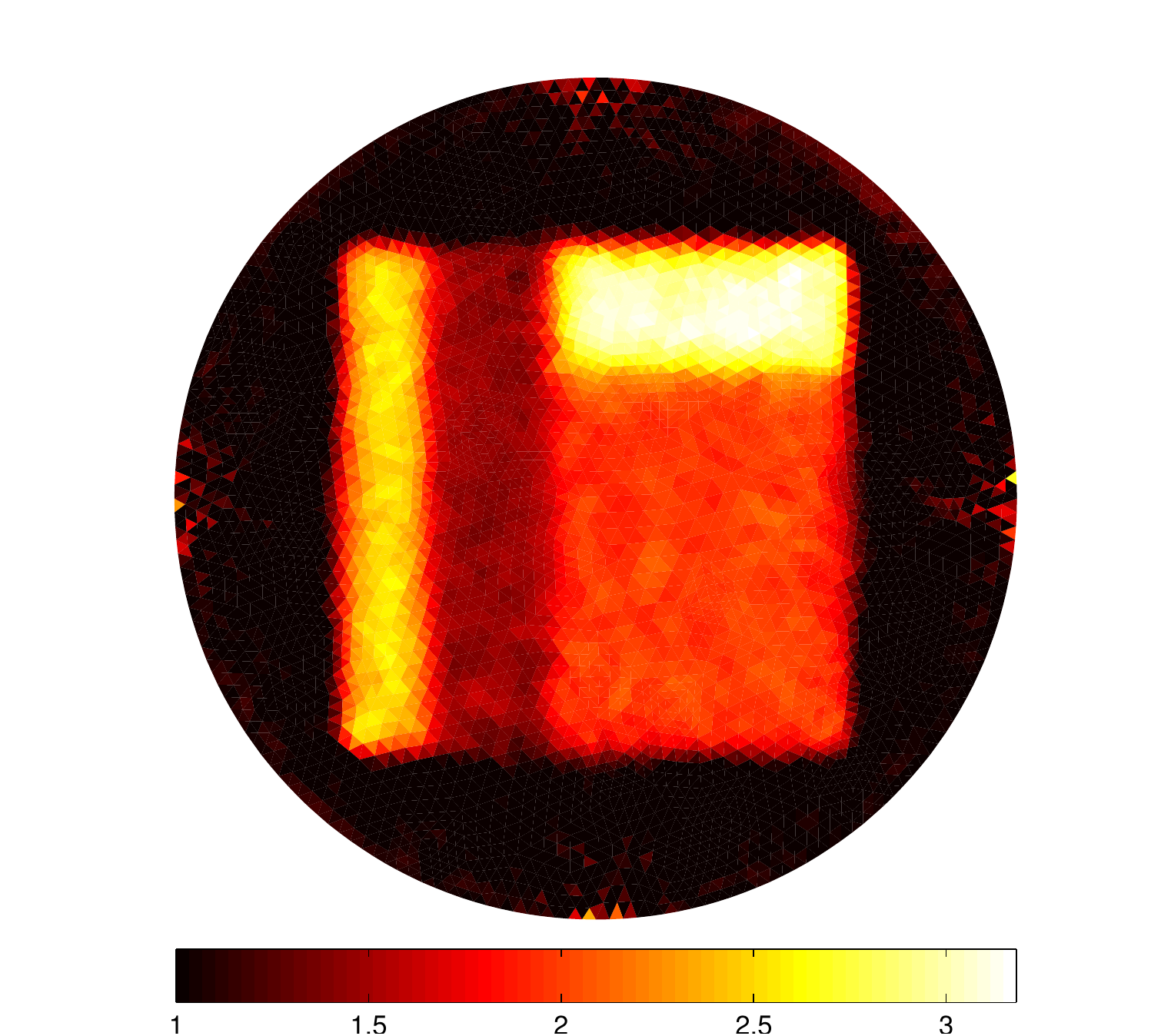}
\end{minipage}
\begin{minipage}{0.24\linewidth}
  \includegraphics[width=\textwidth]{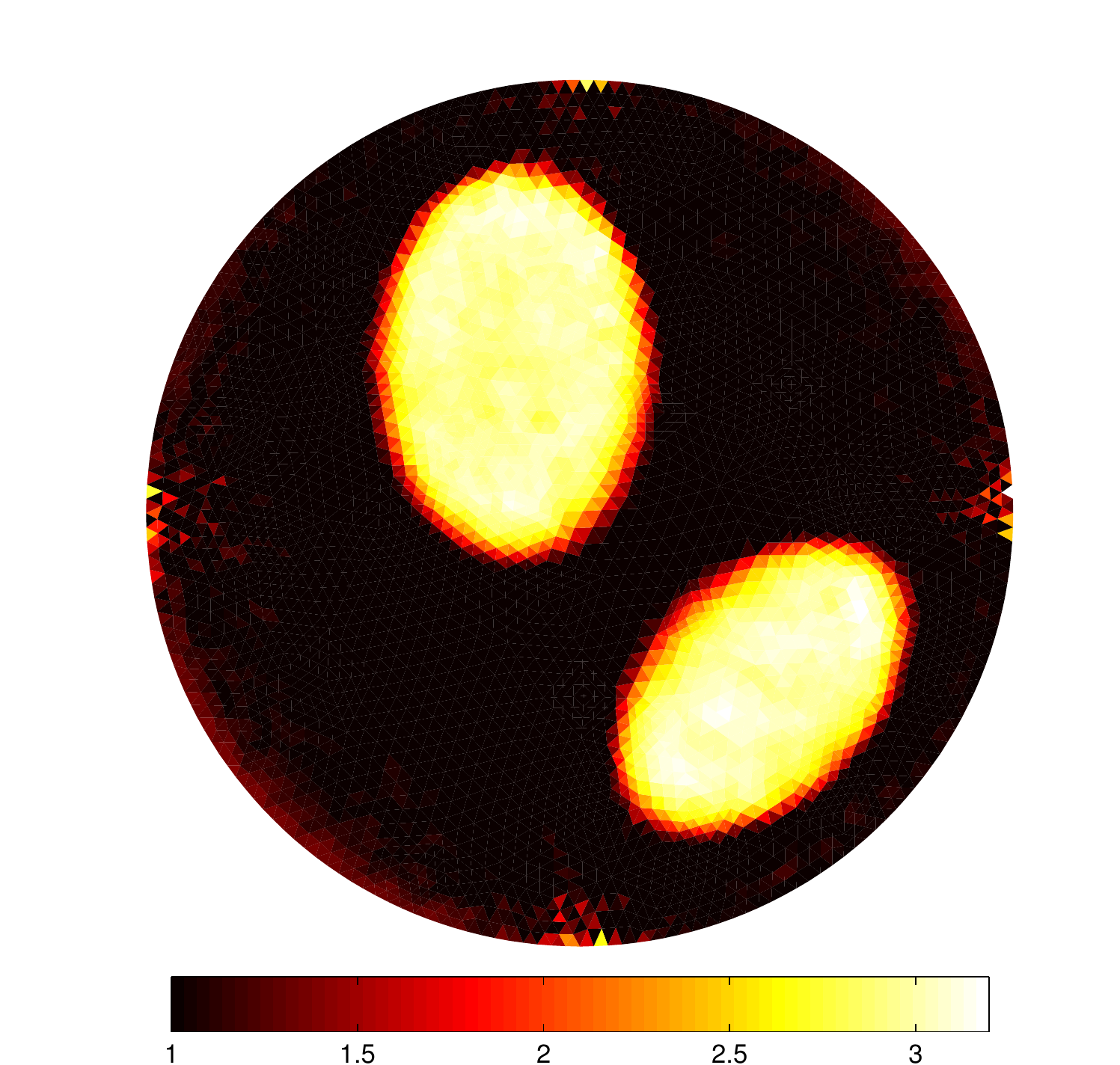}
\end{minipage}
 \caption{\label{fig:5}Reconstructions of optical coefficients by BB method from noiseless data. Top row: reconstruction of $\mu_a$. Bottom row: reconstruction of $\mu_s$.}
\end{figure}

\begin{figure}[H]
    \centering
\begin{minipage}{0.24\linewidth}
\includegraphics[width=\textwidth]{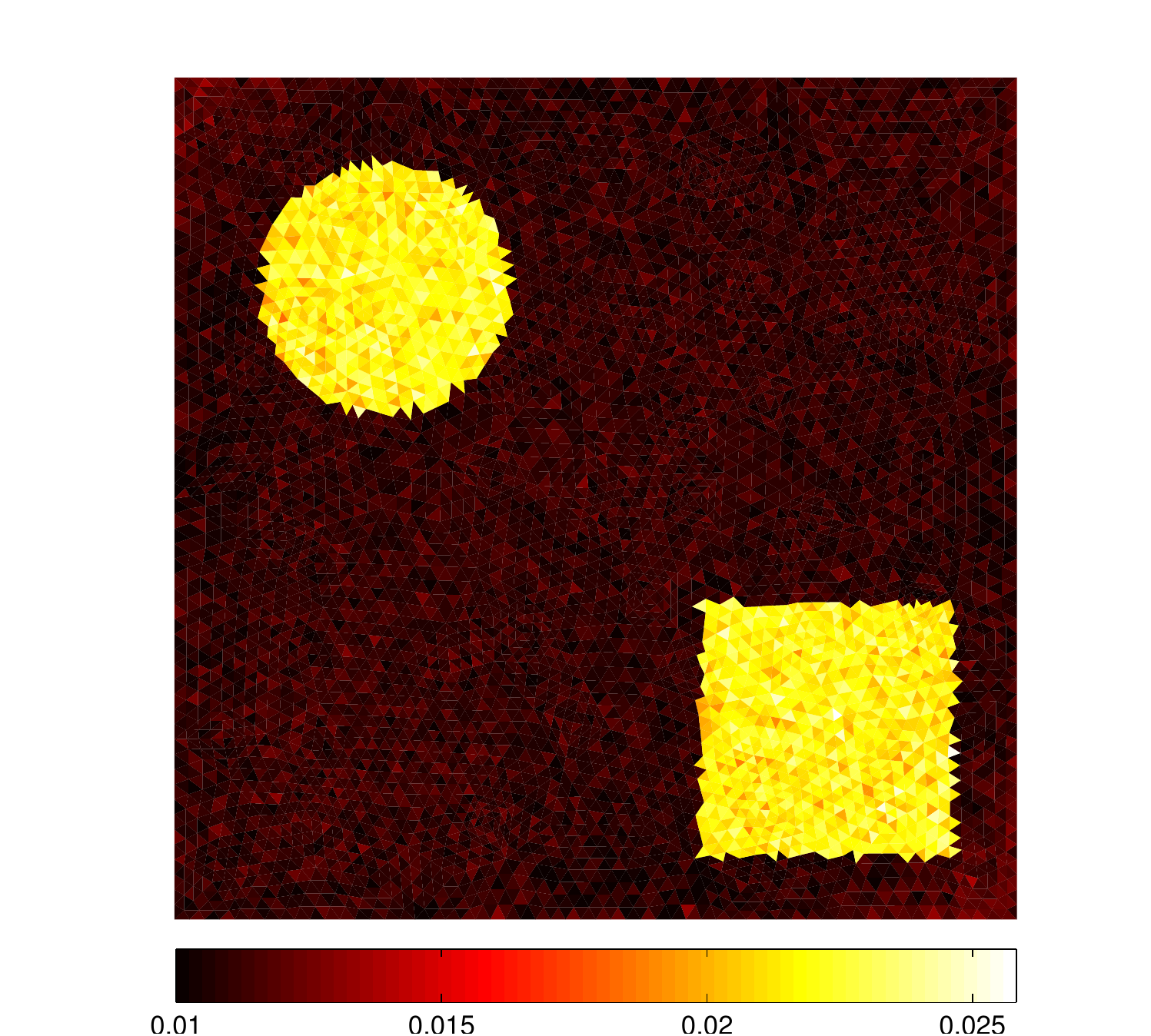}
\end{minipage}
\begin{minipage}{0.24\linewidth}
  \includegraphics[width=\textwidth]{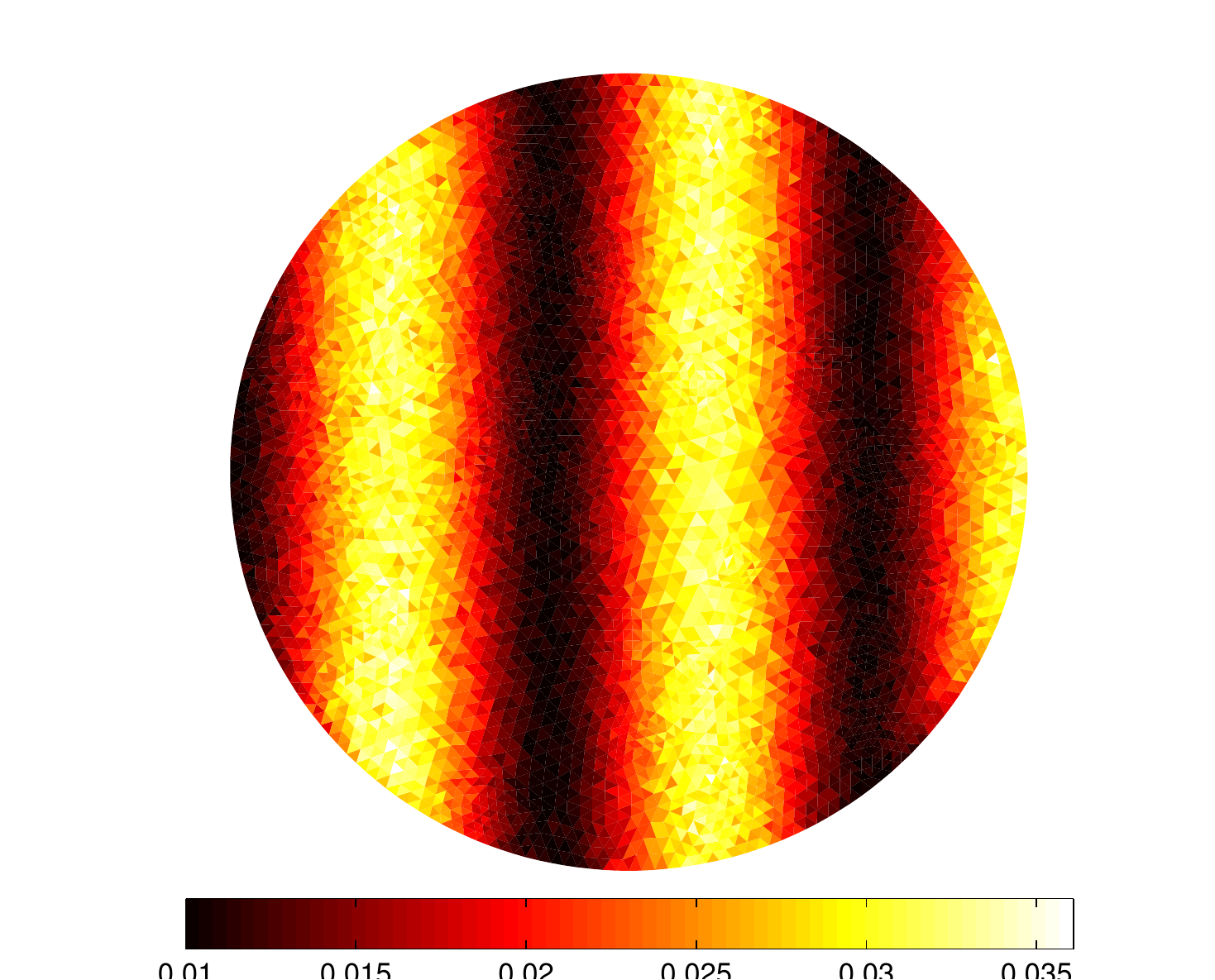}
\end{minipage}
\begin{minipage}{0.24\linewidth}
  \includegraphics[width=\textwidth]{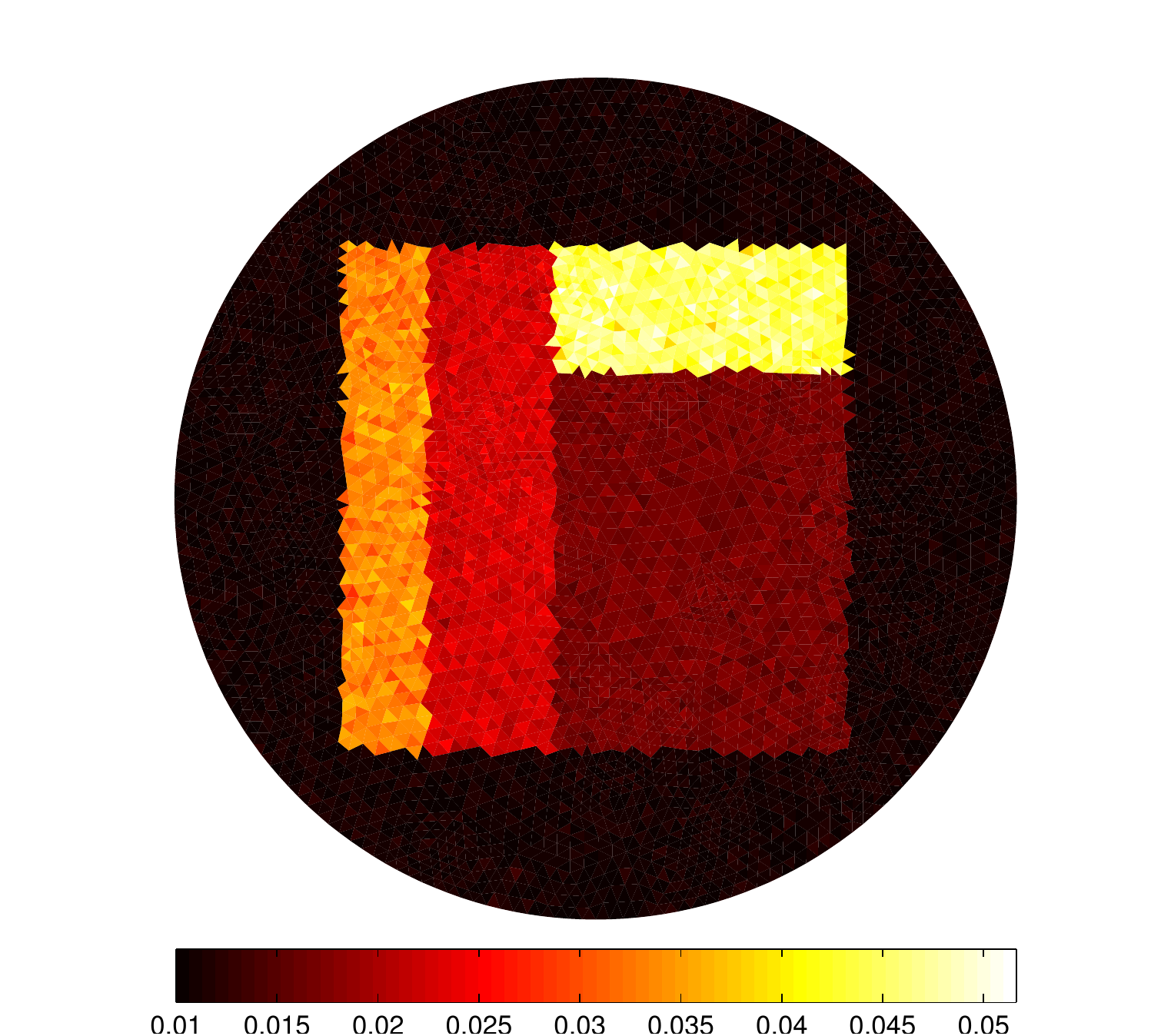}
\end{minipage}
\begin{minipage}{0.24\linewidth}
-  \includegraphics[width=\textwidth]{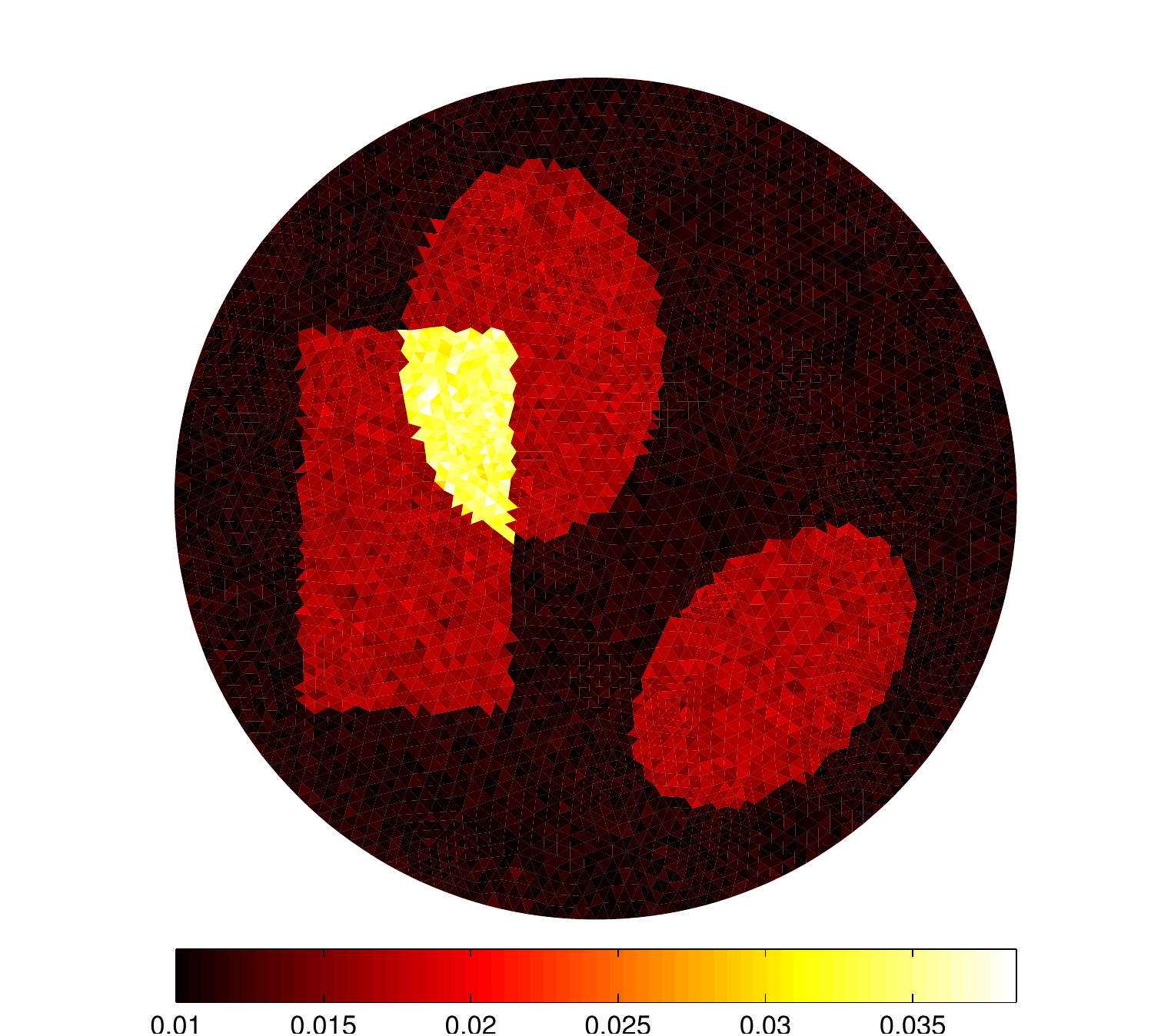}
\end{minipage}
\\
\begin{minipage}{0.24\linewidth}
  \includegraphics[width=\textwidth]{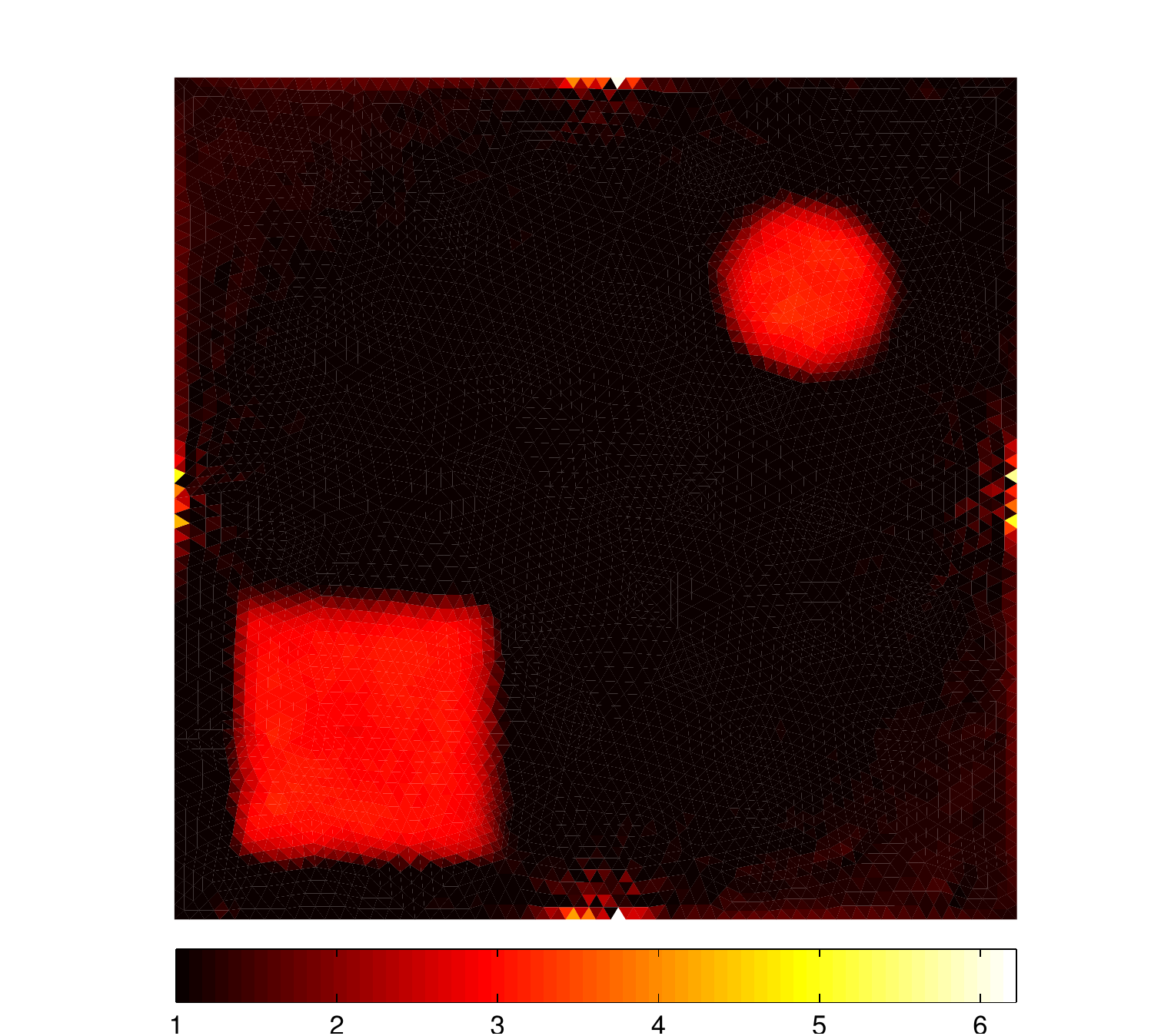}
\end{minipage}
\begin{minipage}{0.24\linewidth}
  \includegraphics[width=\textwidth]{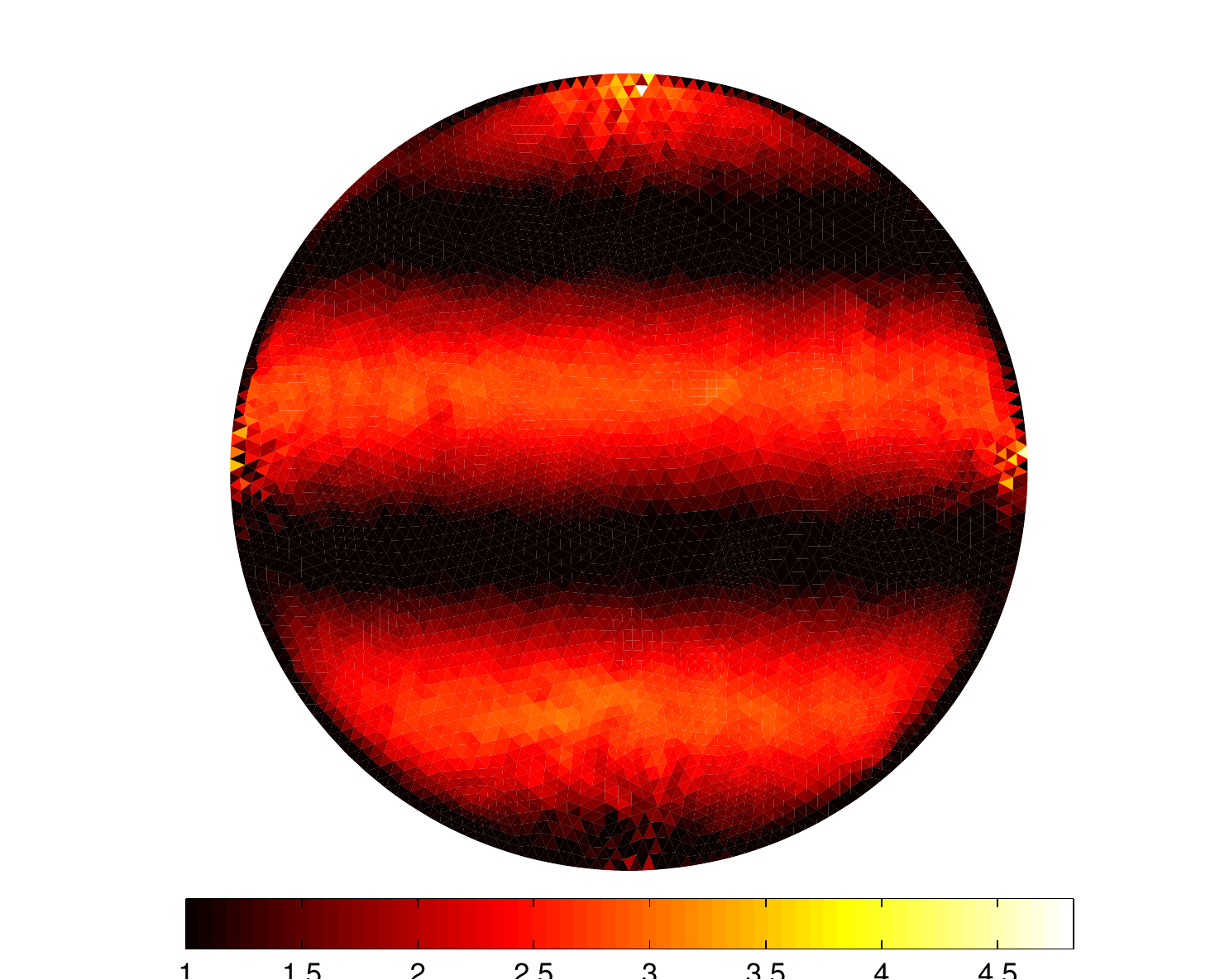}
\end{minipage}
\begin{minipage}{0.24\linewidth}
  \includegraphics[width=\textwidth]{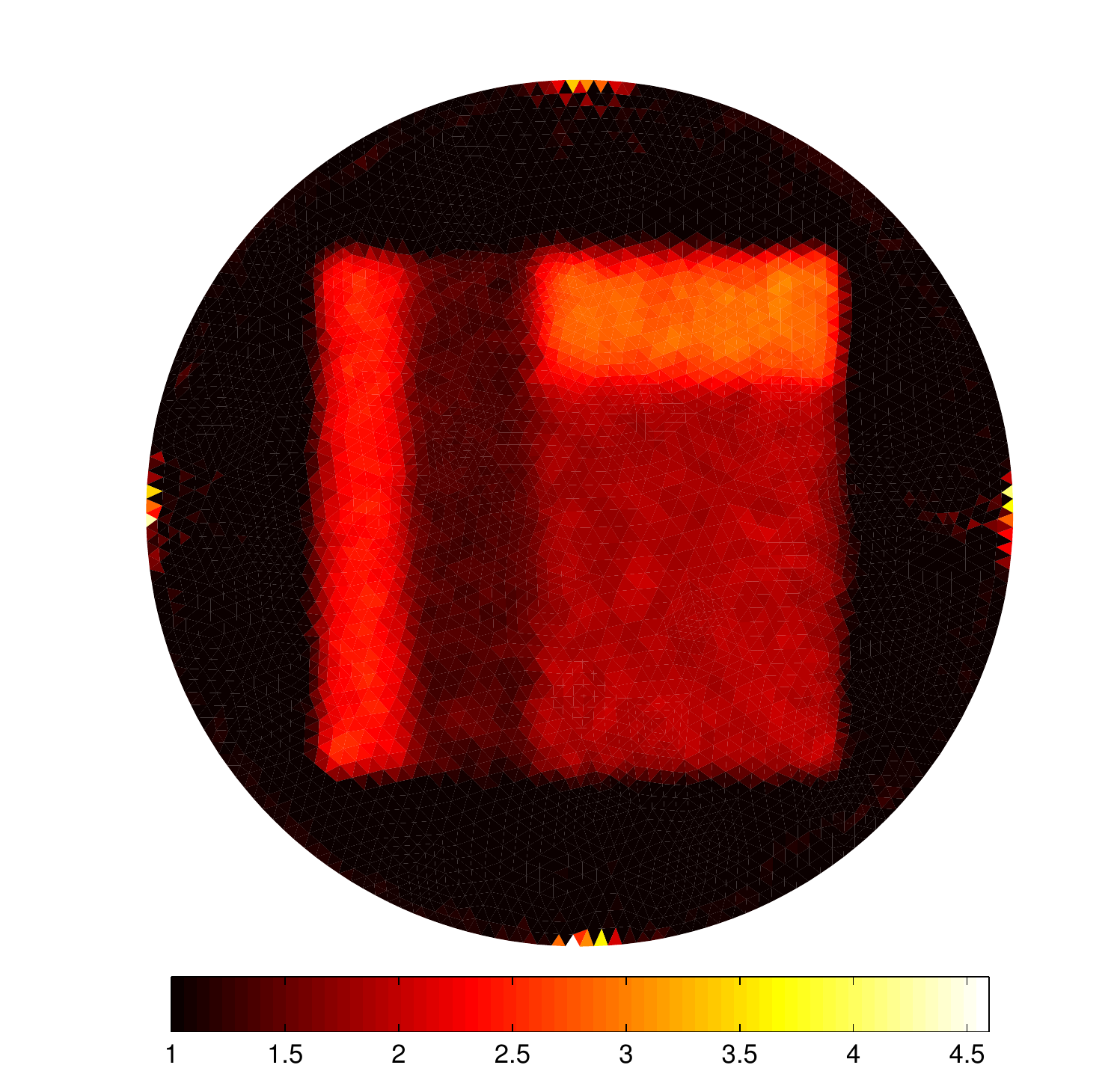}
\end{minipage}
\begin{minipage}{0.24\linewidth}
  \includegraphics[width=\textwidth]{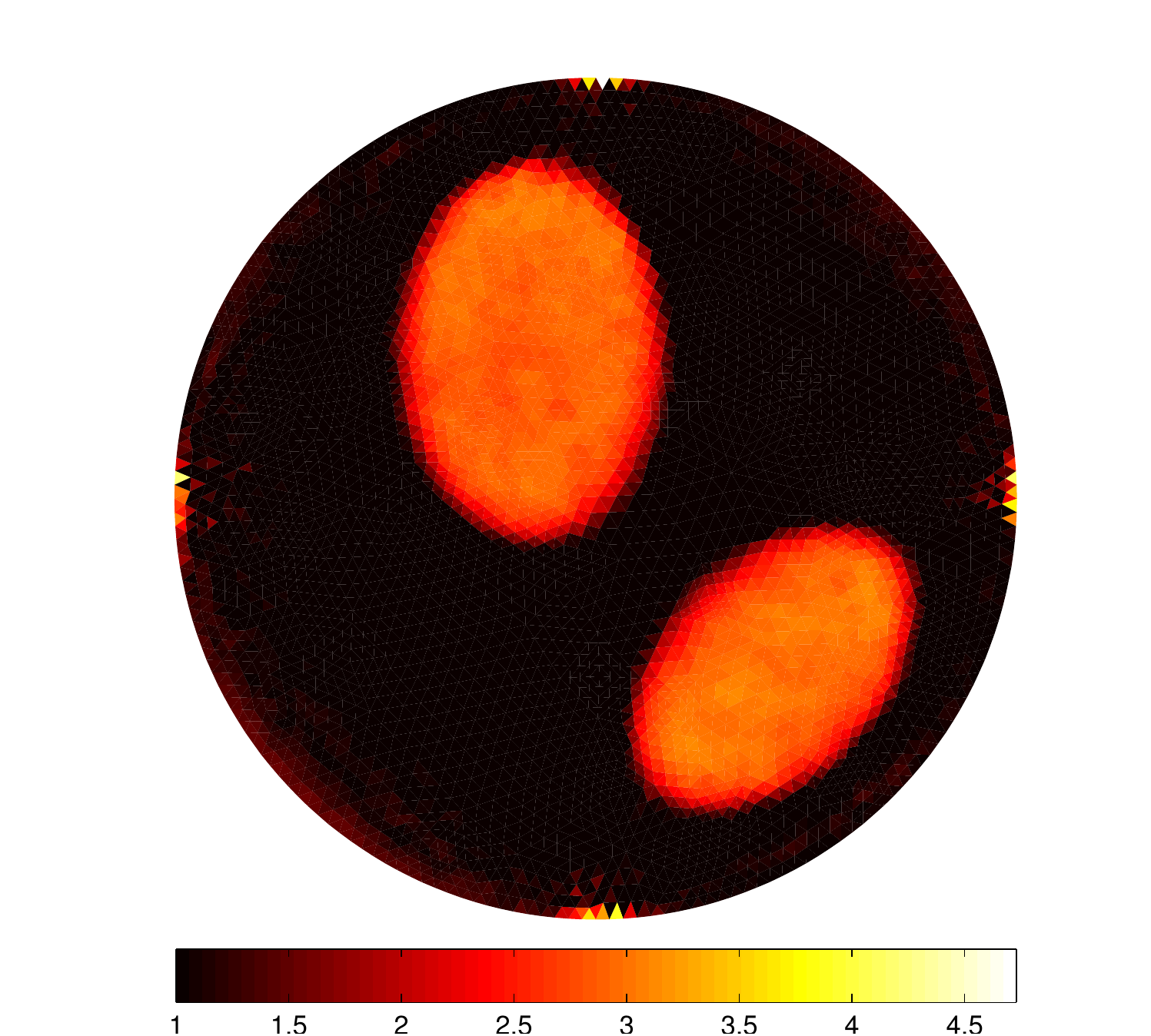}
\end{minipage}
 \caption{\label{fig:7}Reconstructions of optical coefficients by BB method from data added by 5\% Gaussian noise. Top row: reconstruction of $\mu_a$. Bottom row: reconstruction of $\mu_s$.}
\end{figure}

\section{Conclusion}
\label{sec:5}

In this paper, we investigate the reconstruction of absorption and scattering coefficients in QPAT in two cases. Given scattering coefficient, we propose an improved fixed-point iterative method to reconstruct the absorption coefficient and prove its convergence. The advantage of this reconstruction algorithm is that its fast convergence and it does not require the initial guess close to the exact solution. Meanwhile, it does not need to solve adjoint RTE in the optimization approach. For the simultaneous reconstruction of the two coefficients, we apply a state of art BB method, which dose not need linesearch to compute stepsize. Indeed, linesearch involves solving RTE for several times, which is expensive. BB method only use the values of previous two steps to update current estimation. Numerical results show that improved fixed-point iteration can achieve almost the same accuracy with BB method. It takes about quarter the computational time of the BB method. Moreover, the algorithm is stable to noise. In the unknown scattering coefficient case, numerical results show that BB method can obtain quite accurate absorption estimate.  However, the result of the estimation of scattering coefficient is not satisfactory. There is no explicit $\mu_s$ in~\eqref{eq:5}, so it is insensitive to the objective function. There is obvious blur near the borders in piecewise constant templates.

In future, we will design a better error functional to improve the reconstruction of $\mu_s$. Although BB method can recover numerically absorption and scattering coefficients simultaneously, its convergence remains to be studied. Besides, for more practical application, we will do some realistic simulations, such as 3-D templates or real physical QPAT data.

\section{Acknowledgments}
\label{sec:6}

We would like to thank the anonymous referees for their useful comments that help us improve the quality of the paper. We would like to thank Prof.\ Markus Haltmeier (Department of Mathematics, University of Innsbruck Technikestra{\ss}e) and Dr.\ Hao Gao (Wallace H. Coulter Department of Biomedical Engineering, Georgia Institute of Technology) for their helpful advice on RTE solver. We also thank Mr.\ Ji Li (School of Mathematical Sciences, Peking University) for helpful comments on this manuscript draft. This work was supported by NSF grants of China (61421062, 11471024).

\appendix

\section{Convergence of the solver for original RTE}
We use the norm $\|\phi\|:=\sqrt{\sum_{k=0}^{P-1}\omega_k\oint_\Omega\phi_k^2}$ to estimate the accuracy of numerical solution, where $\omega_k$ is defined by $\omega_k=\oint_{\mathcal{S}^{n-1}}L_k(\theta)\rd \theta$ with Lagrangian function $L_k(\theta)$ in $\mathcal{S}^{n-1}$. We denote the solution of original RTE \eqref{eq:2} and angular discretized equation \eqref{eq:34} by $\phi(x,\theta)$ and $[\phi]:=(\phi_k(x))_{k=0}^{P-1}$ respectively, and assume the angular and spatial mesh size are $h_a$ and $h$ respectively. Then there are some convergence results.

\begin{theorem}[\cite{Gao2013}]
  \label{a1}
Assume $\phi(x,\theta)\in C^2(\Omega\times\mathcal{S}^{n-1})$, with vaccum boundary condition and sufficiently fine angular mesh
\begin{equation}
  \label{eq:a1}
  \|\phi-[\phi]\|=\sqrt{\sum_{k=0}^{P-1}\omega_k\oint_{\Omega}(\phi_k(x)-\phi(x,\theta_k))^2\rd x}=O(h_a^2).
\end{equation}
\end{theorem}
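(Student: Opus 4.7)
The natural approach is the standard energy-method consistency-plus-stability analysis for Galerkin discretizations of transport equations. I set $e_k(x) := \phi_k(x) - \phi(x,\theta_k)$ and subtract equation \eqref{eq:34} from the exact RTE \eqref{eq:2} evaluated at $\theta=\theta_k$. This yields
\begin{equation*}
\theta_k \cdot \nabla e_k + (\mu_a+\mu_s)\,e_k - \mu_s \sum_{k'=1}^{P}\omega_{kk'}\,e_{k'} = R_k(x),
\end{equation*}
where the consistency residual
\begin{equation*}
R_k(x) = \mu_s(x)\Bigl[\sum_{k'}\omega_{kk'}\phi(x,\theta_{k'}) - \oint_{\mathcal{S}^{n-1}} f(\theta_k,\theta')\phi(x,\theta')\,\rd\theta'\Bigr]
\end{equation*}
is exactly the error of the piecewise-$P_1$ angular quadrature. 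Because $\phi\in C^2(\Omega\times\mathcal{S}^{n-1})$ and the quadrature weights $\omega_{kk'}$ are the (normalized) values of the smooth Henyey-Greenstein kernel at the nodes, standard composite-trapezoid-type estimates combined with the boundedness of $\mu_s$ supplied by Assumption~\ref{as:1} give $\sup_{x,k}|R_k(x)| = O(h_a^2)$ and hence $\norm{[R]} = O(h_a^2)$.

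Next I would carry out an $L^2$ energy estimate. Multiply the error equation by $\omega_k e_k$, integrate over $\Omega$, and sum over $k$. Green's formula converts the advection term to
\begin{equation*}
\tfrac{1}{2}\sum_k \omega_k \int_{\partial\Omega}(\theta_k\cdot\nu)\,e_k^2\,\rd S,
\end{equation*}
which is nonnegative once the inflow contribution is discarded, since the vacuum boundary condition forces $e_k=0$ on $\Gamma_-$. For the remaining volume part I need the coercivity
\begin{equation*}
\sum_k \omega_k \int_\Omega \Bigl[(\mu_a+\mu_s)e_k^2 - \mu_s e_k \sum_{k'}\omega_{kk'}e_{k'}\Bigr]\,\rd x \;\geq\; \mu_a^0\,\norm{[e]}^2,
\end{equation*}
which uses the row-normalization $\sum_{k'}\omega_{kk'}=1$ and Cauchy-Schwarz to control the scattering cross-term by the diagonal $\mu_s$ term, leaving a clean $\mu_a^0$-weighted $L^2$ bound. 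Bounding the right-hand side by $\norm{[R]}\cdot\norm{[e]}$ via Cauchy-Schwarz then gives $\mu_a^0\norm{[e]}^2 \leq \norm{[R]}\cdot\norm{[e]}$, whence $\norm{[e]} = O(h_a^2)$ as claimed.

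The principal obstacle is establishing the coercivity of the discrete scattering bilinear form. The matrix $\omega_k\omega_{kk'}$ is not symmetric in $(k,k')$ because $\omega_{kk'}$ is obtained by \emph{row-normalizing} the symmetric kernel $f(\theta_k,\theta_{k'})$, so one must first argue that the normalization factor is $1+O(h_a^2)$ on a sufficiently fine angular grid, symmetrize the form up to that $O(h_a^2)$ perturbation, and only then invoke the contraction property of the scattering operator. The ``sufficiently fine angular mesh'' hypothesis enters exactly at this point, together with the fact that the absorption floor $\mu_a\geq\mu_a^0>0$ from Assumption~\ref{as:1}(i) gives a reservoir of dissipation that absorbs both the symmetry defect and the diagonal scattering contribution. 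Once this coercivity is in place, the estimate closes routinely.
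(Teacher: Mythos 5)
There is nothing in the paper to compare your argument against: Theorem~\ref{a1} is stated as an imported result, attributed to \cite{Gao2013}, and the paper gives no proof of it (the sentence following the two appendix theorems only remarks that the update scheme \eqref{eq:35} corresponds to the variational form, so "the convergence proof is done" by citation). Judged on its own merits, your consistency-plus-stability argument is the standard and correct route for the angular semi-discretization, and it almost certainly parallels what is done in the cited reference: the residual $R_k$ is indeed the quadrature error of a trapezoid-type rule applied to the periodic $C^2$ integrand $f(\theta_k,\cdot)\phi(x,\cdot)$, hence $O(h_a^2)$ by Euler--Maclaurin; the inflow boundary term vanishes because $e_k=0$ on $\Gamma_-^k$ under the vacuum condition; and the coercivity closes with constant $\mu_a^0$ once one uses $\sum_{k'}\omega_{kk'}=1$ together with the column-sum identity $\sum_k\omega_k\omega_{kk'}=\omega_{k'}$ in the AM--GM bound for the cross term. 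You correctly flag that last identity as the delicate point. Two small remarks: first, in the two-dimensional uniform-angle setting of this paper the row sums $\sum_{k''}f(\theta_k,\theta_{k''})$ are independent of $k$ by rotational invariance of the Henyey--Greenstein kernel, so $\omega_k\omega_{kk'}$ is \emph{exactly} symmetric and no $O(h_a^2)$ symmetrization defect arises there; your perturbative fallback is only needed for nonuniform or higher-dimensional angular meshes, where you should state explicitly that the defect term $O(h_a^2)\mu_s^{\text{upper}}\norm{[e]}^2$ is absorbed by $\mu_a^0\norm{[e]}^2$ for $h_a$ small, which is precisely where the "sufficiently fine angular mesh" hypothesis is consumed. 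Second, the integration by parts tacitly assumes the semi-discrete solution $(\phi_k)$ exists and is regular enough for Green's formula; a one-line appeal to the well-posedness of the coupled transport system (e.g.\ via Theorem~\ref{th:1} applied direction-by-direction with a fixed-point argument in the scattering coupling) would make the energy estimate airtight.
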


The convergence of spatial discrete scheme \eqref{eq:35} is discussed next. Assume triangulation $T_h$ with $h=\sup_{S\in T_h} diam(S)$, $V_{h}^d:=\{v:v|_{S\in T_h}\in P^d(S),v|_{\Omega\backslash S}=0\}$, where $P^d(S)$ is the space of $d$-degree polynomials. Let $(u,v)_S:=\int_S uv \rd x$, $\Gamma_h:=(\cup_{S\in T_h}\partial S)\backslash \partial \Omega$ and $\partial \Omega^k_{-(+)}:=\{x\in \partial \Omega:\nu\cdot \theta_k\leq (>)0\}$. Then the convergence result is as follows.

\begin{theorem}[\cite{Gao2013}]
  With vaccum boundary condition, if $\phi_k^h\in V^d_h$ satisfies
  \begin{equation}
    \label{eq:a2}
    A_k(\phi^h_k,v)=(q,v)+\left<q_b,v\right>_{\partial \Omega_-^k},\ \forall v\in V^d_h,\ k=0,1,\dots,P-1,
  \end{equation}
where
\begin{equation*}
  A_k(\phi_k^h,v)=\sum_S(\theta_k\cdot\nabla \phi_k^h+(\mu_a+\mu_s)\phi_k^h-\mu_s\sum_{k'}\omega_{kk'}\phi_{k'}^h,v)_S+\left<\phi_k^{h^+}-\phi_k^{h^-},v^+\right>_{\Gamma_h^k}+\left<\phi_k^h,v\right>_{\partial \Omega^k_-},
\end{equation*}
with $\phi_k^{h^\pm}=\lim_{\epsilon\rightarrow 0^\pm}\phi_k^h(x+\epsilon \theta_k)$, $\left<u,v\right>_{\Gamma_h^k}=\int_{\Gamma_h}uv|\theta_k\cdot \nu|$, and $\left<u,v\right>_{\partial \Omega^k_-}=\int_{\partial \Omega^k_-}uv|\theta_k\cdot \nu|$, then $\phi^h:=(\phi_k^h)\in (V_h^d)^P$ satisfies
\begin{equation}
  \label{eq:a3}
  \norm{[\phi]-\phi^h}\leq Ch^{d+1/2}|\phi|_{d+1},
\end{equation}
where the $|\phi|^2_{d+1}:=\sum_{k=0}^{P-1}\omega_k|\phi_k|^2_{H^{d+1}}$.
\end{theorem}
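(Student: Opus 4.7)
The proof should follow the classical Johnson--Pitk\"aranta style error analysis for the discontinuous Galerkin discretization of a linear transport equation, adapted to handle the scattering coupling across directions. My plan is to combine three ingredients: Galerkin orthogonality, coercivity of the bilinear form $A_k$ in a mesh-dependent DG norm, and the standard interpolation estimate on $V_h^d$.

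First I would verify Galerkin orthogonality. The semidiscrete-in-angle solution $[\phi]=(\phi(\cdot,\theta_k))_k$ is continuous across interior faces, so the jump terms $\langle \phi_k^{h^+}-\phi_k^{h^-},v^+\rangle_{\Gamma_h^k}$ vanish when $[\phi]$ is inserted, and after element-wise integration by parts it satisfies the same variational identity as $\phi^h_k$. This gives
\begin{equation*}
A_k\bigl([\phi]_k-\phi^h_k,\,v\bigr)=0,\qquad \forall v\in V_h^d,\ k=0,\dots,P-1.
\end{equation*}
Next I would introduce the DG energy norm in the $k$th direction,
\begin{equation*}
\vertiii{w}_k^{\,2}:=\bigl\|\sqrt{\mu_a+\mu_s}\,w\bigr\|_{L^2(\Omega)}^{2}+\tfrac{1}{2}\bigl\|w^{+}-w^{-}\bigr\|_{\Gamma_h^k,|\theta_k\cdot\nu|}^{2}+\tfrac{1}{2}\bigl\|w\bigr\|_{\partial\Omega^k_-\cup\partial\Omega^k_+,|\theta_k\cdot\nu|}^{2},
\end{equation*}
and prove coercivity for the coupled system, namely that there is $c>0$ with
\begin{equation*}
\sum_{k=0}^{P-1}\omega_k A_k(w_k,w_k)\;\ge\; c\sum_{k=0}^{P-1}\omega_k \vertiii{w_k}_k^{\,2}\qquad\text{for all } w\in (V_h^d)^P.
\end{equation*}
The diagonal part is standard: an element-wise integration by parts of $(\theta_k\cdot\nabla w_k,w_k)_S$ produces the jump and boundary contributions with the correct signs thanks to the upwind convention built into $A_k$.

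The coupling term $-\mu_s\sum_{k'}\omega_{kk'}w_{k'}$ is the only place where the argument departs from the scalar transport case, and this is the step I expect to be the main obstacle. I would test it against $w_k$, weight by $\omega_k$, and apply Cauchy--Schwarz to obtain a bound by $\sum_k\omega_k\mu_s\|w_k\|_0^2$ after exploiting the fact that the row sums $\sum_{k'}\omega_{kk'}=1$ of the discrete scattering kernel are exactly the normalization from \eqref{eq:34}, together with a symmetrization using $\omega_k\omega_{kk'}\le C\omega_{k'}\omega_{k'k}$ obtained from the Henyey--Greenstein form of $f$. This reduces the scattering cross-term to a fraction of $\sum_k \omega_k\|(\mu_a+\mu_s)^{1/2}w_k\|_0^2$, which is absorbed into the left-hand side provided $\mu_a\ge\mu_a^0>0$ as in Assumption \ref{as:1}. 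Once this is done, the standard continuity estimate for $A_k$ in the DG norm (with an added $h^{-1/2}$ factor on the jump/boundary terms when the argument is not discrete) can be stated without surprise.

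With coercivity and continuity in hand, I would close the argument by a C\'ea-type splitting $[\phi]_k-\phi_k^h=([\phi]_k-\pi_h[\phi]_k)+(\pi_h[\phi]_k-\phi_k^h)$, where $\pi_h$ is the local $L^2$-projection onto $V_h^d$. Galerkin orthogonality transfers the discrete error onto the interpolation error, and classical interpolation and trace estimates give
\begin{equation*}
\bigl\|[\phi]_k-\pi_h[\phi]_k\bigr\|_{L^2(S)}\le C h^{d+1}|\phi|_{H^{d+1}(S)},\qquad \bigl\|[\phi]_k-\pi_h[\phi]_k\bigr\|_{L^2(\partial S,|\theta_k\cdot\nu|)}\le C h^{d+1/2}|\phi|_{H^{d+1}(S)}.
\end{equation*}
The face terms set the global rate at $h^{d+1/2}$ (the half-order loss characteristic of upwind DG for transport), while the volumetric terms contribute at the higher rate $h^{d+1}$. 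Summing over elements and directions, weighted by $\omega_k$, and using the definition of $|\phi|_{d+1}$ delivers \eqref{eq:a3}. The only mildly delicate point in this last step is to make sure the scattering cross-term in the residual $A_k([\phi]_k-\pi_h[\phi]_k,\cdot)$ is handled simultaneously in all directions, which is done exactly as in the coercivity step.
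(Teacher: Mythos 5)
This theorem is quoted in the paper directly from \cite{Gao2013} (Gao and Zhao, \emph{Math.\ Comp.}); the paper itself supplies no proof, so there is nothing internal to compare your argument against. That said, your sketch is the standard Johnson--Pitk\"aranta upwind-DG analysis, which is also the route taken in the cited reference: Galerkin orthogonality against the angularly semidiscrete solution $[\phi]$, coercivity of $\sum_k\omega_k A_k(\cdot,\cdot)$ in a mesh-dependent norm, and the $h^{d+1/2}$ rate coming from the face/trace terms of the $L^2$-projection error. The one place where your outline is looser than it should be is the absorption of the scattering cross-term: if your symmetrization only yields $\omega_k\omega_{kk'}\leq C\,\omega_{k'}\omega_{k'k}$ with some $C>1$, the Cauchy--Schwarz bound on $\mu_s\sum_k\omega_k(\sum_{k'}\omega_{kk'}w_{k'},w_k)$ becomes $C\mu_s\sum_k\omega_k\|w_k\|_0^2$, which can exceed the available $(\mu_a+\mu_s)$ mass and destroy coercivity when $\mu_a^0$ is small. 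You need the sharp constant $C=1$, which does hold here: for a uniform angular mesh the H--G kernel gives $\omega^0_{kk'}=\omega^0_{k'k}$ with equal row sums, so the normalized matrix $(\omega_{kk'})$ is doubly stochastic and Jensen plus the exact identity $\sum_k\omega_k\omega_{kk'}=\omega_{k'}$ bounds the cross-term by exactly $\mu_s\sum_k\omega_k\|w_k\|_0^2$, leaving the $\mu_a\geq\mu_a^0>0$ margin intact. With that point tightened, your argument delivers \eqref{eq:a3} as stated.
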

Combining \eqref{eq:a1} and \eqref{eq:a3}, when $d=1$ we can obtain the estimate
\begin{equation}
  \label{eq:a03}
  \norm{\phi-\phi^h}=O(h_a^2)+O(h^{3/2}).
\end{equation}

Obviously, the corresponding update scheme of \eqref{eq:a2} is \eqref{eq:35} when $d=1$. Therefore the convergence proof of \eqref{eq:35} is done.

\section{Convergence of the solver for adjoint RTE}
As for adjoint RTE \eqref{eq:31}, we denote the solution of it and its angular discretized equation
\begin{equation}
  \label{eq:a4}
  -\theta_k\cdot \nabla \tilde{\phi}_k+(\mu_a+\mu_s)\tilde{\phi}_k=\mu_s\sum_{k'=1}^P\omega_{kk'}\tilde{\phi}_{k'}+q_k, 1\leq k\leq P
\end{equation}
by $\tilde{\phi}(x,\theta)$ and $[\tilde{\phi}]:=(\tilde{\phi}_k(x))_{k=0}^{P-1}$ respectively. Through similar discussion, we obtain following results.

\begin{theorem}
  \label{a3}
Assume $\tilde{\phi}(x,\theta)\in \mathcal{C}^2(\Omega\times\mathcal{S}^{n-1})$, with vaccum boundary condition and sufficiently fine angular mesh
\begin{equation}
  \label{eq:a5}
  \|[\tilde{\phi}]-\tilde{\phi}\|=\sqrt{\sum_{k=0}^{P-1}\omega_k\oint_{\Omega}(\tilde{\phi}_k(x)-\tilde{\phi}(x,\theta_k))^2\rd x}=O(h_a^2).
\end{equation}
\end{theorem}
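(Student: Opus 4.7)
The cleanest route is to exploit the symmetry between the original RTE and its adjoint under the reflection $\theta \mapsto -\theta$, so that Theorem~\ref{a1} applies off the shelf. Setting $\eta := -\theta$ and $\psi(x,\eta) := \tilde{\phi}(x,-\eta)$, equation \eqref{eq:31} becomes a standard stationary RTE of the form \eqref{eq:2}:
$$
\bigl[\eta\cdot\nabla_x + (\mu_a+\mu_s)\bigr]\psi(x,\eta) - \mu_s(x)\oint_{\mathcal{S}^{n-1}}\tilde{f}(\eta,\eta')\psi(x,\eta')\rd\eta' = \tilde{q}(x),
$$
with kernel $\tilde{f}(\eta,\eta') := f(-\eta,-\eta')$ (still in $L^1$, and in fact equal to $f$ for the H--G phase function, which depends only on $\eta\cdot\eta'$), source $\tilde{q}$ inherited from the right-hand side of \eqref{eq:31}, and vacuum inflow condition $\psi|_{\Gamma_-}=0$ (equivalent to $\tilde{\phi}|_{\Gamma_+}=0$). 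Since the discrete direction set $\{\theta_k\}$ and its reflection $\{-\theta_k\}$ share the same uniform angular spacing $h_a$, the weights $\omega_k$ are preserved under relabeling, and the $C^2$ regularity of $\tilde{\phi}$ transfers to $\psi$. Hence Theorem~\ref{a1} applied to $\psi$ gives $\|[\psi]-\psi\|=O(h_a^2)$, which is identical to the claim $\|[\tilde{\phi}]-\tilde{\phi}\|=O(h_a^2)$ after relabeling.

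If one instead prefers to repeat the argument of \cite{Gao2013} directly in the original coordinates, the structure is the classical consistency-plus-stability pattern. Define $e_k(x) := \tilde{\phi}_k(x) - \tilde{\phi}(x,\theta_k)$. Substituting $\tilde{\phi}(x,\theta_k)$ into the discrete system \eqref{eq:a4} and subtracting the exact equation evaluated at $\theta_k$ shows that $(e_k)$ satisfies the same angular-discretized adjoint RTE with zero outflow boundary data and right-hand side
$$
r_k(x) = \mu_s(x)\Bigl[\sum_{k'=1}^P \omega_{kk'}\tilde{\phi}(x,\theta_{k'}) - \oint_{\mathcal{S}^{n-1}} f(\theta_k,\theta')\tilde{\phi}(x,\theta')\rd\theta'\Bigr].
$$
Because $\tilde{\phi}\in C^2(\Omega\times\mathcal{S}^{n-1})$ and the Lagrangian basis $\{L_k\}$ is piecewise linear in angle, this is a trapezoidal-type quadrature error of order $h_a^2$ pointwise in $x$, uniformly over $\Omega$, so $\|r\|=O(h_a^2)$.

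It then remains to bootstrap this residual into an $O(h_a^2)$ bound on $e_k$ through a uniform-in-$h_a$ $L^2$ stability estimate $\|e\|\lesssim\|r\|$ for the discretized adjoint operator. Such a bound follows from an energy estimate: multiply the equation for $e_k$ by $e_k$, integrate by parts against $-\theta_k\cdot\nabla$; the vanishing of $e_k$ on $\partial\Omega^k_+$ kills the bad-sign surface term, the $\partial\Omega^k_-$ surface contribution has the correct sign, and the scattering contraction $\oint f(\theta,\theta')\rd\theta'=1$ combined with Assumption~\ref{as:1} allows the absorption term to dominate the residual scattering coupling. The main obstacle is precisely this stability step; however, it is the direct mirror image of the stability estimate used for the original RTE in Theorem~\ref{a1}, and is supplied automatically by the $\theta\mapsto-\theta$ change of variables discussed above, so no essentially new technical ingredient is required.
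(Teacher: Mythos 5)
Your proposal is correct and matches the paper's approach: the paper gives no separate proof of this theorem, asserting it ``through similar discussion'' by direct analogy with Theorem~\ref{a1} for the original RTE (itself quoted from \cite{Gao2013}), which is precisely the symmetry your $\theta\mapsto-\theta$ reflection makes explicit. Your formalization of that reflection (preservation of the H--G kernel, the uniform angular mesh and weights, and the swap of $\Gamma_+$ and $\Gamma_-$), together with the consistency-plus-stability sketch, supplies more detail than the paper itself does, and nothing in it is amiss.
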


\begin{theorem}
  With vaccum boundary condition, if $\tilde{\phi}_k^h\in V^d_h$ satisfies
  \begin{equation}
    \label{eq:a5}
    \tilde{A}_k(\tilde{\phi}^h_k,v)=(q,v)+\left<q_b,v\right>_{\partial \Omega_+^k},\ \forall v\in V^d_h,\ k=0,1,\dots,P-1,
  \end{equation}
where
\begin{equation*}
 \tilde{A}_k(\tilde{\phi}_k^h,v)=\sum_S(-\theta_k\cdot\nabla \tilde{\phi}_k^h+(\mu_a+\mu_s)\tilde{\phi}_k^h-\mu_s\sum_{k'}\omega_{kk'}\tilde{\phi}_{k'}^h,v)_S+\left<\tilde{\phi}_k^{h^-}-\tilde{\phi}_k^{h^+},v^-\right>_{\Gamma_h^k}+\left<\tilde{\phi}_k^h,v\right>_{\partial \Omega^k_+},
\end{equation*}
 then $\tilde{\phi^h}:=(\tilde{\phi}_k^h)\in (V^d_h)^P$ satisfies
\begin{equation}
  \label{eq:a6}
  \norm{[\tilde{\phi}]-\tilde{\phi}^h}\leq Ch^{d+1/2}|\tilde{\phi}|_{d+1},
\end{equation}
where the $|\tilde{\phi}|^2_{d+1}:=\sum_{k=0}^{P-1}\omega_k|\tilde{\phi_k}|^2_{H^{d+1}}$.
\end{theorem}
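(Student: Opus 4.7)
The plan is to reduce the adjoint estimate to the spatial-convergence estimate~\eqref{eq:a3} already proved for the original RTE, via the angular symmetry $\theta\mapsto-\theta$. The Henyey--Greenstein kernel \eqref{eq:3} is a function of $\theta\cdot\theta'$ alone, so $f(\theta,\theta')=f(-\theta,-\theta')$; consequently, setting $\psi(x,\theta):=\tilde{\phi}(x,-\theta)$ transforms the adjoint RTE \eqref{eq:31} into a forward RTE of the form \eqref{eq:2} with the same $\mu_a$ and $\mu_s$, a relabeled source, and vacuum inflow data (since the old outflow boundary $\Gamma_+$ becomes the new inflow boundary $\Gamma_-$ after $\theta\mapsto-\theta$).

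For the discretization I would assume the angular mesh is symmetric about the origin, i.e.\ for every direction $\theta_k$ there is $\theta_{k^*}=-\theta_k$ in the grid; this holds for the uniform angular partition of Section~\ref{sec:4.1}. Under the bijection $k\mapsto k^*$, the adjoint bilinear form $\tilde{A}_k$ is carried exactly onto the original form $A_{k^*}$ of \eqref{eq:a2}. Indeed, $-\theta_k\cdot\nabla$ becomes $\theta_{k^*}\cdot\nabla$; the boundary half-spaces swap, $\partial\Omega^k_+\leftrightarrow\partial\Omega^{k^*}_-$; the one-sided traces obey $\psi^{h^\pm}_{k^*}=\tilde{\phi}^{h^\mp}_k$, obtained by evaluating $\lim_{\epsilon\to 0^\pm}\tilde{\phi}^h(x+\epsilon(-\theta_k),\theta_k)$ with the substitution $\delta=-\epsilon$, so $\langle\tilde{\phi}^{h^-}_k-\tilde{\phi}^{h^+}_k,v^-\rangle_{\Gamma^k_h}$ is mapped to $\langle\psi^{h^+}_{k^*}-\psi^{h^-}_{k^*},w^+\rangle_{\Gamma^{k^*}_h}$ with $w(x,\theta):=v(x,-\theta)$; and the scattering weights are preserved, $\omega_{kk'}=\omega_{k^*{k'}^*}$, by symmetry of $f$. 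Hence \eqref{eq:a5} is, after this relabeling, identical to \eqref{eq:a2} applied to $\psi^h$.

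Applying \eqref{eq:a3} to $\psi^h$ yields $\norm{[\psi]-\psi^h}\leq Ch^{d+1/2}|\psi|_{d+1}$. Because the relabeling is a bijection of the angular grid preserving the quadrature weights $\omega_k$, both the discrete $L^2$-norm and the broken Sobolev seminorm are invariant, so $\norm{[\psi]-\psi^h}=\norm{[\tilde{\phi}]-\tilde{\phi}^h}$ and $|\psi|_{d+1}=|\tilde{\phi}|_{d+1}$, which is precisely the bound~\eqref{eq:a6}.

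If the symmetry shortcut is undesirable---for instance under a non-symmetric angular quadrature or a non-symmetric scattering kernel---the fallback is to transcribe the \cite{Gao2013} argument step by step with reversed signs and inflow/outflow roles. The two ingredients to re-establish are (i) a discrete coercivity estimate for $\tilde{A}_k$ in the DG norm, obtained by element-wise integration by parts on $-\theta_k\cdot\nabla$ together with the reversed-upwind jump term $\langle\tilde{\phi}^{h^-}_k-\tilde{\phi}^{h^+}_k,v^-\rangle_{\Gamma^k_h}$ and the outflow-boundary penalty $\langle\tilde{\phi}^h_k,v\rangle_{\partial\Omega^k_+}$, and (ii) Galerkin orthogonality together with standard $V^d_h$ interpolation estimates of order $h^{d+1/2}$. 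Uniformity in the scattering coupling is recovered by summing over $k$ with weights $\omega_k$ and using $\sum_{k'}\omega_{kk'}=1$ so that the stochastic scattering matrix acts non-expansively. I expect the sign bookkeeping in (i)---verifying that the reversed-upwind jump $\tilde{\phi}^{h^-}_k-\tilde{\phi}^{h^+}_k$ tested against $v^-$ yields the same sign-definite contribution that $\phi^{h^+}-\phi^{h^-}$ tested against $v^+$ gave in the original proof---to be the main obstacle, since it is the only place where the adjoint discretization visibly departs from the forward one.
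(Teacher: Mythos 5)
Your proposal is correct, but it reaches the result by a genuinely different route than the paper. The paper gives essentially no independent argument here: it states that the adjoint estimates follow ``through similar discussion,'' i.e.\ by transcribing the Gao--Zhao analysis of the forward scheme with the transport direction negated and the inflow/outflow roles exchanged --- which is exactly your fallback option (ii). Your primary argument instead \emph{reduces} the adjoint theorem to the already-proved forward theorem via the involution $\theta\mapsto-\theta$: since the Henyey--Greenstein kernel depends only on $\theta\cdot\theta'$, the relabeled weights satisfy $\omega_{k^*k'^*}=\omega_{kk'}$, the semi-discrete system \eqref{eq:a4} maps onto \eqref{eq:34}, and the bilinear form $\tilde{A}_k$ maps onto $A_{k^*}$ (your trace computation $\psi^{h^\pm}_{k^*}=\tilde{\phi}^{h^\mp}_k$ is the right bookkeeping), so \eqref{eq:a6} is literally \eqref{eq:a3} after relabeling. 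This is cleaner and avoids re-deriving coercivity and the interpolation estimates, at the price of two hypotheses the paper's route does not need: the angular grid must be closed under $\theta\mapsto-\theta$ (true for the uniform partition of Section~\ref{sec:4.1} only when $P$ is even, so you should state that restriction) and the scattering kernel must satisfy $f(-\theta,-\theta')=f(\theta,\theta')$. You also implicitly use $\omega_{k^*}=\omega_k$ to conclude that the weighted norms $\norm{\cdot}$ and $|\cdot|_{d+1}$ are invariant under the relabeling; that holds for the uniform mesh but deserves a sentence. With those two remarks added, your symmetry argument is a complete and arguably preferable proof; your fallback paragraph accurately describes what the paper actually intends, including the correct identification of the sign-definiteness of the reversed-upwind jump term as the only point where the two analyses differ.
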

Combining \eqref{eq:a5} and \eqref{eq:a6}, when $d=1$ we can obtain the estimate
\begin{equation}
  \label{eq:a7}
  \norm{\tilde{\phi}-\tilde{\phi}^h}=O(h_a^2)+O(h^{3/2}).
\end{equation}

Obviously, the corresponding update scheme of \eqref{eq:a5} is \eqref{eq:40} when $d=1$. Therefore the convergence proof of \eqref{eq:40} is done.

\section{Comments on the log-type error functional}
According to \cite{Tarvainen2012}, taking log accelerates the convergence of minimization method significantly. It is due to taking log changes the shape of contours of error functional. From Figure 2 in \cite{Tarvainen2012}, the contours of error functional become less narrow after taking log. In numerical simulations, we find log-type error functional decreasing more fast. Indeed, we can explain it by a one-dimension example. Let $f(x)$ be a continuous function on bounded region $[a,b]$, which satisfies $0< f(x)<1$. Given $h^*=f(x^*)\ (\forall x\in[a,b])$, we can apply least square method to recover $x^*$. Define error functional
\begin{equation}
  \label{eq:a8}
\begin{aligned}
  \mathcal{J}_1(x)&=\frac{1}{2}\norm{f(x)-h^*}_2^2,\\
\mathcal{J}_2(x)&=\frac{1}{2}\norm{\log(f(x))-\log(h^*)}_2^2.
\end{aligned}
\end{equation}
Then the gradients are
\begin{equation}
  \label{eq:a9}
\begin{aligned}
  \nabla \mathcal{J}_1(x)&=f'(x)(f(x)-h^*),\\
\nabla \mathcal{J}_2(x)&=\frac{f'(x)}{f(x)}(\log(f(x))-\log(h^*)).
\end{aligned}
\end{equation}
\begin{theorem}
\label{log}
  For error functionals defined in \eqref{eq:a8}, we claim that there exists $\delta>0$ such that if $x\in (x^*-\delta,x^*+\delta)$, $|\nabla \mathcal{J}_1(x)|\leq |\nabla \mathcal{J}_2(x)|$.
\end{theorem}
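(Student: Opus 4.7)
The plan is to reduce the desired gradient inequality, after dividing through by $|f'(x)|$, to a purely analytic comparison between $f(x)\,|f(x)-h^*|$ and $|\log f(x)-\log h^*|$, and then to apply the mean value theorem. First, observe that at any $x$ with $f'(x)=0$ or $f(x)=h^*$ the inequality $|\nabla\mathcal{J}_1(x)|\leq|\nabla\mathcal{J}_2(x)|$ is trivial, since both sides vanish (using the closed forms in~\eqref{eq:a9}). So I would restrict attention to $x$ with $f'(x)\neq 0$ and $f(x)\neq h^*$.

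Next I would apply the mean value theorem to $\log$ on the interval with endpoints $f(x)$ and $h^*$: there exists $\xi=\xi(x)$ strictly between $f(x)$ and $h^*$ with
\begin{equation*}
\log f(x)-\log h^* \;=\; \frac{1}{\xi}\bigl(f(x)-h^*\bigr).
\end{equation*}
Plugging this into~\eqref{eq:a9}, the ratio $|\nabla\mathcal{J}_1(x)|/|\nabla\mathcal{J}_2(x)|$ simplifies, after the common factor $|f'(x)|\,|f(x)-h^*|$ cancels, to $f(x)\,\xi(x)$. Hence the claim $|\nabla\mathcal{J}_1(x)|\leq|\nabla\mathcal{J}_2(x)|$ becomes equivalent to the scalar inequality $f(x)\,\xi(x)\leq 1$.

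Finally, I would use continuity of $f$ at $x^*$ together with $f(x^*)=h^*\in(0,1)$ to produce $\delta>0$ so that $f(x)\in(0,1)$ for every $x\in(x^*-\delta,x^*+\delta)$ (the standing hypothesis $0<f<1$ on $[a,b]$ already gives this globally, but the local version is all the statement asks for). Since $\xi(x)$ lies strictly between $f(x)$ and $h^*$, both of which are in $(0,1)$, we immediately get $\xi(x)\in(0,1)$, whence $f(x)\,\xi(x)<1$. Combined with the ratio identity from the previous paragraph, this yields the theorem.

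The only mildly delicate point is ensuring that the MVT intermediate point $\xi(x)$ inherits the bound $\xi(x)<1$, but this follows at once from the fact that $\xi$ lies between two numbers in $(0,1)$; the rest is a single application of the mean value theorem, an algebraic cancellation, and a standard continuity argument to produce $\delta$. Since the cancellation requires $f'(x)\neq 0$ and $f(x)\neq h^*$, I would be careful to treat those degenerate points separately as noted at the outset, rather than dividing through blindly.
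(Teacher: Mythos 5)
Your proof is correct, and it takes a genuinely different and in fact stronger route than the paper's. The paper argues locally: it expands $\log\bigl(1+\tfrac{f(x)-h^*}{h^*}\bigr)$ to first order near $x^*$, uses an $\epsilon$--$\delta$ argument with the specific choice $\epsilon\leq 1-f(x^*)$ to force the lower bound $\tfrac{1-\epsilon}{f(x^*)}\geq 1$, and thereby concludes $f(x)\,|f(x)-h^*|<|\log f(x)-\log h^*|$ only on a small interval around $x^*$. You instead apply the mean value theorem to obtain the exact identity $\log f(x)-\log h^*=\tfrac{1}{\xi}\bigl(f(x)-h^*\bigr)$ with $\xi$ strictly between $f(x)$ and $h^*$, so that the ratio $|\nabla\mathcal{J}_1|/|\nabla\mathcal{J}_2|$ equals $f(x)\,\xi<1$ wherever both gradients are nonzero; together with your (correct) observation that both gradients vanish when $f'(x)=0$ or $f(x)=h^*$, this proves the inequality on all of $[a,b]$, not merely on a neighborhood of $x^*$, with no $\delta$ needed. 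What the paper's approach buys is essentially nothing extra here beyond matching the local phrasing of the statement; what yours buys is a shorter argument, a global conclusion, and the explicit quantitative ratio $f(x)\,\xi$, which also makes transparent exactly where the hypothesis $0<f<1$ enters. The one hypothesis to keep in view is that MVT for $\log$ requires the interval between $f(x)$ and $h^*$ to stay in $(0,\infty)$, which the standing assumption $0<f<1$ guarantees, so your argument is complete.
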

\begin{proof}
Assume there exists some $\delta_0$ such that if $x\in (x^*-\delta_0,x^*+\delta_0)\backslash\{x^*\}$, $f(x)\neq h^*$. Otherwise, $|\nabla \mathcal{J}_1|=|\nabla \mathcal{J}_2|$ in $(x^*-\delta_0,x^*+\delta_0)$.

  Since $f(x)$ is continuous, $f(x)\rightarrow f(x^*)\ (x\rightarrow x^*)$, then
  \begin{equation*}
    \lim_{x\rightarrow x^*}\log\left(1+\frac{f(x)-f(x^*)}{f(x^*)}\right)^{\frac{f(x^*)}{|f(x)-f(x^*)|}}=1.
  \end{equation*}
Therefore, for arbitrary $\epsilon>0$, there exists some $\delta_1>0$ such that if $x\in (x^*-\delta_1,x^*+\delta_1)\backslash\{x^*\}$,
\begin{equation}
\label{eq:a10}
  1-\epsilon<\left|\log\left(1+\frac{f(x)-f(x^*)}{f(x^*)}\right)^{\frac{f(x^*)}{|f(x)-f(x^*)|}}\right|<1+\epsilon.
\end{equation}
Dividing \eqref{eq:a10} by $f(x^*)$, we can obtain
\begin{equation*}
  \frac{1-\epsilon}{f(x^*)}<\left|\log\left(1+\frac{f(x)-f(x^*)}{f(x^*)}\right)^{\frac{1}{|f(x)-f(x^*)|}}\right|<\frac{1+\epsilon}{f(x^*)}.
\end{equation*}
Let $0<\epsilon\leq 1-f(x^*)$, we can get
\begin{equation*}
  f(x)<1<\frac{1}{|f(x)-f(x^*)|}\left |\log\left(1+\frac{f(x)-f(x^*)}{f(x^*)}\right)\right|.
\end{equation*}
Clearly, it follows that
\begin{equation}
  \label{eq:a11}
  \left|f'(x)(f(x)-f(x^*))\right|\leq \left|\frac{f'(x)}{f(x)}(\log(f(x))-\log(f(x^*)))\right|.
\end{equation}
Let $\delta=\min\{\delta_0,\delta_1\}$, then $|\nabla \mathcal{J}_1|\leq |\nabla \mathcal{J}_2|$ in $(x^*-\delta,x^*+\delta)$.
\end{proof}

\begin{theorem}
  \label{log2}
For error functionals defined in \eqref{eq:a8}, we claim that there exists $\delta>0$ such that functional
\begin{equation*}
  \mathcal{D}(x):=\mathcal{J}_2-\mathcal{J}_1
\end{equation*}
is monotonically increasing with respect to $\alpha(x):=|x-x^*|$ in domain $[x^*-\delta,x^*+\delta]$.
\end{theorem}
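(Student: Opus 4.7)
The plan is to pass from the pointwise gradient comparison of Theorem \ref{log} to a second-order comparison at $x^*$, and then deduce the monotonicity of $\mathcal{D}$ in $\alpha=|x-x^*|$ from the sign of $\mathcal{D}''$ near $x^*$. Before any computation, two trivial observations already pin down the behaviour of $\mathcal{D}$ at $x^*$: since $f(x^*)=h^*$ annihilates both data-fidelity terms, $\mathcal{D}(x^*)=0$; and the formulas \eqref{eq:a9} show $\mathcal{J}_1'(x^*)=\mathcal{J}_2'(x^*)=0$, hence $\mathcal{D}'(x^*)=0$.

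The main step is then to compute $\mathcal{D}''(x^*)$. I would differentiate \eqref{eq:a9} once more and exploit the fact that the correction terms carrying $f''(x^*)$ and $(f'/f)'(x^*)$ are multiplied by the vanishing factors $f(x^*)-h^*$ and $\log f(x^*)-\log h^*$ and so drop out. What remains is
\begin{equation*}
\mathcal{J}_1''(x^*)=(f'(x^*))^2,\qquad \mathcal{J}_2''(x^*)=\frac{(f'(x^*))^2}{(h^*)^2},
\end{equation*}
so that $\mathcal{D}''(x^*)=(f'(x^*))^2\bigl(1/(h^*)^2-1\bigr)$. Since $0<h^*<1$ by hypothesis and $x^*$ is tacitly assumed to be an isolated zero of $f-h^*$ (exactly as in the proof of Theorem \ref{log}), so that $f'(x^*)\neq 0$, this quantity is strictly positive.

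From here the conclusion is routine. By continuity of $f$ and $f'$, $\mathcal{D}''$ is continuous in a neighbourhood of $x^*$, hence there exists $\delta>0$ with $\mathcal{D}''(x)>0$ throughout $(x^*-\delta,x^*+\delta)$. Integrating this inequality from $x^*$, together with $\mathcal{D}'(x^*)=0$, yields $\mathcal{D}'(x)>0$ for $x\in(x^*,x^*+\delta)$ and $\mathcal{D}'(x)<0$ for $x\in(x^*-\delta,x^*)$; equivalently, $(x-x^*)\mathcal{D}'(x)>0$ for all $x\neq x^*$ in that interval. Parameterising each side of $x^*$ by $\alpha=|x-x^*|$, the chain rule gives $\tfrac{d}{d\alpha}\mathcal{D}=\operatorname{sign}(x-x^*)\,\mathcal{D}'(x)>0$, which is precisely the asserted monotonicity.

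The main obstacle is one of hypothesis rather than technique: if $f'(x^*)=0$ then $\mathcal{D}''(x^*)=0$ and one is forced into a higher-order expansion, which in general need not produce a strict local minimum for $\mathcal{D}$ at $x^*$, still less the desired monotonicity. The assumption $f'(x^*)\neq 0$ is however the natural nondegeneracy already implicit in the preceding theorem. A second minor caveat worth flagging is that the statement is intrinsically local; $\mathcal{D}''$ need not remain positive on all of $[a,b]$, so monotonicity holds only on some $(x^*-\delta,x^*+\delta)$, not globally.
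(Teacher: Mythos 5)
Your argument is correct as far as it goes, but it is a genuinely different, and strictly less general, route than the paper's. You localize at $x^*$ via a second-derivative test: $\mathcal{D}(x^*)=\mathcal{D}'(x^*)=0$, $\mathcal{D}''(x^*)=(f'(x^*))^2\bigl((h^*)^{-2}-1\bigr)>0$, then integrate. The computation is right, but it needs hypotheses the theorem does not grant: the nondegeneracy $f'(x^*)\neq 0$ is \emph{not} implicit in Theorem \ref{log}, which only assumes $x^*$ is an isolated solution of $f=h^*$ (compatible with, e.g., $f(x)=h^*+(x-x^*)^3$, where your test is inconclusive); and as you phrase the last step you need $f''$ to exist and be continuous near $x^*$, since $\mathcal{J}_1''$ carries the term $f''\cdot(f-h^*)$, which does not vanish off $x^*$ --- continuity of $f$ and $f'$ alone does not make $\mathcal{D}''$ continuous. (You could get by with $f\in C^1$ by reading off the sign of $\mathcal{D}'(x)=\mathcal{D}''(x^*)(x-x^*)+o(x-x^*)$ directly from the definition of $\mathcal{D}''(x^*)$, but $f'(x^*)\neq 0$ is unavoidable for your approach.) The paper instead proves the claim by an exact factorization: with $\Delta f:=f(x)-h^*$ it writes $\mathcal{D}=\tfrac12\bigl(\log(1+\Delta f/h^*)+\Delta f\bigr)\bigl(\log(1+\Delta f/h^*)-\Delta f\bigr)$, notes that both factors vanish at $\Delta f=0$ and are increasing in $\Delta f$ (the second precisely because $1/f-1>0$, which is where $0<f<1$ enters), so the product is nonnegative and increasing in $|\Delta f|$, and then invokes monotonicity of $|\Delta f|$ in $\alpha$ near $x^*$. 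That argument uses no derivatives of $f$ and no nondegeneracy, and actually yields the stronger fact that $\mathcal{D}$ is increasing as a function of $|f(x)-h^*|$ wherever $0<f<1$. What your route buys is a shorter, standard calculus proof together with the explicit local rate $\mathcal{D}\sim\tfrac12(f'(x^*))^2\bigl((h^*)^{-2}-1\bigr)(x-x^*)^2$; if you keep it, state $f'(x^*)\neq 0$ and the required smoothness as explicit hypotheses.
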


\begin{proof}
Assume there exists $\delta_0$ such that $f(x)\neq h^*$ in $[x^*-\delta_0,x^*+\delta_0]$. Since $f$ is continuous, there exists some $\delta_1<\delta_0$ such that $|f(x)-h^*|$ is increasing with respect to $\alpha(x)$, that is, in interval $[x^*-\delta_1,x^*+\delta_1]$, the closer between $x$ and $x^*$ then the smaller $|f(x)-h^*|$.
  
For arbitrary $x=x^*+\Delta x\in [x^*-\delta_1,x^*+\delta_1]$, let $f(x)=h^*+\Delta f$. Then we have 
\begin{equation*}
\begin{aligned}
  \mathcal{D}(x)&=\frac{1}{2}\norm{\log f(x)-\log h^*}_2^2-\frac{1}{2}\norm{f(x)-h^*}_2^2\\
&=\frac{1}{2}(\log f+f-\log h^* -h^*)(\log f - f -\log h^*+h^*)\\
&=\frac{1}{2}(\log(1+\frac{\Delta f}{h^*})+\Delta f)(\log(1+\frac{\Delta f}{h^*})-\Delta f).
\end{aligned}
\end{equation*}
Clearly, $\log(1+\frac{\Delta f}{h^*})+\Delta f$ and $\log(1+\frac{\Delta f}{h^*})-\Delta f$ are monotonically increasing with respect to $\Delta f$. Combing $\log(1+\frac{\Delta f}{h^*})+\Delta f=\log(1+\frac{\Delta f}{h^*})-\Delta f=0$ when $\Delta f=0$, we obtain $\mathcal{D}(x)$ is non-negative and monotonically increasing with respect to $|\Delta f|$. Therefore, $\mathcal{D}$ is monotonically increasing with respect to $\alpha(x)$. 

Let $\delta=\min\{\delta_0,\delta_1\}$, then this completes the proof.
\end{proof}

From Theorem \ref{log} and Theorem \ref{log2}, we see if $\sup f(x)<1$, taking log not only makes the error functional steeper near the minimizer for some fixed direction, but also the farther between $x$ and $x^*$, the change from $\mathcal{J}_1(x)$ to $\mathcal{J}_2(x)$ is more significant. Then, for multivariate function, taking log makes the contours not so narrow. Thence, log-type error functional improves the convergence of minimization method. As for the general case of $\sup f(x)>1$, we can replace $f(x)$ by $f(x)/c$ for some constant $c>\sup f(x)$.

\section*{References}
 % \bibliography{qpat}

\end{document}